\let\oldnl\nl% Store \nl in \oldnl
\newcommand{\nonl}{\renewcommand{\nl}{\let\nl\oldnl}}% Remove line number for one line
\newtheorem*{rep@theorem}{\rep@title}
\newcommand{\newreptheorem}[2]{%
\newenvironment{rep#1}[1]{%
 \def\rep@title{#2 \ref{##1}}%
 \begin{rep@theorem}}%
 {\end{rep@theorem}}}
\let\OLDthebibliography\thebibliography
\renewcommand\thebibliography[1]{
  \OLDthebibliography{#1}
  \setlength{\itemsep}{0pt}
}
\newcommand{\lft}{{\tt left}}
\newcommand{\rgt}{{\tt right}}
\newtheorem{theorem}{Theorem}
\newtheorem{corollary}[theorem]{Corollary}
\newtheorem{lemma}[theorem]{Lemma}
\newtheorem{proposition}[theorem]{Proposition}
\newtheorem{observation}[theorem]{Observation}
\DeclareMathOperator*{\argmax}{arg\,max}
\DeclareMathOperator*{\polylog}{polylog}
\DeclareMathOperator*{\poly}{poly}
\begin{document}
\renewcommand{\d}[1]{\ensuremath{\operatorname{d}\!{#1}}}
\def\myparagraph#1{\vspace{2pt}\noindent{ #1~~}}
\newcommand{\eqdef}{{\stackrel{\mbox{\tiny \tt ~def~}}{=}}}

%\pagestyle{headings}

%%%%%%%%%%%%%%%%%%%%%%%%%%%%%%%%%%%%%%%%%%%%%%%%%%%%%%%%%%%
%%%%%%%%%%%%%%%%%%%%%%%%%%%%%%%%%%%%%%%%%%%%%%%%%%%%%%%%%%%
%%%%%%%%%%%%%%%%%%%%%%%%%%%%%%%%%%%%%%%%%%%%%%%%%%%%%%%%%%%

\long\def\ignore#1{}
%\epsfverbosetrue
%\def\myps[#1]#2{}
\def\myps[#1]#2{\includegraphics[#1]{#2}}
\def\etal{{\em et al.}}
\def\Bar#1{{\bar #1}}
\def\br(#1,#2){{\langle #1,#2 \rangle}}
\def\setZ[#1,#2]{{[ #1 .. #2 ]}}
\def\Pr#1{{{\mathbb P}\!\bm{\left[} #1 \bm{\right]}}}

\def\maxsuperscript[#1]{\raisebox{#1}{\scriptsize $({\max})$}}

\def\bmin{{\beta_{\min}}}
\def\bmax{{\beta_{\max}}}
\def\betamin{\beta_{\min}}
\def\betamax{\beta_{\max}}

\def\bP{{\bf P}}

\def\E{{\mathbb E}}
\def\P{{\mathbb P}}
\def\X{{\mathbb X}}
\def\Y{{\mathbb Y}}
\def\Z{{\mathbb Z}}
\def\W{{\mathbb W}}
\def\U{{\mathbb U}}
\def\S{{\mathbb S}}
\def\R{{\mathbb R}}

\def\Vrel(#1){{{\mathbb S}[{#1}]}}

\def\Pcoef#1{{P^{#1}_{\tt count}}}
\def\Ptcoef#1{{P^{#1}_{\tt count-weak}}}
\def\Pratio#1{{P^{{#1}}_{\tt ratio-point}}}
\def\PratioAll#1{{P^{{#1}}_{\tt ratio-all}}}

\def\setof#1{{\left\{#1\right\}}}
\def\suchthat#1#2{\setof{\,#1\mid#2\,}} % so says Knuth, page 174
\def\event#1{\setof{#1}}
\def\q={\quad=\quad}
\def\qq={\qquad=\qquad}
\def\calA{{\cal A}}
\def\calB{{\cal B}}
\def\calC{{\cal C}}
\def\calD{{\cal D}}
\def\calE{{\cal E}}
\def\calF{{\cal F}}
\def\calG{{\cal G}}
\def\calI{{\cal I}}
\def\calJ{{\cal J}}
\def\calH{{\cal H}}
\def\calHfrac{{{\cal L}}}
\def\calL{{\cal L}}
\def\calM{{\cal M}}
\def\calN{{\cal N}}
\def\calK{{\cal K}}
\def\calP{{\cal P}}
\def\calQ{{\cal Q}}
\def\calR{{\cal R}}
\def\calS{{\cal S}}
\def\A{\mathfrak A}
\def\calT{{\cal T}}
\def\calU{{\cal U}}
\def\calV{{\cal V}}
\def\calO{{\cal O}}
\def\calX{{\cal X}}
\def\calY{{\cal Y}}
\def\calZ{{\cal Z}}
\def\psfile[#1]#2{}
\def\psfilehere[#1]#2{}

\newcommand{\eps}{\varepsilon}
\newcommand\myqed{{}}

%%%%%%%%%%%%%%%%%%%%%%%%%%%%%%%%%%%%%%%%%%%%%%%%%%%%%%%%%%%
%%%%%%%%%%%%%%%%%%%%%%%%%%%%%%%%%%%%%%%%%%%%%%%%%%%%%%%%%%%
%%%%%%%%%%%%%%%%%%%%%%%%%%%%%%%%%%%%%%%%%%%%%%%%%%%%%%%%%%%

\title{\Large\bf  \vspace{-30pt} Parameter estimation for Gibbs distributions}

\author{David G. Harris \\ \normalsize University of Maryland, Department of Computer Science\\ {\normalsize\tt davidgharris29@gmail.com}
 \and Vladimir Kolmogorov \\ \normalsize Institute of Science and Technology Austria \\ {\normalsize\tt vnk@ist.ac.at}}
\date{}
\maketitle

%\vspace{-6pt}
\begin{abstract}
A central problem in computational statistics is to convert a procedure for \emph{sampling} combinatorial  objects into a procedure for \emph{counting} those objects, and vice versa. We consider sampling problems coming from \emph{Gibbs distributions}, which are families of probability distributions over a discrete space $\Omega$ with probability mass function of the form $\mu^\Omega_\beta(\omega) \propto e^{\beta H(\omega)}$ for $\beta$ in an interval $[\betamin, \betamax]$ and $H( \omega ) \in \{0 \} \cup [1, n]$.   Two important parameters are the \emph{partition function}, which is the normalization factor $Z(\beta)=\sum_{\omega \in\Omega}e^{\beta H(\omega)}$, and the vector of preimage \emph{counts} $c_x = |H^{-1}(x)|$.

 We develop black-box sampling algorithms to estimate the counts using roughly $\tilde O( \frac{n^2}{\eps^2} )$ samples for integer-valued distributions and $\tilde O( \frac{q}{\eps^2})$ samples for general distributions, where $q = \log \frac{Z(\betamax)}{Z(\betamin)}$ (ignoring some second-order terms and parameters). We show this  is optimal up to logarithmic factors. We illustrate with improved algorithms for counting connected subgraphs, independent sets, and perfect matchings.

As a key subroutine, we estimate all values of the partition function using $\tilde O(\frac{n^2}{\varepsilon^2})$ samples for integer-valued distributions and $\tilde O(\frac{q}{\varepsilon^2})$ samples for general distributions. This improves over a prior algorithm of Huber (2015) which computes a single point estimate $Z(\betamax)$ and which uses a slightly larger amount of samples. We show matching lower bounds, demonstrating this complexity is optimal as a function of $n$ and $q$ up to logarithmic terms. 
\end{abstract}

This is an extended version, which includes work under the same name from ICALP 2023, as well as the earlier work \cite{Kolmogorov:COLT18} appearing in COLT 2018.

\section{Introduction}
A central problem in computational statistics is to convert a procedure for \emph{sampling} combinatorial objects into a procedure for \emph{counting} those objects, and vice versa. In this work, we consider sampling algorithms  for Gibbs distributions.  Formally, given a real-valued function $H(\cdot)$ over a finite set $\Omega$,
the {\em Gibbs distribution} is the family of distributions $\mu^{\Omega}_{\beta}$ over $\Omega$, 
parameterized by $\beta$ over an interval $[\betamin, \betamax]$, of the form 
$$
\mu^{\Omega}_\beta(\omega)=\frac{e^{\beta H(\omega)}}{Z(\beta)} 
$$

The normalizing constant $Z(\beta)=\sum_{\omega \in\Omega}e^{\beta H(\omega)}$ is called the {\em partition function}. These distributions appear in a number of sampling algorithms and are also common in physics, where the parameter $-\beta$ corresponds to the inverse temperature, and $H(\omega)$ is called the {\em Hamiltonian} of the system.

There are two parameters of these distributions which are particularly important. The first is the vector of counts $c_x = |H^{-1}(x)|$ for $x \in \mathbb R$. This is also known as the {\em (discrete) density of states}, that is, the number of preimages of each point under $H$. In statistical physics, for instance, it essentially gives full information about the system and physically relevant quantities
such as entropy, free energy, etc.  A second parameter, whose role is less intuitive, is the partition ratio function 
$$
Q({\beta}) = \frac{Z(\beta)}{Z(\beta_{\min})} \qquad \qquad \text{for $\beta \in [\betamin, \betamax]$}
$$

We also consider an associated probability distribution we call the \emph{gross Gibbs distribution} $\mu_{\beta}(x)$ over $\mathbb R$ given by
$$
\mu_{\beta}(x) = \frac{c_x e^{\beta x}}{Z(\beta)}, \qquad \qquad Z(\beta) = \sum_x c_x e^{\beta x}
$$

Our main aim is to develop \emph{black-box} algorithms to estimate the parameters using oracle access to the gross distribution $\mu_{\beta}$ for chosen query values $\beta \in [\betamin, \betamax]$. No information about the distribution $\mu_{\beta}^{\Omega}$ is used. This can simplify and unify many of the existing problem-specific estimation algorithms for Gibbs distributions across many settings.

\bigskip

We assume  (after rescaling if necessary)  that we are given known parameters $n, q$ with
$$
\log Q(\betamax) \leq q, \qquad  \qquad H(\Omega) \subseteq \mathcal F \eqdef \{0 \} \cup [1,n]
$$
In some cases, the domain is integer-valued, i.e. $$
H(\Omega) \subseteq \mathcal H \eqdef  \{0,1,\ldots,n\} \qquad \qquad \text{for integer $n$}
$$
 We call this the \emph{general integer setting}. A special case of the integer setting, which we call the \emph{log-concave setting}, is when the counts $c_0, c_1, c_2, \dots, c_{n-1}, c_n$ are non-zero and satisfy $c_k/c_{k-1} \geq c_{k+1} / c_k$ for $k = 1, \dots, n - 1$.   The general case, where $H(\omega)$ takes unrestricted values in $\mathcal F$, is called the \emph{continuous setting}.\footnote{The log-concave algorithms still work if some of the counts $c_i$ are equal to zero; in this case, the non-zero counts must form a discrete interval $\{i_0, i_0 + 1, \dots, i_1 - 1, i_1 \}$ and the required bound must hold for $k = i_0+1, \dots, i_1 - 1$.}

When $c_0 > 0$, we allow the value $\beta = -\infty$; here $Z(-\infty) = c_0$ and $\mu_{-\infty}$ is simply the distribution which is equal to $0$ with probability one.  In particular, for $\betamin = -\infty$ we have $Q(\beta) = Z(\beta) / c_0$.

We let $\gamma$ denote the target failure probability and $\eps$ the target accuracy of our algorithms, i.e. with probability at least $1 - \gamma$, the algorithms should return estimates within a factor of $[e^{-\eps}, e^{\eps}]$ of the correct value. Throughout, ``sample complexity'' refers to the number of calls to the oracle; for brevity, we also define the \emph{cost} of an algorithm to be its  \emph{expected sample complexity}.  

To avoid degenerate cases, we always assume $n, q \geq 2$ and $\eps, \gamma \in (0, \tfrac{1}{2})$ unless stated otherwise. If upper bounds $n$ and/or $q$ are not available directly, they can often be estimated by simple algorithms (up to constant factors), or can be guessed by exponential back-off strategies; for simplicity we do not discuss such issues here.

\subsection{Our contributions} 
Our main contribution is a black-box algorithm to estimate counts. To explain this, we must address two technical issues. First, note that in the gross Gibbs distributions, the counts can only be recovered up to scaling. So some (arbitrary) normalization must be chosen; we use a convenient normalization $\pi(x)$ defined as:
$$
\pi(x) \eqdef \frac{c_x}{Z(\betamin)}.
$$

The second issue is that if a count $c_x$ is relatively small, it is inherently hard to estimate  accurately. For example, if $\mu_{\beta}(x) < \delta$ for all $\beta \in [\betamin, \betamax]$ then  $\Omega( 1/\delta)$ samples are needed to distinguish between $c_x = 0$ and $c_x > 0$; with fewer samples, we will never draw $x$ from the oracle. So the complexity of estimating $c_x$ depends on the \emph{largest} $\mu_{\beta}(x)$, over all allowed values of $\beta$.  This gives rise to the parameter $\Delta(x)$ defined as
$$
\Delta(x) \eqdef \max_{\beta \in [\betamin, \betamax]} \mu_{\beta}(x)
$$

With these two provisos, we can define $\Pcoef{\delta,\eps}$ for parameters $\delta, \eps \in (0,\tfrac{1}{2})$ to be the problem of finding a vector $\hat \pi \in [0,1]^{\mathcal F}$ with the following two properties:\begin{itemize}
\item for all $x \in \mathcal F$ with $c_x \neq 0$, there holds $|\hat \pi(x) - \pi(x)| \leq \eps  \pi(x) (1 + \frac{\delta}{\Delta(x)})$.
\item for all $x \in \mathcal F$ with $c_x = 0$, there holds $\hat \pi(x) = 0$.
\end{itemize}

In particular, if $\Delta(x) \geq \delta$, then $\Pcoef{\delta, \eps}$ provides a $1 \pm 3 \eps$ relative approximation to $\pi(x)$.  We emphasize here that when the algorithm succeeds (which should have probability at least $1 - \gamma$), this guarantee holds \emph{simultaneously} for all values $x$.  We develop three main algorithmic results: 
\begin{theorem}\label{th:main:three1}
$\Pcoef{\delta, \eps}$ can be solved with the following complexities:
\begin{itemize}
\item In the continuous setting, with cost $O \bigl(  \frac{ \min \{q^2 + q/\delta,  q \log n + \sqrt{q \log n}/\delta \} }{\eps^2}   \log \tfrac{q}{ \delta \gamma}  \bigr)$.
\item In the general integer setting, with cost $O \bigl( \frac{ n^2  + n/\delta}{\eps^2}    \log^2 \tfrac{n q}{\gamma} \bigr)$.
\item In the log-concave setting,  with  cost  $O \bigl(  \frac{  \min \{  (q+n) \log n, n^2 \}  + 1/\delta }{\eps^2} \log \tfrac{n q}{\gamma}  \bigr)$.
\end{itemize}
where recall that cost refers to the expected number of queries to the oracle.
\end{theorem}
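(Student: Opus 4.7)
The plan is to reduce $\Pcoef{\delta,\eps}$ to $\PratioAll$ plus a round of local sampling that estimates the frequencies $\mu_\beta(x)$ at well-chosen values of $\beta$. The underlying identity
$$
\pi(x) \;=\; \mu_\beta(x)\cdot Q(\beta)\cdot e^{-(\beta-\betamin)x}
$$
shows that once $Q$ is known to multiplicative error $O(\eps)$ via Theorem~\ref{th:main:one} and some $\mu_\beta(x)$ is estimated to accuracy $O(\eps)$, $\pi(x)$ is recovered to the same accuracy. In each of the three cost bounds, the leading term (the one that also appears in Theorem~\ref{th:main:one}) will come from the invocation of Theorem~\ref{th:main:one}, while the residual $\sqrt{q\log n}/\delta$, $n/\delta$, or $1/\delta$ term will come from additional sampling needed to control $\mu_\beta(x)$ on values $x$ with small $\Delta(x)$.

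Concretely, I would first invoke Theorem~\ref{th:main:one} with accuracy $O(\eps)$ and failure probability $\gamma/2$ to obtain a universal ratio estimator $\calD$. I would then build a \emph{covering schedule} $\betamin = \beta_0 < \beta_1 < \ldots < \beta_K = \betamax$ with the property that every $x \in \calF$ satisfies $\mu_{\beta_{i(x)}}(x) = \Omega(\Delta(x))$ at some schedule point $\beta_{i(x)}$; the schedule is chosen deterministically from $\hat Q$ and has size $K = O(\sqrt{q\log n})$ in the continuous setting, $K = O(n)$ in the general integer setting, and $K = O(1)$ on top of the schedule already produced inside Theorem~\ref{th:main:one} in the log-concave setting. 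At each schedule point $\beta_i$, I would draw $\Theta((1/(\delta\eps^2))\log(nq/\gamma))$ samples and record the empirical frequencies $\hat\mu_{\beta_i}(x)$. For each observed $x$, I let $i(x)$ maximize $\hat\mu_{\beta_i}(x)\hat Q(\beta_i)e^{-\beta_i x}$ and set $\hat\pi(x) = \hat\mu_{\beta_{i(x)}}(x)\hat Q(\beta_{i(x)})e^{-(\beta_{i(x)}-\betamin)x}$, with $\hat\pi(x) = 0$ for unobserved $x$. A Bernstein bound on $\hat\mu_{\beta_{i(x)}}(x)$, the multiplicative error on $\hat Q(\beta_{i(x)})$, and a union bound over $x$ then yield the required $u(x) \le \eps\pi(x)(1+\delta/\Delta(x))$: when $\Delta(x) \ge \delta$ the empirical estimate has relative error $O(\eps)$, and when $\Delta(x) < \delta$ the slack $1+\delta/\Delta(x) \gtrsim \delta/\Delta(x)$ absorbs the coarser Bernstein bound.

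The principal obstacle is constructing the covering schedule and proving its covering property in each setting. In the continuous setting, the balanced partition of $[\betamin,\betamax]$ underlying Theorem~\ref{th:main:one} already yields points with bounded adjacent $Q$-ratios, and equation~\eqref{prop:basic} turns a bounded $Q$-ratio into the required $\mu_{\beta_{i(x)}}(x) = \Omega(\Delta(x))$ lower bound on the closest schedule point. In the integer settings, one places a schedule point at the (approximate) $\beta$-value that makes each $x \in \calH$ the mode of $\mu_\beta(\cdot)$, giving $K \le n$; log-concavity further restricts the sequence of reachable modes to one that essentially coincides with the schedule of Theorem~\ref{th:main:one}, which is why in that case only an additive $O(1/\delta)$ rather than $O(n/\delta)$ penalty is needed. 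A secondary subtlety is that $\Delta(x)$ is unknown in advance, so the picking rule for $i(x)$ uses $\hat\mu$ and $\hat Q$ rather than true quantities; a standard concentration argument shows that, on the good event, the chosen $i(x)$ still witnesses $\mu_{\beta_{i(x)}}(x) = \Omega(\Delta(x))$.
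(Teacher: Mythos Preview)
Your overall reduction—solve $\PratioAll$ first, then estimate $\mu_\beta(x)$ at well-chosen $\beta$'s—is in the spirit of the paper's continuous-setting argument, but several key steps are unjustified, and in the integer settings the approach is structurally wrong.

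\textbf{Integer settings and circularity.} In the paper, the integer-setting bound of Theorem~\ref{th:main:one} is \emph{derived from} Theorem~\ref{th:main:three1}: Section~\ref{sec:pratio2:4} reduces $\PratioAll$ to $\Pcoef{1/n,0.1\eps}$. So invoking Theorem~\ref{th:main:one}'s integer bound as a black box inside your proof of Theorem~\ref{th:main:three1} is circular. If you instead fall back to the continuous $\PratioAll$ algorithm, you incur a $q\log n$ term rather than $n^2$, which fails when $q\gg n^2$. The paper avoids this by building the covering schedule from scratch (Section~\ref{sec:alg:schedule}) and estimating $Q(\beta_i)$ via telescoping products at the schedule points (Theorem~\ref{pratio-discrete-thm}); this machinery, not $\PratioAll$, is the source of the $n^2$ term.

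\textbf{Continuous schedule size and covering property.} You assert that the schedule underlying Theorem~\ref{th:main:one} has $O(\sqrt{q\log n})$ points and that bounded adjacent $Q$-ratios plus Eq.~\eqref{prop:basic} yield $\mu_{\beta_{i(x)}}(x)=\Omega(\Delta(x))$. Neither holds. The cooling schedule inside {\tt PPE} has $\Theta(q\log n/\eps^2)$ points in expectation, not $O(\sqrt{q\log n})$; and a bounded $z(\beta_i,\beta_{i+1})$ does not by itself control $\Delta(x)/\mu_{\beta_i}(x)$, since $\log(\Delta(x)/\mu_{\beta_i}(x))=(\beta^*-\beta_i)x-(z(\beta^*)-z(\beta_i))$ depends on the curvature of $z$, which Eq.~\eqref{prop:basic} says nothing about. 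The paper constructs a \emph{separate} sequence of cut-points $x_0>x_1>\cdots$ and values $\alpha_t$ via iterated {\tt BinarySearch} (Algorithm~\ref{pcoef-alg1}); the covering property comes from Proposition~\ref{binarysearch-delta-thm} (balanced lower and upper tails at $\alpha$ imply $\mu_\alpha(x)\ge\tau\Delta(x)$), and the $O(\sqrt{q\log n})$ bound on the number of cut-points is the nontrivial Proposition~\ref{ggprop}, whose proof uses a telescoping-plus-Jensen argument that you have not reproduced.

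\textbf{Log-concave $1/\delta$ versus $n/\delta$.} Drawing $\Theta(1/(\delta\eps^2))$ samples at each of $O(n)$ schedule points gives $n/\delta$, not $1/\delta$; claiming ``$K=O(1)$'' does not fix this. The paper's mechanism is different: log-concavity gives $\mu_{\beta_i}(k)\ge\min\{\mu_{\beta_i}(\sigma_i^-),\mu_{\beta_i}(\sigma_i^+)\}\ge w_i$ for \emph{every} $k$ in the interval $[\sigma_i^-,\sigma_i^+]$, so $O(1/(w_i\eps^2))$ samples at $\beta_i$ suffice for that entire interval. Since $\sum_i 1/w_i=O(n)$ in the log-concave case, the total is $O(n/\eps^2)$, and the $1/\delta$ sampling is needed only at the two endpoints $\betamin,\betamax$ (Algorithm~\ref{alg:Problem2:concave}, line~4). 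Your uniform per-point budget misses this variable-sample-size trick entirely.
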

Our full results  are somewhat more precise, see Theorems~\ref{pcoefalg1thm},~\ref{th:Problem2:general1} and \ref{th:Problem2:concave1} for more details.  Note that the continuous-setting algorithm can still be applied to integer-valued problems, and is more efficient in many parameter regimes. 

We also show lower bounds for black-box algorithms to solve $\Pcoef{\delta, \eps}$ by sampling the gross Gibbs distribution. We summarize these results  as follows:
\begin{theorem}
\label{main-lb-thmx}
Let $n > n_0, q > q_0, \eps < \eps_{0}, \delta < \delta_{0}$ for absolute constants $n_0, q_0, \eps_{0}, \delta_{0}$. There are problem instances $\mu$ which satisfy the given bounds $n$ and $q$  such that:
\begin{itemize}
\item $\mu$ is continuous and $\Pcoef{\delta, \eps}$  requires cost $\Omega \bigl( \frac{ (q + \sqrt{q}/\delta) }{\eps^2} \log \frac{1}{\gamma} \bigr)$.

\item $\mu$ is integer-valued and $ \Pcoef{\delta, \eps}$  requires cost 
$\Omega \bigl(  \frac{ \min \{q + \sqrt{q}/\delta, n^2 + n/\delta \} }{\eps^2}  \log \frac{1}{\gamma} \bigr)$. 

\item   $\mu$ is log-concave and $ \Pcoef{\delta, \eps}$  requires cost $\Omega \bigl( \frac{ 1/\delta +  \min \{q, n^2 \} }{\eps^2} \log \frac{1}{\gamma} \bigr)$.
\end{itemize}
\end{theorem}

The first two results match Theorem~\ref{th:main:three1} up to logarithmic factors in $n$ and $q$. There is a small gap between the upper and lower bounds in the log-concave setting. See Theorem~\ref{main-lb-thm} for a more precise and general statement of these bounds. 

Some specialized algorithms have been developed and analyzed for count estimation in combinatorial problems, see e.g.  \cite{jerrum1989approximating,davies_et_al}. Alternatively, the \emph{Wang-Landau (WL)} algorithm \cite{WangLandau:PRL86:2001} is a popular counting heuristic in physical applications; this uses a completely different methodology from our algorithm, based on a random walk on $\mathcal F$ with a running count estimate $\hat c$.  As discussed in~\cite{Schur:JP1252:2019}, there are more than 1500 papers on the WL algorithm and variants such as the $1/t$-WL algorithm~\cite{BelardinelliPereyra:JCP127:2007}. Some variants are guaranteed to converge asymptotically~\cite{Fort:15}, but bounds on convergence rate or accuracy seem to be lacking.  For a representative example, see  \cite{ojeda}, which describes a Gibbs distribution to model protein folding, and uses the WL algorithm to determine relevant properties.

\paragraph{Estimating the partition ratio.} As a key subroutine,  we develop new algorithms to estimate partition ratios. Formally, we define $\PratioAll{\eps}$ to be the problem of computing a static data structure $\mathcal D$ with an associated \emph{deterministic} function $\hat Q( \alpha |\calD)$ satisfying the property
$$
 |\log \hat Q(\alpha|\calD) - \log Q(\alpha)| \leq \eps \qquad \text{for all $\alpha \in [\betamin, \betamax]$}
 $$
 
 We say in this case that $\mathcal D$ is \emph{$\eps$-close}.  We show the following:
\begin{theorem}
\label{th:main:one}
$\PratioAll{\eps}$ can be solved with the following complexities:
\begin{itemize}
\item  In the continuous setting,  with cost $O \bigl( \frac{ \min \{q^2, q \log n \}}{\eps^2}  \log \tfrac{1}{\gamma} \bigr)$.

\item In the general integer setting,  with  cost $O \bigl( \frac{n^2 \log n}{\eps^2} \log \tfrac{1}{\gamma} + n \log q \bigr)$.

\item In the log-concave integer setting,  with  cost $O \bigl( \frac{n^2}{\eps^2} \log \tfrac{1}{\gamma} + n \log q \bigr)$.
	\end{itemize}
\end{theorem}

We again emphasize that the continuous-setting algorithm can be used for integer distributions, and is often more efficient depending on the parameters.  The problem $\PratioAll{\eps}$ was first considered in \cite{TPA:journal}, under the name \emph{omnithermal approximation}, along with an algorithm of cost $O( \frac{q^2}{\eps^2} \log \tfrac{1}{\gamma} )$. A number of algorithms have been developed for \emph{pointwise} estimation of $Q(\betamax)$, which we denote by $\Pratio{\eps}$, with steadily improving sample complexities \cite{Bezakova08, Stefankovic:JACM09,Huber:Gibbs}; the best prior algorithm \cite{Huber:Gibbs} had cost $O( (q \log n) (\log q + \log \log n + \eps^{-2}))$. No specialized algorithms were known  for the integer setting. 

We also show lower bounds for $\Pratio{\eps}$ (and, a fortiori, for $\PratioAll{\eps}$) for black-box sampling algorithms, which match Theorem~\ref{th:main:one} up to $\polylog(n,q)$ factors:
\begin{theorem}
\label{main-lb-thmx2}
Let $n > n_0, q > q_0, \eps < \eps_{0}, \delta < \delta_{0}$ for absolute constants $n_0, q_0, \eps_{0}, \delta_{0}$.  There are problem instances $\mu$ which satisfy the given bounds $n$ and $q$  such that:
\begin{itemize}
\item $\mu$ is continuous and $\Pratio{\eps}$ requires cost  $\Omega \bigl( \frac{ q }{\eps^2} \log \frac{1}{\gamma} \bigr)$.

\item $\mu$ is log-concave and $\Pratio{\eps}$ requires cost  $\Omega \bigl( \frac{ \min\{q, n^2 \} }{\eps^2} \log \frac{1}{\gamma} \bigr).
$
\end{itemize}
\end{theorem}

\paragraph{Explaining the problem definitions.} The definitions of the two problems $\Pcoef{\delta, \eps}$ and $\PratioAll{\eps}$, and especially the error term for $\Pcoef{\delta, \eps}$, may seem obscure. For motivation, we note a few ways to combine their error bounds for estimates of the entire family of Gibbs distributions. (See Appendix~\ref{app:approx-mualpha} for proofs.)
\begin{theorem}
\label{approx-mualpha}
Given solutions for  $\Pcoef{\delta, \eps}$ and $\PratioAll{\eps}$, we can estimate $\mu_{\alpha}$ for any $\alpha \in [\betamin, \betamax]$ and $x \in \mathcal F$ via:
$$
\hat \mu_{\alpha}(x) = \frac{\hat \pi(x) e^{\alpha x}}{\hat Q(\alpha \mid \mathcal D)} \qquad \qquad \text{with  $| \hat \mu_{\alpha}(x) - \mu_{\alpha}(x) | \leq 3 \eps ( \mu_{\alpha}(x) + \delta )$}
$$
\end{theorem}

  \begin{theorem}
    \label{th:main:one-restate2}
Given a solution for $\Pcoef{1/n, \eps/3}$ in the integer setting, we can solve $\PratioAll{\eps}$ with data structure $\mathcal D = \hat \pi$ and setting
$$
\hat Q(\alpha \mid \mathcal D) = \sum_{x \in \mathcal H} \hat \pi(x) e^{\alpha x}.
$$
  \end{theorem}

Indeed, via Theorem~\ref{th:main:one-restate2}, our algorithms for $\Pcoef{\delta, \eps}$ in the general integer setting and log-concave setting immediately give us the corresponding algorithms in Theorem~\ref{th:main:one}.

\subsection{Algorithmic sampling-to-counting}
As a concrete application of our algorithms,  consider the following scenario:  we have a combinatorial system with weighted items, along with an algorithm to (approximately) sample from the corresponding Gibbs distribution. The runtime (e.g. mixing time of a Markov chain) may depend on $\beta$. As a few prominent examples:
\begin{enumerate}
\item Connected subgraphs of a given graph; the weight of a subgraph is the number of edges \cite{GuoJerrum:ICALP18, GuoHe:18, chen}.
\item Matchings of a given graph; the weight of a matching, again, is the number of edges \cite{jerrum1996markov,jain2022approximate}.
\item  Independent sets in bounded-degree graphs; the weight is the size of the independent set \cite{davies_et_al,jain2022approximate}.
\item Vertex cuts in the ferromagnetic Ising model; the weight is the number of crossing edges \cite{carlson2022computational}.
\end{enumerate}

Our algorithms for $\Pcoef{}$ can be combined with any prior sampling algorithms to yield improved and unified counting algorithms, essentially for free. As some examples, we will show the following:
\begin{theorem}
\label{th:count:subgraph}
Let $G = (V,E)$ be a connected graph, and for $i = 0, \dots, |E| - |V| + 1$ let $c_i$ be the number of connected subgraphs with $|E| - i$ edges. There is a fully-polynomial randomized approximation scheme (FPRAS) to simultaneously estimate all values $c_i$ to relative error $\eps$ in time $\tilde O( |E|^3 / \eps^2 )$. (Throughout, we use the notation $\tilde O(x) = x \polylog(x)$.)
\end{theorem}

\begin{theorem}
\label{th:count:matchings}
Let $G = (V,E)$ be a graph and let $v = |V|/2$. For each $i = 0, \dots, v$ let $c_i$ be the number of matchings in $G$ with $i$ edges.  Suppose $c_{v} > 0$ and $c_{v - 1}/c_{v} \leq f$ for a known parameter $f$. There is an FPRAS to simultaneously estimate all counts $c_i$ to relative error $\eps$ in time $\tilde O( |E| |V|^3 f / \eps^2 )$. In particular, if $G$ has minimum degree at least $|V|/2$, the time complexity is $\tilde O( |V|^7 / \eps^2 )$.
\end{theorem}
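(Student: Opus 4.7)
The plan is to reduce to Theorem~\ref{th:main:three1} in the log-concave integer setting with $n = v$. Take $\Omega$ to be the set of all matchings of $G$ and set $H(m) = |m|$, so $H$ takes values in $\{0, \ldots, v\}$ and the induced counts are $c_k = M_k$. The Heilmann--Lieb theorem implies that $(M_k)_{k=0}^v$ is log-concave, and since $M_0 = 1$ and $M_v > 0$ the support is the full interval $\{0, \ldots, v\}$, so Theorem~\ref{th:main:three1} applies in its log-concave form.

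I would choose $\betamin = -\log |E|$ and $\betamax = \log f$. By log-concavity the ratios $r_k = M_{k+1}/M_k$ are non-increasing, so the thresholds $\beta_k^\star = -\log r_k$ at which the mode of $\mu_\beta$ shifts from $k$ to $k+1$ are non-decreasing, with $\beta_0^\star = -\log|E| = \betamin$ and $\beta_{v-1}^\star \le \log f = \betamax$. Thus every $k \in \{0,\ldots,v\}$ becomes the mode of $\mu_\beta$ at some $\beta \in [\betamin,\betamax]$, and since the modal mass of any probability distribution on a $(v{+}1)$-point set is at least $1/(v{+}1)$ by pigeonhole, this gives $\Delta(k) = \Omega(1/v)$ uniformly. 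Picking $\delta = \Theta(1/v)$ thus makes the $1/\delta$ term in Theorem~\ref{th:main:three1} contribute $O(v)$. A short calculation using $M_k \leq M_v f^{v-k}$ shows $Z(\betamin) = O(1)$ and $Z(\betamax) = \Theta(M_v f^v)$, so $q = O(v\log(fv))$ and therefore $\min\{(q+v)\log v,\, v^2\} = \tilde O(v)$; the total sample budget of Theorem~\ref{th:main:three1} becomes $\tilde O(v/\eps^2) = \tilde O(|V|/\eps^2)$.

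For each $\beta \in [\betamin,\betamax]$ the measure $\mu^\Omega_\beta(m) \propto e^{\beta|m|}$ is the standard monomer--dimer distribution at edge activity $\lambda = e^\beta \in [1/|E|,\, f]$, so I would instantiate the sampling oracle via the Jerrum--Sinclair Markov chain on matchings; its canonical-paths analysis delivers an approximately correct sample in time $\tilde O(|V|^2 |E|\, f)$ uniformly across this range of $\lambda$. Multiplying by the $\tilde O(|V|/\eps^2)$ sample budget yields total running time $\tilde O(|V|^3 |E|\, f/\eps^2)$, matching the stated bound. The output $\hat\pi$ of $\Pcoef{}$ approximates $\pi(k) = M_k |E|^{-k}/Z(\betamin)$ within a factor $e^{\pm\eps}$; anchoring on $M_0 = 1$ via the identity $M_k = |E|^k \pi(k)/\pi(0)$ converts $\hat\pi$ into relative-error-$\eps$ estimates of every $M_i$. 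For the dense case, a classical counting argument of Jerrum--Sinclair shows $M_{v-1}/M_v = O(|V|^2)$ whenever $G$ has minimum degree at least $|V|/2$, so $f = O(|V|^2)$ is a valid input; combined with $|E| = O(|V|^2)$ this specializes the bound to $\tilde O(|V|^7/\eps^2)$.

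The main obstacle I anticipate is the uniform $\Delta(k) = \Omega(1/v)$ estimate, which rests on the combination of log-concavity (placing every mode-shift threshold $\beta_k^\star$ inside $[\betamin,\betamax]$) and the pigeonhole lower bound on modal mass; boundary cases in which some $r_k$ sit exactly at the endpoints require a bit of care. Secondary issues are absorbing the total-variation error of the Jerrum--Sinclair sampler into the sample guarantees of Theorem~\ref{th:main:three1}, and tracking the $\polylog$ factors in $|V|$, $f$ and $1/\eps$ so that they collapse into the claimed $\tilde O(\cdot)$ bound.
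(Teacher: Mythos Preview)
Your proposal is correct and follows essentially the same route as the paper: the paper packages the $\Delta(k)\ge 1/(v{+}1)$ mode argument and the $q$ bound into Theorem~\ref{th:main:two2} and then invokes the Jerrum--Sinclair sampler (Theorem~\ref{js:alg}) together with the approximate-oracle coupling (Theorem~\ref{th:approx}), but the underlying steps are exactly the ones you describe. One small slip: from $M_k\le M_v f^{v-k}$ alone you do not get $q=O(v\log(fv))$ since $M_v$ can be large; you also need $M_k\le |E|^k$ (from $M_1/M_0=|E|$ and log-concavity) to obtain $q=O(v\log(|E|f))$, which is what the paper's Theorem~\ref{th:main:two2} gives and which is still absorbed by the $\tilde O$.
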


\begin{theorem}
\label{th:count:ind}
Let $D, \xi > 0$ be arbitrary constants and let $G = (V,E)$ be a graph of maximum degree $D$. For $i = 0, \dots, |V|$ let $c_i$ be the number of independent sets of size $i$.  There is an FPRAS  to simultaneously estimate all values $c_0, \dots, c_t$ to relative error $\eps$ in time $\tilde O( |V|^2/\eps^2 )$ , where $t = (\alpha_c - \xi) |V|$ and $\alpha_c = \alpha_c(D)$ is the computational hardness threshold shown in \cite{davies_et_al}.
\end{theorem}

Theorem~\ref{th:count:subgraph} improves by a factor of $|E|$ over the algorithm in \cite{GuoJerrum:ICALP18}. Similarly,  Theorem~\ref{th:count:matchings} improves by a factor of $|V|$ over the algorithm in \cite{jerrum1996markov}. Theorem~\ref{th:count:ind} matches the runtime of an FPRAS for a \emph{single} count $c_i$ given in \cite{jain2022approximate}.  Full details of these constructions appear in Section~\ref{app-sec}.

There are two minor technical issues to clarify here. First,  to get a randomized estimation algorithm, we must bound the computational complexity of our procedures in addition to the number of oracle calls. In all the algorithms we develop, the computational complexity is a small logarithmic factor times the query complexity. The computational complexity of the oracle is itself typically much larger than this overhead. Thus, our sampling procedures translate directly into efficient randomized algorithms, whose runtime is the expected sample complexity multiplied by the oracle's computational complexity. We will not comment on computational issues henceforth.

Second, we may only have access to an approximate oracle $\tilde\mu_\beta$ instead of the exact oracle $\mu_{\beta}$.  For example, given access to an MCMC sampler with stationary distribution $\mu_{\beta}$ and mixing time $\tau$, we can estimate $\mu_{\beta}$ to total variation distance $\rho$ by running for $\Theta( \tau \log \tfrac{1}{\rho} )$ iterations. By a standard coupling argument, our results remain valid if exact oracles are replaced with sufficiently close approximate oracles (see e.g.\ \cite[Remark 5.9]{Stefankovic:JACM09}); see Appendix~\ref{sec:approx} for a formal statement.
\subsection{Overview}
We will develop two, quite distinct, types of algorithms: the first uses ``cooling schedules'' similar to \cite{Huber:Gibbs}, and the second is based on a new type of ``covering schedule'' for the integer setting.   In Section~\ref{lb-sec}, we show the lower bounds for the problems $\Pratio{}$ and $\Pcoef{}$.

Let us provide a high-level roadmap here.  For simplicity, we assume that tasks need to be solved with constant success probability. 
 
\paragraph{The continuous setting.} 
For the problem $\PratioAll{\eps}$, we extend a statistical method of \cite{Huber:Gibbs} called the \emph{Paired Product Estimator} (see Section~\ref{sec:pratio1}). This constructs a {\em cooling schedule} $\beta_0, \beta_1, \ldots,\beta_{t-1}, \beta_t$, where the log partition function is near-linear between successive values $\beta_i$. As shown in \cite{Huber:Gibbs}, this property gives an unbiased estimator with low variance for the ratios $Q(\beta_i)/Q(\beta_{i-1})$. In addition, we show how to use this property to fill in values $Q(\alpha)$ for $\alpha \in (\beta_i, \beta_{i-1})$ via linear interpolation.

For the problem $\Pcoef{\delta, \eps}$, we start with the following identity:
\begin{equation}
\label{pix-eqn}
\pi(y)=e^{-\beta y}\cdot\mu_\beta(y)\cdot Q(\beta) \qquad \text{for all $y \in \mathcal F, \beta \in [\betamin, \betamax]$}.
\end{equation}
We can estimate  $Q(\beta)$ using our algorithm for $\PratioAll{}$, and we can estimate $\mu_\beta(y)$ by sampling. We then make use of the following important result: if $\mu_\beta([0,y])$ and $\mu_\beta([y,n])$ are both bounded below by constants, then $\mu_\beta(y)\ge \Omega( \Delta(y) )$. For this value $\beta$, we can estimate $\mu_{\beta}(y)$ with $O(\frac{1}{\delta \eps^2})$ samples, and from standard concentration bounds, the resulting estimate $\hat \pi$ satisfies the conditions of $\Pcoef{\delta,\eps}$. For example, if $\Delta(y) \geq \delta$, then $\mu_{\beta}(y)$, and hence $\pi(y)$, is within relative error $O(\eps)$. 

To estimate the full vector of counts,  we find cut-points $x_1 \geq x_2 \geq \dots \geq x_T$, where each interval $[x_{i+1}, x_{i}]$ has a corresponding value $\beta_i$ with  $\mu_{\beta_i}([x_i, n]) \geq \Omega(1)$ and $\mu_{\beta_i}([0,x_{i+1}]) \geq \Omega(1)$. Then every $y \in [x_{i+1}, x_i]$ has $\mu_{\beta_i}(y) \geq \Omega(  \Delta(y))$. In this way, a single value $\beta_i$ provides estimates for all the counts $c_y$ in the interval. We show that only $T = O(\sqrt{q \log n})$ distinct intervals are needed, leading to a cost of $O( \frac{\sqrt{q \log n}}{\delta \eps^2} )$ plus the cost of solving $\PratioAll{}$. The formal analysis appears in Section~\ref{pcoef-contin-sec}.

\paragraph{The integer setting.} To solve  $\Pcoef{\delta, \eps}$, we develop a new data structure we call a \emph{covering schedule}. This consists of a sequence $\betamin = \beta_0,\beta_1,\ldots,\beta_t = \betamax$ and corresponding values $k_1,  \dots, k_t$ so that $\mu_{\beta_{i}}(k_{i})$ and $\mu_{\beta_i}(k_{i+1})$ are large for all $i$. (The definition is adjusted slightly  for the endpoints $i = 0$ and $i = t$). Define $w_i = \min\{\mu_{\beta_{i}}(k_{i}),\mu_{\beta_i}(k_{i+1})\}$ (``weight'' of $i$). If we take $\Omega( \frac{1}{w_i \eps^2})$ samples from $\mu_{\beta_i}$, we can accurately estimate the quantities $\mu_{\beta_i}(k_{i}), \mu_{\beta_i}(k_{i+1})$, in turn allowing us to estimate
$$
\frac{Q(\beta_i)}{Q(\beta_{i-1})} = e^{ (\beta_{i} - \beta_{i-1}) k_i } \frac{ \mu_{\beta_{i-1}}(k_i)}{  \mu_{\beta_{i}}(k_i)}
$$

By telescoping products, this in turn allows us to estimate every value $Q(\beta_i)$. 

Next, for each index $x \in \mathcal H$,  we use binary search to find  $\alpha$ with $\mu_\alpha([0,x])\approx \mu_\alpha([x,n])$ and then estimate $\mu_{\alpha}(x), \mu_{\alpha}(k_i)$ by taking $O(\frac{1}{\delta \eps^2})$ samples. If $\alpha$ lies in interval $[\beta_i, \beta_{i+1}]$ of the covering schedule, we can use the estimates for $Q(\beta_{i})$ and $Q(\beta_{i+1})$ to estimate $Q(\alpha)$.  (A slightly different procedure is used to handle exceptional cases where $\mu_{\alpha}(k_i)$ is small.)   Since we do this for each $j \in \mathcal H$,  the overall cost of this second phase is roughly $O(\frac{n}{\delta \eps^2})$. 

In the log-concave setting, there is a more efficient  algorithm for $\Pcoef{\delta, \eps}$. Here, for any index $i$ of the covering schedule and for $j \in [k_i, k_{i+1}]$ we have $\mu_{\beta_i}(j)\ge \min\{ \mu_{\beta_i}(k_i),\mu_{\beta_i}(k_{i+1}) \} = w_i$; thus, $\beta_i$ ``covers'' the interval $[k_i,k_{i+1}]$.  We can solve $\Pcoef{\delta, \eps}$ with $O(  \frac{1}{\delta \eps^2} + \sum_i \frac{1}{w_i \eps^2} )$
samples, by drawing $\Theta(\frac{1}{w_i\eps^2})$ samples at~$\beta_i$
and $\Theta(\frac{1}{\delta \eps^2})$ samples at $\betamin$ and $\betamax$. 

In either case, after solving $\Pcoef{\delta, \eps}$, we use Theorem~\ref{th:main:one-restate2} to solve $\PratioAll{\eps}$ for free.

\paragraph{Obtaining a covering schedule.} 
The general $\Pcoef{}$ algorithm  above uses $O(  \frac{n}{\delta \eps^2} + \sum_i \frac{1}{w_i \eps^2} )$ samples, and similarly the log-concave algorithm uses $O( \frac{1}{\delta \eps^2} + \sum_i \frac{1}{w_i\eps^2})$ samples.  We thus refer to the quantity $\sum_i \frac{1}{w_i}$ as the \emph{inverse weight} of the schedule.  In the most technically involved part of the paper, we produce a covering schedule with inverse weight $O(n \log n)$ (or $O(n)$ in the log-concave setting). Here we just sketch some key ideas.

First, we construct a ``preschedule'' where each interval can choose two different indices $\sigma^-_i, \sigma^+_{i}$ instead of a single index $k_i$, with the indices interleaving as $\sigma_i^- \leq \sigma_{i+1}^- \leq \sigma_i^+ \leq \sigma_{i+1}^+$. The algorithm repeatedly fill gaps: if some half-integer $\ell + \tfrac{1}{2}$ is not currently covered,  we can select a value $\beta$
with $\mu_\beta([0,\ell])\approx\mu_\beta([\ell+1,n])$, along with corresponding values  $\sigma^-\in[0,\ell], \sigma^+ \in [\ell+1, n]$ with $\mu_\beta(\sigma^+)\cdot(\sigma^+-\ell)\ge \Omega(\frac{1}{\log n})$ and $\mu_\beta(\sigma^-)\cdot(\ell-\sigma^-+1)\ge \Omega(\frac{1}{\log n})$. The interval $[\sigma^-,\sigma^+]$  then fills the gap and has weight $w \geq \Omega( \frac{1}{(\sigma^+ - \sigma^-) \log n})$.

At the end of the process, we throw away redundant intervals so that each $x$ is covered by at most two intervals, and ``uncross'' them into a schedule  with $k_i \in \{ \sigma_{i-1}^+, \sigma_i^-\}$.    Since $\frac{1}{w_i}\le O( (\sigma^+_i - \sigma^-_i) \log n)$ for each $i$, this gives an  $O(n \log n)$ bound of the inverse weight of the schedule.

\section{Preliminaries}
\label{prelim-sec}
   
\noindent  Define $z(\beta) = \log Z(\beta)$ and $z(\beta_1, \beta_2) = \log \frac{Z(\beta_2)}{Z(\beta_1)} = \log \frac{ Q(\beta_2)}{Q(\beta_1)}$; note that $z(\betamin, \betamax) \leq q$ by definition. We write $z'$ and $z''$ for the derivatives of function $z$; observe (see \cite{Huber:Gibbs}) that ${\mathbb E}_{X\sim \mu_{\beta}}[X]=z'(\beta)$. As a consequence, $z', z''$ are nonnegative everywhere and the function $z$ is increasing concave-up.

 \
  
\noindent Define the Chernoff separation functions $F_{+}(u, t) = \bigl( \frac{e^{\delta}}{(1+\delta)^{1+\delta}} \bigr)^{u}$ and $F_{-}(u,t) =  \bigl( \frac{e^{-\delta}}{(1-\delta)^{1-\delta}} \bigr)^{u}$, where $\delta = t/u$. These are upper bounds on the probability that a binomial random variable with mean $u$ is larger than $u+t$ or smaller than $u - t$, respectively. We also write $F(u,t) = F_{+}(u,t) + F_{-}(u,t)$. We use some well-known bounds on these functions, e.g. $F_+(u,t) \leq e^{-t^2/(3 u)}$ for $t \leq u$ and $F_-(u,t) \leq e^{-t^2/(2 u)}$ for all $t$ and
$F(u,t) \leq  2 (1 + t/u)^{-t/2}$ for all $t$.

 \ 
 
 \noindent For a random variable $X$, we write $\Vrel(X) = \frac{\E[X^2]}{(\E[X])^2} - 1 = \frac{\text{Variance}(X)}{(\E[X])^2}$ for the relative variance of $X$.

 \ 
 
\noindent  We write $\mu_{\beta}(x,y), \mu_{\beta}[x,y), \mu_{\beta}[x,y]$ instead of $\mu_{\beta}((x,y)), \mu_{\beta}([x,y)), \mu_{\beta}([x,y])$ for readability. For any fixed value $\chi$, the map $\beta \mapsto \mu_{\beta}[0, \chi)$ is a non-increasing function of $\chi$, since $\frac{ \mu_{\beta}[\chi,n] }{1 - \mu_{\beta}[\chi,n] } = \frac{\sum_{x \geq \chi} c_x e^{\beta x}}{\sum_{x < \chi} c_x e^{\beta x}}$;  when $\beta$ increases by $\delta$, the numerator is multiplied by at least $e^{\delta \chi}$ and the denominator is multiplied by at most $e^{\delta \chi}$.  At several places, we also make use of the following inequality:
 \begin{equation}
\tag{Uncrossing Inequality} \label{prop:basic}
 \text{for values $\alpha \leq \beta, x \leq y$:} \qquad
{\mu_{\alpha}(x)  \mu_{\beta}(y)} \geq \mu_{\alpha}(y) \mu_{\beta}(x) 
\end{equation}

\subsection{The {\tt Balance} subroutine} Given a target $\chi$, we need to find a value $\beta$ with $\mu_{\beta}[0,\chi] \approx 1/2 \approx \mu_{\beta}[\chi, n]$. That is, $\chi$ is approximately the median of the distribution $\mu_{\beta}$.   We can find the target value $\beta$ via noisy binary search.  Formally, for $\tau \in (0,\tfrac12), \chi \in [0,n]$, define ${\tt BalancingVals}(\chi, \tau)$ to be the set of values $\beta \in [\betamin, \betamax]$ satisfying the following two properties:
\begin{itemize}
\item Either $\beta = \betamin$ or $\mu_\beta[0,\chi)\ge \tau$ 
\item Either $\beta = \betamax$ or $\mu_\beta[\chi,n]\ge \tau$
\end{itemize}

Note the slight asymmetry with respect to the left and right endpoints. In Appendix~\ref{sec:Balance}, we show the following result:
\begin{theorem}\label{th:subroutines2x}
For $\tau \in (0,\tfrac12)$ an arbitrary constant, the subroutine ${\tt Balance}(\chi, \tau, \gamma)$ has cost $O(\log \frac{n q}{\gamma})$. With probability at least $1 - \gamma$, it returns a value $\beta \in {\tt BalancingVals}(\chi, \tau)$ (we say in this case that the call is \emph{good}).
\end{theorem}

The algorithm can be accelerated if we can solve $\PratioAll{}$ as a preprocessing step:

\begin{theorem}\label{th:subroutines2xa}
Let $\tau \in (0, \tfrac12)$ be an arbitrary constant and let $\eps = (1/2 - \tau)/30$. If we are given  a data structure $\mathcal D$ for $\PratioAll{\eps}$, then subroutine ${\tt BalancePreprocessed}(\chi, \tau,\mathcal D, \gamma)$ has cost $O(\log \frac{q}{\gamma})$. If $\mathcal D$ is $\eps$-close, then with probability at least $1 - \gamma$, it returns a value $\beta \in {\tt BalancingVals}(\chi, \tau)$.
\end{theorem}

The following observation explains the motivation for the definition.
\begin{proposition}
\label{binarysearch-delta-thm}
If $\beta \in {\tt BalancingVals}(x, \tau)$, then  $\mu_{\beta}(x) \geq \tau \Delta(x)$.
\end{proposition}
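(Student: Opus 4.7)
The plan is to exploit the FKG-style monotonicity consequence of Eq.~\eqref{prop:basic} (the fact that for $\alpha\le\beta$ and $x\le y$, $\mu_\alpha(x)\mu_\beta(y)\ge \mu_\alpha(y)\mu_\beta(x)$) to compare the given distribution $\mu_\beta$ against a distribution $\mu_{\alpha^*}$ that realizes the maximum $\Delta(x)=\mu_{\alpha^*}(x)$. Fix such an $\alpha^*\in[\betamin,\betamax]$ and split into two cases depending on whether $\alpha^*\ge\beta$ or $\alpha^*<\beta$. In each case I will show that the ratio $\mu_\beta(y)/\mu_{\alpha^*}(y)$ is bounded by $\mu_\beta(x)/\Delta(x)$ on a half-line determined by $x$, so that summing recovers a lower bound on $\mu_\beta(x)/\Delta(x)$ in terms of $\tau$.

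Concretely, in the case $\alpha^*\ge\beta$, apply the monotonicity inequality with the roles ``$\alpha$''$\leftarrow\beta$, ``$\beta$''$\leftarrow\alpha^*$ and with the pair $(x,y)$ where $y\ge x$: this yields $\mu_\beta(y)\le \frac{\mu_\beta(x)}{\Delta(x)}\,\mu_{\alpha^*}(y)$. Summing over $y\in[x,n]$ and using $\mu_{\alpha^*}([x,n])\le 1$ gives $\mu_\beta([x,n])\le \mu_\beta(x)/\Delta(x)$. Now I invoke the definition of $\Lambda_\tau(\betamin,\betamax,x)$: either $\beta=\betamax$ (which combined with $\alpha^*\ge\beta$ forces $\alpha^*=\beta$, making $\mu_\beta(x)=\Delta(x)$ and the claim trivial), or $\mu_\beta([x,n])\ge\tau$, from which $\mu_\beta(x)\ge\tau\Delta(x)$ follows immediately.

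The case $\alpha^*<\beta$ is symmetric: then $\beta\ne\betamin$ (otherwise $\alpha^*<\betamin$ contradicts $\alpha^*\in[\betamin,\betamax]$), so the defining property of $\Lambda_\tau$ gives $\mu_\beta([0,x))\ge\tau$. Applying the same inequality with roles ``$\alpha$''$\leftarrow\alpha^*$, ``$\beta$''$\leftarrow\beta$ and pair $(y,x)$ for $y<x$ gives $\mu_\beta(y)\le \frac{\mu_\beta(x)}{\Delta(x)}\,\mu_{\alpha^*}(y)$; summing over $y\in[0,x)$ produces $\mu_\beta([0,x))\le \mu_\beta(x)/\Delta(x)$, and hence $\mu_\beta(x)\ge\tau\Delta(x)$.

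There is no serious technical obstacle; the only mild subtlety is checking the two endpoint degeneracies ($\beta=\betamin$ and $\beta=\betamax$), which are precisely the clauses for which the definition of $\Lambda_\tau$ waives the probability-mass condition, and in each such case the placement of $\alpha^*$ forces $\alpha^*=\beta$ and the claim becomes trivial. The main conceptual step is recognizing that Eq.~\eqref{prop:basic} directly controls the ratio $\mu_\beta/\mu_{\alpha^*}$ monotonically in the appropriate direction, so that the tail mass of $\mu_\beta$ around $x$ bounds the ratio $\mu_\beta(x)/\Delta(x)$ from below.
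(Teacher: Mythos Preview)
Your proof is correct and follows essentially the same approach as the paper's: fix $\alpha^*$ achieving $\Delta(x)$, split into cases $\alpha^*\lessgtr\beta$, use the monotonicity consequence of Eq.~\eqref{prop:basic} to bound a tail mass of $\mu_\beta$ by $\mu_\beta(x)/\Delta(x)$, and invoke the $\Lambda_\tau$ condition. The paper carries out the same comparison via a direct computation with the definition $\mu_\alpha(x)=c_xe^{\alpha x}/Z(\alpha)$ rather than summing the pointwise inequality, but the content is identical.
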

\begin{proof}
Consider $\alpha \in [\betamin, \betamax]$ with  $\mu_{\alpha}(x) = \Delta(x)$.  The result is clear if $\alpha = \beta$. Suppose that $\alpha < \beta$; the case $\alpha > \beta$ is completely analogous. So $\beta > \betamin$, and  since $\beta \in {\tt BalancingVals}(x, \tau)$, this implies $\mu_{\beta}[0,x) \geq \tau$. We then have:
\[
\mu_{\alpha}(x) = \frac{c_x e^{\alpha x}}{\sum_y c_y e^{\alpha y}} \leq \frac{c_x e^{\alpha x}}{\sum_{y \leq x}  c_y e^{\alpha y}} \leq  \frac{c_x e^{\beta x}}{\sum_{y \leq x} c_y e^{\beta y}} = \frac{ \mu_{\beta}(x)}{\mu_{\beta}[0,  x]} \leq \frac{\mu_{\beta}(x)}{\tau}. \qedhere
\]
\end{proof}

\subsection{Statistical sampling}
We discuss a few procedures for statistical sampling in our algorithms; see Appendix~\ref{ppe-app} for proofs.

First, we can obtain an unbiased estimator of the probability vector $\mu_\beta$  by computing empirical frequencies $\hat \mu_{\beta}$ from $N$ independent samples from $\mu_\beta$; we denote this process as $\hat\mu_\beta\leftarrow{\tt Sample}(\beta, N)$.  We record the following standard concentration bound, which we will use repeatedly:
\begin{lemma}
\label{binom:succ:lem}
For $\eps, \gamma \in (0,\tfrac12),\bar p \in (0,1]$, suppose we draw random variable $\hat p \sim \frac{1}{N} \text{Binom}(N,p)$ where $N \geq\frac{3 e^{ \eps} \log(4/\gamma)}{(1-e^{-\eps})^2 \bar p}$.
Then, with probability at least $1 - \gamma$, the following two bounds both hold:
\begin{align}\label{eq:hatp-bounds}
|\hat p - p| \leq \eps (p + \bar p), \qquad  \text{and} \qquad
\hat p \in \begin{cases}
[e^{-\eps}p,e^{\eps}p] & \mbox{if }p \ge  e^{-\eps} \bar p \\
[0, \bar p) & \mbox{if }p <  e^{-\eps} \bar p \\
\end{cases}
\end{align}
In particular, if Eq.~(\ref{eq:hatp-bounds}) holds and $\max\{p, \hat p \}\ge \bar p$, then 
$|\log \hat p - \log p| \leq \eps$. 
\end{lemma}

Many of our algorithms are based on calling $\hat\mu_\beta\leftarrow{\tt Sample}(\beta, N)$  and using certain values $\hat \mu_{\beta}(I)$ for sets $I \subseteq \mathcal F$; they succeed when these estimates $\hat \mu_{\beta}(I)$ are close to $\mu_{\beta}(I)$.  We say the execution of {\tt Sample}   \emph{well-estimates} $I$ if Eq.~\eqref{eq:hatp-bounds} hold for $p=\mu_\beta(I)$ and $\hat p=\hat\mu_\beta(I)$; otherwise it \emph{mis-estimates} $I$.  Likewise we say {\tt Sample} well-estimates $k$ if it well-estimates the singleton set $I = \{ k \}$.  Since this comes up so frequently, we write 
$$
\hat\mu_\beta\leftarrow{\tt Sample}(\beta;\eps, \bar p, \gamma) \qquad \text{as shorthand for:} \quad  \hat\mu_\beta\leftarrow{\tt Sample} \bigl( \beta, \big \lceil \tfrac{3 e^{ \eps} \log(4/\gamma)}{(1-e^{-\eps})^2 \bar p} \big \rceil \bigr)
$$
This has cost $O( \frac{ \log(1/\gamma)}{\eps^2 \bar p} )$, and each set $I$ is well-estimated with probability at least $1 - \gamma$. Here $\eps$ should be interpreted as the desired relative error,  $\bar p$ as the minimum ``detection threshold'' for $\mu_{\beta}$, and $\gamma$ as the desired failure probability.

In a similar way, our algorithms for $\Pcoef{\delta, \eps}$ estimate $\pi(x)$ by sampling some well-chosen value $\mu_{\alpha}(x)$. We record the following straightforward error bound:
\begin{lemma}
\label{general-pcoef-lemma}
Suppose that for $x \in \mathcal F$, we have $\alpha \in [\betamin, \betamax]$ and empirical estimates $\hat Q(\alpha), \hat \mu_{\alpha}(x)$ satisfying the following bounds:
 
\noindent (A1) $|\log \hat Q(\alpha) - \log Q(\alpha)| \leq \eps/3$.  \\
 (A2) $| \hat \mu_{\alpha}(x) - \mu_{\alpha}(x)| \leq \eps/3 \cdot  \mu_{\alpha}(x) (1 + \delta/\Delta(x) )$. 
 
 Then the estimated value $ \hat \pi(x) = \hat Q(\alpha) e^{- \alpha x}\hat \mu_{\alpha}(x)$ satisfies the criteria for the problem $\Pcoef{\delta, \eps}$.
\end{lemma}

Finally, in a few places, we need to estimate certain telescoping products. Direct Monte Carlo sampling does not give strong tail bounds, so we use a standard method based on median amplification.
\begin{theorem} 
\label{ppe-est-thm}
Suppose we can draw independent samples of non-negative random variables $X_1, \dots, X_N$. Given input $\gamma \in (0,\tfrac12)$ and $\tau \geq 1$, the procedure  ${\tt EstimateProducts}(X, \tau, \gamma)$ uses $O( N \tau \log \tfrac{1}{\gamma})$  total samples of the $X$ variables and returns  estimates $( \hat X^{\tt prod}_1, \dots, \hat X^{\tt prod}_N )$. If $\tau \geq \sum_{i=1}^N \Vrel(X_i)$,
 then with probability at least $1 - \gamma$, it holds that
$$
\text{for all $i = 1, \dots, N$: } \frac{  \hat X_i^{\tt prod}}{ \prod_{j=1}^i \E[X_j]  } \in [e^{-\eps}, e^{\eps}] \qquad \qquad \text{for $\eps = \sqrt{ \sum_{i=1}^N \Vrel(X_i) / \tau }$}
$$
In this context, we also define $\hat X^{\tt prod}_0 = 1$.
\end{theorem}

\section{Solving $\PratioAll{\eps}$ in the continuous setting}
\label{sec:pratio1}

In this section, we will develop our algorithm {\tt PratioContinuous} to solve $\PratioAll{\eps}$ based on a stochastic process called TPA developed in \cite{Huber:Gibbs}. The TPA process generates a sequence $\beta_i$ of values in the range $[\betamin, \betamax]$  as follows:

\begin{algorithm}[H]
\DontPrintSemicolon
set $\beta_0=\betamax$ \\
\For {$i=1$ {\bf to} $+\infty$}
{
   draw $X$ from $\mu_{\beta_{i-1}}$  and draw $\eta$ from the Exponential distribution of rate one. \\
   set $\beta_i = \beta_{i-1} - \eta / X$ \\
  \textbf{if $X = 0$ or $\beta_i < \betamin$ then return}  set $B = \{ \beta_1, \dots, \beta_{i-1} \}$ \\
}  
\caption{ {The TPA process. {\tt Output:} a multiset of values in the interval $[\betamin,\betamax]$}}

\label{alg:TPAone0}
\end{algorithm}

$ {\tt TPA}(k)$ is defined to be the union of $k$ independent runs of Algorithm~\ref{alg:TPAone0}. The critical property shown in \cite{TPA, TPA:journal}  is  that $z({\tt TPA}(k))$ is a rate-$k$ Poisson Point Process on $[z(\betamin), z(\betamax)]$. In other words, if $\{ \beta_1, \dots, \beta_{\ell} \}$ is the output of ${\tt TPA}(k)$, then the random variables $z_i = z(\beta_i)$  are generated by the following process:

\begin{algorithm}[H]
\DontPrintSemicolon
set $z_0=z(\betamax)$ \\
\For {$i=1$ {\bf to} $+\infty$}
{
  draw $\eta$ from the Exponential distribution of rate $k$\\  
   set $z_i=z_{i-1}-\eta$ \\
  \textbf{if $z_i < z(\betamin)$ then return} set $\{ z_1, \dots, z_{i-1} \}$ \\
}  
\caption{Equivalent process for generating $z({\tt TPA}(k))$. }\label{alg:TPAone}
\end{algorithm}

As noted in \cite{TPA:journal}, the TPA process can be used for a simple algorithm for $\PratioAll{\eps}$, which already gives us half of our result.

\begin{algorithm}[H]
compute $ B \leftarrow {\tt TPA}(k)$ and set $\mathcal D = B$.
\BlankLine
\BlankLine
\nonl \ \ \  \textbf{Query algorithm $Q(\alpha \mid \calD )$:} \\
\BlankLine
\vspace{-0.03in}
\textbf{return} $\hat Q(\alpha|\calD)=\exp \bigl(  | B \cap [\betamin, \alpha)| / k  \bigr)$
\caption{\label{alg:PPE-wrapper0} Algorithm ${\tt TPA-Pratio}(k)$:}
\end{algorithm}

\begin{theorem}[\cite{TPA:journal}]
\label{tparatiothm}
Algorithm {\tt TPA-Pratio} has cost $O(k (1 + \log Q(\betamax)))$. If $k \geq \frac{20 \log Q(\betamax)}{\eps^2} \log \tfrac{3}{\gamma}$, then the data structure $\mathcal D$ is $\eps$-close with probability at least $1 - \gamma$. 

In particular,with $k = \lceil \frac{20 q}{\eps^2} \log \tfrac{3}{\gamma} \rceil $, then the algorithm solves {\tt PratioAll} with cost $O( \frac{q^2}{\eps^2} \log \tfrac{1}{\gamma})$.
\end{theorem}
\begin{proof}
The runtime follows since each run of TPA has cost $O( 1 + \log Q(\betamax))$.  

For any values  $\beta_1 < \beta_2$, let $B(\beta_1, \beta_2) = |B \cap (\beta_1, \beta_2)|$. This is a Poisson random variable with mean $k \cdot z(\beta_1, \beta_2)$. So $\Pr{ |B(\beta_1, \beta_2)/k - z(\beta_1, \beta_2)| \geq  \eps/2} \leq F( q k, \eps k / 2) \leq 2 e^{-\frac{\eps^2 k}{12 q}} \leq \gamma/2$ for fixed $\beta_1, \beta_2$. 

Let $\mathcal E$ denote the event that $\mathcal D$ fails to be $\eps$-close. Consider the random process where $B \cap [\betamin, \alpha]$ is revealed while $\alpha$ gradually increases; suppose  $\mathcal E$ occurs and let $\alpha$ be the first value with $| B(\betamin, \alpha)/k - z(\betamin, \alpha)| > \eps$. Since $z(B)$ is a Poisson process, $B \cap (\alpha, \betamax)$ retains its original, unconditioned probability distribution.  Accordingly, we have $| B(\alpha, \betamax)/k - z(\alpha, \betamax) | \leq \eps/2$ with probability at least $1 - \gamma/2$ , in which case we get
\begin{align*}
& | B(\betamin, \betamax)/k - z(\betamin, \betamax) | =  | (B(\betamin, \alpha)/k  - z(\betamin,\alpha)) + (B(\alpha, \betamax)/k - z(\alpha, \betamax))   | \\
&\qquad \geq |B(\betamin, \alpha)/k  - z(\betamin,\alpha)| -  | B(\alpha, \betamax)/k - z(\alpha, \betamax)|   \geq \eps - \eps /2 \geq \eps/2.
\end{align*}

So  $\Pr{ | B(\betamin, \betamax)/k - z(\betamin, \betamax)| \geq \eps/2 \mid \mathcal E }  \geq 1 - \gamma/2$. On the other hand, we have also shown that $\Pr{ | B(\betamin, \betamax)/k - z(\betamin, \betamax)| \geq \eps/2} < \gamma/2$. Hence $\Pr{\mathcal E} \leq \frac{\gamma/2}{1 - \gamma/2} \leq \gamma$. 
\end{proof}

\subsection{Paired Product Estimator}
To improve the sample complexity to $O( \frac{q \log n}{\eps^2} \log \tfrac{1}{\gamma})$, we need a more advanced statistical technique called the Paired Product Estimator (PPE) developed in \cite{Huber:Gibbs}. Here, we describe a simple version of the algorithm.\footnote{See also \cite{Huber:Gibbs,Kolmogorov:COLT18} for a stand-alone version of the PPE algorithm to directly solve  $\Pratio{\eps}$ with high probability.}

\begin{algorithm}[H]
\DontPrintSemicolon
compute $B \leftarrow {\tt TPA}(k)$ sorted as $B = \{ \beta_1, \dots, \beta_{t-1} \}$ \\
define $\beta_0 = \betamin$ and $\beta_t = \betamax$. \\
\For{$i=1,\ldots,t$} {
define random variable $W_i = \exp \bigl( \frac{\beta_{i}-\beta_{i-1}}{2} \cdot X \bigr)$ where $X$ is drawn from  $\mu_{\beta_{i-1}}$ \\
define random variable $V_i = \exp \bigl( -\frac{\beta_{i}-\beta_{i-1}}{2}\cdot Y \bigr)$ where $Y$ is drawn from  $ \mu_{\beta_i}$ 
}
set $\hat W^{\tt prod} = {\tt EstimateProducts}(W_i,32,1/10)$ \\
set $\hat V^{\tt prod} = {\tt EstimateProducts}(V_i, 32,1/10)$ \\
\textbf{return} $\calD=((\beta_0,\ldots,\beta_t),(\hat Q(\beta_0),\ldots,\hat Q(\beta_t))$ where $\hat Q(\beta_i) = \hat W^{\tt prod}_i / \hat V^{\tt prod}_i$
\BlankLine
\BlankLine
\nonl \ \ \  \textbf{Query algorithm $Q(\alpha \mid \calD )$:} \\
\vspace{-0.03in}
\BlankLine
find  index $i$ with $\alpha \in [\beta_i, \beta_{i+1}]$ \\
let $\alpha = (1-v) \beta_{i-1} + v \beta_i$ for $v \in [0,1]$ \\
\textbf{return} $\hat Q(\alpha|\calD)=\exp \bigl(  (1-v) \log \hat Q(\beta_{i-1})  + v \log \hat Q(\beta_{i}) \bigr)$
\caption{\label{alg:PPE-wrapper} Algorithm ${\tt PPE - Pratio}(k)$}
\end{algorithm}

In this context, the sequence $(\beta_0, \dots, \beta_t)$ constructed at Line 2 is called a \emph{cooling schedule}. We show the following main result for this algorithm:

\begin{lemma}
\label{ppe-summary}
The algorithm {\tt PPE-Pratio} has cost $O(k q)$. If $k \geq \frac{10 \log (z'(\betamax)/z'(\betamin))}{\eps^2}$ for $\eps \in (0,1)$, then the resulting data structure $\mathcal D$ is $\eps$-close with probability at least $0.7$.
\end{lemma}

For the complexity bound,  Line 1 has cost $O( k q)$. The two applications of {\tt EstimateProducts} have cost $O(t)$ given the cooling schedule $\beta_i$, and the expected size of $t$ is $O(k q)$.

For the correctness bound, we consider curvature parameters $\kappa_i$ and $\kappa$ of the schedule defined as:
$$
\kappa_i = z(\beta_{i-1})-2z \bigl( \tfrac{\beta_{i-1}+\beta_{i}}{2} \bigr) + z(\beta_{i}), \qquad \qquad \kappa = \sum_{i=1}^{t} \kappa_i
$$

Our main technical result here is the following bound:
\begin{lemma}
\label{kapp-bnd1}
There holds $\E[ \kappa ] \leq \frac{\log(z'(\betamax)/z'(\betamin))}{k}.$
\end{lemma}
\begin{proof}
As shown in \cite[Lemma 3.2]{Huber:Gibbs}, we have $\frac{z'(\beta_i)}{z'(\beta_{i-1})} \geq \exp(  \frac{2 \kappa_i}{z(\beta_{i-1}, \beta_i)})$ for each $i$ (see also \cite{Stefankovic:JACM09}).  Equivalently, we have $\kappa_i \leq \frac{ z(\beta_{i-1},\beta_i)}{2} \cdot \log \bigl( \frac{z'(\beta_i)}{z'(\beta_{i-1})} \bigr)$.

Observe that $z(\beta)$ and $z'(\beta)$ are both strictly increasing functions of $\beta$.   Thus, for any fixed value $x \in [ z'(\betamin),  z'(\betamax)]$, there is a unique value $\beta \in [\betamin, \betamax]$ with $z'(\beta) =  x$, and a unique index $i$ with $z(\beta) \in [z(\beta_{i-1}), z(\beta_{i}))$. Define $A_x = z(\beta_{i-1}, \beta_{i})$, that is, the distance between the nearest points of the cooling schedule.   We can rewrite the bound for $\kappa_i$ as $$
\kappa_i \leq \frac{ z(\beta_{i-1},\beta_i)}{2} \cdot \log \Bigl( \frac{z'(\beta_i)}{z'(\beta_{i-1})} \Bigr) = \frac{1}{2} \int_{z'(\beta_{i-1})}^{z'(\beta_i)} \frac{A_x}{x} \ \d x
$$
 so that:
\begin{equation}
\label{kappaeqn3}
\kappa \leq \frac{1}{2} \sum_i \int_{z'(\beta_{i-1})}^{z'(\beta_{i})} \frac{A_x}{x} \ \d x = \frac 12 \int_{z'(\betamin)}^{z'(\betamax)} \frac{A_x}{x} \ \d x. 
\end{equation}

We claim that, for any fixed $x$,  we have $\E[A_x] \leq 2/k$. For, suppose $z'(\beta^{\ast}) = x$ and let $z^\ast=z(\beta^{\ast})$.  Consider a Poisson process $X_1,X_2,\ldots$ on $[z^\ast,+\infty)$ and a Poisson process $X_{-1},X_{-2},\ldots$  on $(-\infty,z^\ast]$, both with rate $k$.  By the superposition theorem for Poisson processes~\cite[page 16]{Kingman:PPP}, the bidirectional sequence ${\bf X}=\ldots,X_{-2},X_{-1},X_1,X_2,\ldots$
is also a rate-$k$ Poisson process on $(-\infty,+\infty)$.  As discussed earlier,  the sequence $z(\beta_i)$ has the same distribution as the subsequence ${\bf X}\cap[z(\betamin),z(\betamax)]$.  So $$
\E[A_x] = \E[ \min \{ z(\betamax), X_1 \} - \max \{ z(\betamin), X_{-1} \} ] \leq \E[X_1 -  z^{\ast}]  + \E[z^{\ast} - X_{-1}] = 2/k
$$
where the last equality uses the fact that  $X_1 -  z^{\ast}$  and $z^{\ast} - X_{-1}$ are rate-$k$ Exponential random variables.

Substituting into Eq.~(\ref{kappaeqn3}) gives:
\[
\E[\kappa] \leq \frac 12\int_{z'(\betamin)}^{z'(\betamax)} \frac{\E[A_x]}{x} \ \d x \leq \frac 12\int_{z'(\betamin)}^{z'(\betamax)} \frac{2}{k x} \ \d x = \frac{\log(z'(\betamax)) - \log(z'(\betamin))}{k}.  \qedhere
\]
\end{proof}

\begin{proposition}
\label{dclosethm}
If $k \geq \frac{10 \log (z'(\betamax)/z'(\betamin))}{\eps^2}$ for $\eps \in (0,1)$, then with probability at least $0.7$, the following two bounds hold: (a) $\kappa \leq \eps^2$ and (b) $| \log (\hat W_i^{\tt prod}/ \hat V_i^{\tt prod}) - \log Q(\beta _i)| \leq \eps / 2$ for all $i = 0, \dots, t$.
\end{proposition}
\begin{proof}
By Markov's inequality, we have $\kappa \leq \eps^2$ with probability at least $0.9$. Suppose this holds, and  condition on the fixed schedule $B$. Denote $\bar\beta_{i-1,i}=\frac{\beta_{i-1}+\beta_{i}}2$ for each $i$. A calculation shows (see~\cite{Huber:Gibbs}):
\begin{align*}
&\E[W_i]=\frac{Z(\bar\beta_{i-1,i})}  {Z(\beta_{i-1})}\;\;\quad\quad
\E[V_i]=\frac{Z(\bar\beta_{i-1,i})}{Z(\beta_{i})}\;\;\quad
&\Vrel(W_i)=\Vrel(V_i)=\frac{Z(\beta_{i-1})Z(\beta_{i})}{Z(\bar\beta_{i-1,i})^2} - 1=e^{\kappa_i} - 1
\end{align*}

So $\sum_i \Vrel(W_i) \leq \sum_i (e^{\kappa_i} - 1) \leq e^{\sum_i \kappa_i} - 1 = e^{\kappa} - 1$; since $\kappa \leq \eps^2 \leq 1$, this is at most $2 \eps^2$. Likewise $\sum_i \Vrel(V_i) \leq 2 \eps^2$. So by Theorem~\ref{ppe-est-thm},  with probability at least $0.8$ the values $\hat W^{\tt prod}, \hat V^{\tt prod}$ returned by {\tt EstimateProducts} satisfy
$$
\frac{\hat V^{\tt prod}_i}{\prod_{j=1}^i \E[ V_j ]} \in [e^{-\eps/4, \eps/4}], \qquad  \frac{\hat W^{\tt prod}_i}{\prod_{j=1}^i \E[ W_j ]}  \in [e^{-\eps/4, \eps/4}] \qquad \text{for all $i = 0, \dots, t$}
$$
In this case, each value $\hat Q(\beta_i) = \hat W^{\tt prod}_i / \hat V^{\tt prod}_i$ is within $[e^{-\eps/2}, e^{\eps/2}]$ of the product $\prod_{j=1}^i \frac{\E[W_j]}{\E[V_j]}$; by telescoping products, this is precisely $Z(\beta_i)/Z(\beta_0) = Q(\beta_i)$.  
\end{proof}

\begin{proposition}
If the bounds of Proposition~\ref{dclosethm} hold,  then the data structure $\mathcal D$ is $\eps$-close. 
\end{proposition}
\begin{proof}
Consider $\alpha = (1-v) \beta_{i-1} + v \beta_i$ for $v \in [0,1]$.   Suppose that $v \leq \tfrac12$;  the case where $v \geq  \tfrac12$ is completely symmetric. We need to show:
\begin{equation}
\label{gag15}
 | (1-v) \hat z(\beta_{i-1})  + v  \hat z(\beta_{i}) - z( \alpha ) | \leq \eps
 \end{equation}
 where we write $\hat z(\beta) = \log \hat Q(\alpha \mid \mathcal D)$ for brevity. We can rewrite the definition of $\kappa_i$ as:
\begin{equation}
\label{keqn1}
\kappa_i = 2 z(\bar \beta, \beta_i) - z(\beta_{i-1}, \beta_i) \qquad \qquad \text{for $\bar \beta = \tfrac{\beta_{i-1}+\beta_{i}}2$}
\end{equation}
Since $z$ is an increasing concave-up function, $z(\beta_{i-1}, \alpha) \leq v \cdot z(\beta_{i-1}, \beta_i)$ and $z(\beta_{i-1}, \alpha) \geq z(\beta_{i-1}, \beta_i) - 2 ( 1- v) z(\bar \beta, \beta_i)$. By Eq.~(\ref{keqn1}), this implies that $z(\beta_{i-1}, \alpha) \geq v \cdot z(\beta_{i-1},\beta_i) - \kappa_i$. So overall
$$
|z (\beta_{i-1}, \alpha) - v \cdot z (\beta_{i-1}, \beta_i)  | = |(1-v) z(\beta_{i-1}) + v \cdot z(\beta_i) - z (\alpha)| \leq \kappa_i;
$$
 by Proposition~\ref{dclosethm}(a) this at most $\eps^2 \leq \eps / 2$.  By Proposition~\ref{dclosethm}(b), we have $| \hat z(\beta_{i-1}) - z(\beta_{i-1})| < \eps/2$ and $| \hat z(\beta_i) - z(\beta_i) | < \eps/2$, so this implies Eq.~(\ref{gag15}).
\end{proof}

 This concludes the proof for Lemma~\ref{ppe-summary}.

\subsection{A hybrid algorithm}
\label{sec-est3}
Our hybrid algorithm for $\PratioAll{\eps}$ combines the two simpler algorithms in their appropriate parameter regimes.

\begin{algorithm}[H]
\If{$q > \log n$}{
set $\beta_{\circ} \leftarrow \betamax$ and $t = 0$ \\
call $\mathcal D_0 = {\tt TPA - Pratio}(k_1)$ for $k_1 = \lceil \frac{20 q}{\eps^2} \log \tfrac{3}{\gamma} \rceil $
} \ElseIf {$q \leq \log n$} {
set $\beta_{\circ} \leftarrow {\tt Balance}(1/2, 1/4, \gamma/4)$ \\
call $\mathcal D_0 = {\tt TPA - Pratio}(k_1)$ with $k_1 = \lceil \frac{40 \log 4}{\eps^2} \log \tfrac{12}{\gamma} \rceil$ for the interval $[\betamin, \beta_{\circ}]$. \\
\For{$i = 1, \dots, t = 40 \log \tfrac{2}{\gamma}$} {
Set $\mathcal D_i \leftarrow {\tt PPE - Pratio}( k_2 )$ with $k_2 = \lceil \frac{10 \log(4 n)}{\eps^2} \rceil$ for the interval $[\beta_{\circ}, \betamax]$ \\
}
}
output tuple $\mathcal D = (\beta_{\circ}, \mathcal D_0, \mathcal D_1, \dots, \mathcal D_t)$

\BlankLine
\BlankLine
\nonl \ \ \  \textbf{Query algorithm $Q(\alpha \mid \calD )$:} \\
\BlankLine
\vspace{-0.03in}
\If{$\alpha \leq \beta_{\circ}$} {

\textbf{return} $\hat Q (\alpha \mid \mathcal D_0)$.
} \Else {
\textbf{return} $\hat Q( \beta_{\circ} \mid \mathcal D_0) \cdot \text{median} \{ \hat Q( \alpha \mid \mathcal D_1), \dots, \hat Q(\alpha \mid \mathcal D_t) \}$
}

\caption{Algorithm ${\tt PratioContinuous}(\eps, \gamma)$  \label{alg:hybrid}}
\end{algorithm}
\begin{theorem}
\label{raaw1}
Algorithm {\tt PratioContinuous} solves $\PratioAll{\eps}$ with cost 
 $$
O \Bigl( \frac{\min\{q^2, q \log n \} }{\eps^2} \log \tfrac{1}{\gamma} \Bigr).
$$
\end{theorem}
\begin{proof}
If $q > \log n$, then this simply reduces to calling {\tt TPA-Pratio} and the result follows from Theorem~\ref{tparatiothm}. So suppose $q \leq \log n$.  In this case, Line 5 has cost $O( \log \frac{n q}{\gamma} )$ and each iteration of the loop at Line 7 has cost $O( k_2 q) = O(\frac{q \log n}{\eps^2} )$. Line 6 has cost $O( k_1 (1+\log Q(\beta_{\circ}))) \leq O( \frac{q}{\eps^2} \log \tfrac{1}{\gamma})$. Overall, in this case the algorithm has cost $O( \frac{q \log n }{\eps^2} \log \tfrac{1}{\gamma})$.  

 For the algorithm correctness, suppose the call to {\tt Balance} at Line 5 is good, which occurs with probability at least $1 - \gamma/4$. In this case, we claim that the following two bounds hold:
\begin{equation}
\label{rat-eqns}
Q(\beta_{\circ}) \leq 4, \qquad \qquad z'(\betamax) / z'(\beta_{\circ}) \leq 4 n
\end{equation}

The first bound is immediate if $\beta_{\circ} = \betamin$. Otherwise, we have $\mu_{\beta_{\circ}}(0) = \mu_{\beta_{\circ}}[0, 1/2] \geq 1/4$. We can estimate $Z(\betamin) \geq c_0$ and $Z(\beta_{\circ}) = \frac{c_0}{\mu_{\beta_{\circ}}(0)}  \leq 4 c_0$. So $Q(\beta_{\circ}) \leq 4$.

 Likewise, the second bound is immediate if $\beta_{\circ} = \betamax$. Otherwise, we have $\mu_{\beta_{\circ}}[1, n] \geq 1/4$. Thus $z'(\beta_{\circ}) = {\mathbb E}_{X\sim \mu_{\beta_{\circ}}} [X]  \geq  \mu_{\beta_{\circ}}[1,n] \geq 1/4$.   So  $z'(\betamax)/z'(\beta_{\circ}) \leq n/(1/4) = 4 n$. 

By Eq.~(\ref{rat-eqns}), Theorem~\ref{tparatiothm} and our choice of $k_1$, the data structure $\mathcal D_0$ is $\eps/2$-close for the range $[\betamin, \beta_{\circ}]$ with probability at least $1 - \gamma/4$.  In particular, $\hat Q(\beta_{\circ} \mid \mathcal D_0)$ is within $e^{\pm \eps/2}$ of  $Q(\beta_{\circ})$.  Also, by Lemma~\ref{ppe-summary}, each $\mathcal D_i$ is $\eps/2$-close for the range $[\beta_{\circ}, \betamax]$ with probability at least $0.7$. So with probability at least $1 - F_-(0.7 t, 0.2 t) \geq 1 - \gamma/2$, at least half the structures $\mathcal D_i$ are $\eps/2$-close; when this holds, then $\text{median} (\hat Q( \alpha \mid \mathcal D_1), \dots, \hat Q(\alpha \mid \mathcal D_t))$ is within $e^{\pm \eps/2}$ of $\frac{Q(\alpha)}{Q(\beta_{\circ})}$ for all values $\alpha \in [\beta_{\circ}, \betamax]$. Then the final data structure $\mathcal D$ is $\eps$-close for the entire range $[\betamin, \betamax]$.
\end{proof}

We have established the continuous-setting algorithm of Theorem~\ref{th:main:one}.

\section{Solving $\Pcoef{\delta, \eps}$ in the continuous setting}
\label{pcoef-contin-sec}
In this section, we develop our algorithm for $\Pcoef{\delta, \eps}$, using the algorithm for $\PratioAll{}$ as a subroutine.

\begin{algorithm}[H]
set $\mathcal D \leftarrow {\tt PratioContinuous}( \min\{1/120, \eps/3 \}, \gamma/4)$ \\
initialize $x_0 \leftarrow n$ \\
\For{$t = 1$ \bf{to} $T = 2 \lceil \min \{q/2, \sqrt{q \log n} \} \rceil + 4$ }
{
  set $\beta_t \leftarrow {\tt BalancePreprocessed}(x_{t-1}, 1/4, \mathcal D, \frac{\gamma}{4 T})$ \\
  
  set $\hat \mu_{\beta_t} \leftarrow {\tt Sample}( \beta_t,  \lceil \frac{10^6 \log{ \frac{10 T}{\delta \gamma} }}{\delta \eps^2} \rceil)$ \\

\If{$\beta_t > \betamin$} {

set $x_t$ to be the minimum value with $\hat \mu_{\beta_t} [0, x_t]  \geq 1/90$ \\

\textbf{foreach $y \in (x_t, x_{t-1}]$ do} set $\hat \pi(y) = \hat Q(\beta_t \mid \mathcal D) e^{-\beta_t} \hat \mu_{\beta}(y)$
}

  \ElseIf{$\beta_t = \betamin$} {
\textbf{foreach $y \in [0, x_{t-1}]$ do} set $\hat \pi(y) = e^{-\betamin} \hat \mu_{\betamin}(y)$ \\
\textbf{return} all estimated values $\hat \pi$
  }
  }
  
  output ERROR and \textbf{terminate}
\caption{Solving $\Pcoef{\delta, \eps}$ for error parameter $\gamma$.  \label{pcoef-alg1} }
\end{algorithm}

\begin{theorem}
\label{pcoefalg1thm}
Algorithm~\ref{pcoef-alg1}  solves $\Pcoef{\delta, \eps}$ with cost
$$
O\Bigl(  \min \{  \frac{ q^2 \log \tfrac{1}{\gamma} + (q/\delta) \log \tfrac{q}{\delta \gamma}}{\eps^2}, 
 \frac{q \log n \log \tfrac{1}{\gamma} + (\sqrt{q \log n}/ \delta ) \log \tfrac{q}{\delta \gamma}  }{\eps^2 }  \} \Bigr)
$$
\end{theorem}
The complexity bound follows immediately from specification of subroutines.  We next analyze the success probability. Throughout, we suppose that that the call at Line 1 produces a data structure $\mathcal D$ which is  $\min\{1/120, \eps/3 \}$-close. There are a number of intermediate calculations.
\begin{proposition}
\label{rlistprop}
With probability at least $1 - \gamma/2$, the following conditions hold for all iterations $t$: \\
\noindent (a)  If $\beta_t < \betamax$, then $\mu_{\beta_t}[x_{t-1}, n] \geq 1/4$; if $\beta_t > \betamin$ then $\mu_{\beta_t}[0, x_{t-1}) \geq 1/4$. \\
\noindent (b) If $\beta_t > \betamin$, then $x_t < x_{t-1}$ and $\mu_{\beta_t}[0,x_t) \leq 1/80$ and $\mu_{\beta_t} [0,x_t] \geq 1/100$. 
\end{proposition}
\begin{proof}
With probability at least $1 - \frac{\gamma}{4 T}$, the call to {\tt BalancePreprocessed} at each iteration $t$ is good. Suppose this holds. By definition, this shows (a). For (b), suppose $\beta_t > \betamin$, and let $v,w$ be the minimum  values with $\mu_{\beta_t} [0,v]  \geq 1/80$ and  $\mu_{\beta_t}[0,w]  \geq 1/100$ respectively. (These both exist since the function $x \mapsto \mu_{\beta}[0, x]$ is right-continuous.) Then $\mu_{\beta_t}[0,v) \leq 1/80, \mu_{\beta_t}[0,w)   \leq 1/100$. Since $\mu_{\beta_t}[0,x_{t-1}) \geq 1/4$, we have $w \leq v < x_{t-1}$.

By Lemma~\ref{binom:succ:lem}, there is a probability of at least $1 - \frac{\gamma}{4 T}$ that Line 5 well-estimates both intervals $[0,v], [0,w)$ with respect to parameters $\eps/10$ and $\bar p = 1/100$. Suppose this also holds.  Then $\hat \mu_{\beta_t}[0, v] \geq e^{-\eps/10} \mu_{\beta_t}[0,v] \geq e^{-1/20} \cdot 1/80 > 1/90$. So $x_t \leq v < x_{t-1}$ and $\mu_{\beta_t}[0,x_t) \leq \mu_{\beta_t}[0,v) \leq 1/80$.  Likewise, we have $\hat \mu_{\beta_t}[0, w)  \leq e^{1/20}  \cdot 1/100  <  1/90$. Thus $x_t \geq w$ and $\mu_{\beta_t}[0,x_t] \geq \mu_{\beta_t}[0,w] \geq 1/100$.
\end{proof}

\begin{proposition}
\label{ggprop}
If the bounds of Proposition~\ref{rlistprop} hold, the algorithm does not terminate with ERROR. 
\end{proposition}
\begin{proof}
Suppose the algorithm reaches Line 12, so $\beta_1, \dots, \beta_{T} > \betamin$ and we have a decreasing sequence $x_1 > x_2 > \dots > x_{T-1} > x_T$. Note that $x_i \in \mathcal F$ for each $i$, and so $x_{T-1} \geq 1$. 

For each iteration $t < T$, Proposition~\ref{rlistprop} gives $\mu_{\beta_{t+1}}[x_{t+1}, x_{t})  = \mu_{\beta_{t+1}} [0,x_{t}) - \mu_{\beta_{t+1}}[0,x_{t+1}) \geq 1/4 - 1/80 \geq 1/5$ and $\mu_{\beta_{t}}[x_{t+1}, x_{t}) \leq \mu_{\beta_{t}} [0,x_t) \leq 1/80$ and $\mu_{\beta_{t+1}}[0,x_{t}) \geq 1/4$.   Since the map $\beta \mapsto \mu_{\beta}[0,\chi)$ is non-increasing and $\mu_{\beta_{t+1}}[0,x_t) \geq 1/4 > 1/80 \geq \mu_{\beta_t}[0,x_t)$, we have $\beta_{t+1} < \beta_t$ strictly. We can estimate the partition ratio between $\beta_{t+1}$ and $\beta_t$ as:
\begin{equation}
\label{fvp1}
\frac{Z(\beta_t)}{Z(\beta_{t+1})} =  \frac{ \mu_{\beta_{t+1}}[x_{t+1}, x_{t})}{ \mu_{\beta_t}[x_{t+1}, x_{t}) } \cdot \frac{\sum_{x \in [x_{t+1}, x_{t})} c_x e^{\beta_{t} x}}{ \sum_{x \in [x_{t+1}, x_{t})} c_x e^{\beta_{t+1} x}} \geq \frac{ 1/5 }{ 1/80 } e^{(\beta_{t} - \beta_{t+1}) x_{t+1}} 
\end{equation}
where the last inequality holds since  $x_{t+1}$ is the smallest element in $[x_{t+1}, x_t)$. Alternatively, we can estimate:
\begin{equation}
\label{fvp2}
\frac{Z(\beta_{t})}{Z(\beta_{t+1})} = \frac{ \mu_{\beta_{t+1}}[0, x_{t-1}) }{ \mu_{\beta_{t}}[0, x_{t-1}) } \cdot \frac{\sum_{x < x_{t-1}} c_x e^{\beta_{t} x}}{ \sum_{x < x_{t-1}} c_x e^{\beta_{t+1} x}} \leq \frac{ 1 }{ 1/4  } e^{(\beta_{t} - \beta_{t+1}) x_{t-1}} 
\end{equation}
where again the last inequality holds since every element in $[0,x_{t-1})$ is smaller than $x_{t-1}$. 

Inequalities (\ref{fvp1}) and (\ref{fvp2}) together show $16 e^{(\beta_{t} - \beta_{t+1}) x_{t+1}} \leq 4 e^{(\beta_{t} - \beta_{t+1}) x_{t-1}}$, i.e. $\beta_{t} - \beta_{t+1} \geq \frac{\log 4}{x_{t-1} - x_{t+1}}$. Substituting into~(\ref{fvp1}) and take logarithm gives the bound
\begin{equation}
\label{fvp3}
z(\beta_{t+1}, \beta_{t}) \geq \log 16 + \frac{x_{t+1} \log 4}{x_{t-1} - x_{t+1}}\qquad \text{for all $t = 1, \dots, T - 1$} 
\end{equation}

Now let $g =T/2 - 2= \lceil \min \{q/2, \sqrt{q \log n} \} \rceil $, and consider the sequence $b_i = x_{T - 2i}$ for $i = 1, \dots g+1$. From Eq.~(\ref{fvp3}), and only including the subset of terms, we have
\begin{equation}
\label{fvp4}
\sum_{t=1}^{T-1} z(\beta_{t+1}, \beta_t) \geq \sum_{i=1}^g \bigl( \log 16 + \frac{b_i \log 4}{b_{i+1} - b_i} \bigr)
\end{equation}

As we show in Lemma~\ref{lemma:incrseq} in Appendix~\ref{sec:FindSegment}, since $b_1, \dots, b_{g+1}$ is a strictly-increasing positive sequence,  we have $\sum_{i=1}^g \frac{b_i}{b_{i+1} - b_i} \geq \frac{g^2}{\log(b_{g+1}/b_1)} - g/2$. Here $b_{g+1} = x_2 \leq n$, and $b_1 = x_{T-2} \geq 1$. So overall we have shown that $$
\sum_{t=1}^{T-1} z(\beta_{t+1}, \beta_t) \geq g \log 16 + \frac{g^2 \log 4}{\log n}- \frac{g \log 4}{2} > \frac{g^2}{\log n} + 2 g.
$$
 Since $\sum_{t=1}^{T-1} z(\beta_{t+1}, \beta_t) \leq z(\betamin, \betamax) \leq q$, this implies that $g < \sqrt{q \log n}$ and $g < q/2$, which is a contradiction to the definition of $g$.
\end{proof}

\begin{proposition}
\label{all-good-prop}
Suppose the bounds of Proposition~\ref{rlistprop} hold. Then with probability at least $1 - \gamma/4$,  the preconditions of Lemma~\ref{general-pcoef-lemma} for estimating $\pi$ hold for all $y \in \mathcal F$.
\end{proposition}
\begin{proof}
Our assumption that the call at Line 1 is good immediately gives property (A1) for all $y$. 

Consider some iteration $t$ and $y \in (x_t, x_{t-1}]$, where we denote $x_t = -\infty$ for $\beta_t = \betamin$.   For brevity write  $a_y = \mu_{\beta_t}(y), \hat a_y = \hat \mu_{\beta_t}(y)$. By Proposition~\ref{rlistprop}, if $\beta_t > \betamin$ we have $\mu_{\beta_t}[0, y) \geq \mu_{\beta_t}[0,x_t) \geq 1/100$. Likewise, if $\beta_t < \betamax$, we have $\mu_{\beta_t}[y,n] \geq \mu_{\beta_t}[x_{t-1},n] \geq 1/4$. So  $\beta_t \in {\tt BalancingVals}(y, 1/100)$. By Proposition~\ref{binarysearch-delta-thm} this shows $a_y \geq \Delta(y)/100$. Thus, condition (A2) will hold as long as 
\begin{equation}
\label{ayeqn}
|\hat a_y - a_y | \leq \eps/3 \cdot (a_y + \delta/100)
\end{equation}

For any $y \in \mathcal F$, let $\mathcal E_y$ denote the bad event that (\ref{ayeqn}) fails for $y$; note here that we are not necessarily requiring $y \in (x_t, x_{t-1}]$. Conditioned on $\beta_t$, we have
$$
\Pr{\mathcal E_y} \leq F(N a_y,  N \eps/3 \cdot (a_y + \delta/100)) \qquad  \text{for $N = \lceil 10^6 \log(\tfrac{10 T}{\delta \gamma}) / (\delta \eps^2) \rceil$}
$$
Now for $a_y \geq \delta$, we have $\Pr{ \mathcal E_y } \leq F(N a_y,  N a_y \eps / 3) \leq 2 e^{-N \delta \eps^2/27} \leq \frac{\delta \gamma}{4 T} \leq \frac{a_y \gamma}{4 T}$. Otherwise, for $a_y < \delta$, we use the bound $F(u, \tau) \leq  2 (1 + \tau/u)^{-\tau/2}$. Letting $v = \eps \delta / 300$, this gives
    \begin{equation}
    \label{eeqn1}
    \Pr{\mathcal E_y} \leq      F(N a_y,  N v)  \leq  2  (1 + v/a_y)^{-N v/2}
    \end{equation}
    
The map $x \mapsto (1 + v/x)^{-N v/2}$ has second derivative $\frac{N v^2 (v (N v-2)-4 x) \left(\frac{v+x}{x}\right)^{-N v/2}}{4 x^2 (v+x)^2}$    which is nonnegative for $x \leq (N v - 2) v / 4 \leq \delta ( \frac{25}{9} \log(\frac{10 T}{\delta \gamma}) - \frac{\eps}{600} )$. In particular,  it is concave-up for $x < \delta$. So for $a_y \leq \delta$, we can upper-bound the RHS of (\ref{eeqn1}) by its secant line from $0$ to $\delta$, i.e.
   $$
   F(N a_y, N v) \leq \frac{2 a_y}{\delta} (1 + v/\delta)^{-N v/2} \leq \frac{2 a_y}{\delta}  (1 + \eps/300)^{-\frac{5000}{3 \eps} \log \frac{10 T}{\delta \gamma}}
   $$
   Simple calculations show that $(1 + \eps/300)^{-\frac{5000}{3 \eps}} \leq 0.004$, so this is at most $\frac{2 a_y}{\delta} \cdot 0.004^{ \log \frac{10 T}{\delta \gamma}} \leq \frac{a_y  \gamma}{4 T}$. 
   
   Accordingly, we have shown in both cases that $\Pr{\mathcal E_y} \leq \frac{a_y \gamma}{4 T}$. By a union bound over all such $y$, the total probability that (A2) fails in iteration $t$ is at most
   \[
   \sum_y \Pr{ \mathcal E_y } \leq \sum_y \frac{a_y \gamma}{4 T} = \frac{\gamma}{4 T} \sum_y \mu_{\beta_t}(y) = \frac{\gamma}{4 T}.
   \]
The result follows by a union bound over the $T$ iterations.
 \end{proof}

Overall, the total failure probability is at most $\gamma/4$ (from Line 1) plus  $\gamma/2$ (from Proposition~\ref{rlistprop}) plus $\gamma/4$ (from Proposition~\ref{all-good-prop}).  This concludes the proof of Theorem~\ref{pcoefalg1thm} and the first part of Theorem~\ref{th:main:three1}.

\section{Solving $\Pcoef{}$ for integer-valued Gibbs distributions}
\label{sec:pratio2}

The integer-setting algorithms hinge on a data structure called the \emph{covering schedule}. Formally, we define a covering schedule to be a sequence of the form 
$$
\mathcal I = (\beta_0, w_0, k_1, \beta_1, w_1, k_2, \dots, \beta_{t-1}, w_{t-1}, k_t, \beta_t, w_t)
$$
which satisfies the following  constraints: \\
 (a)~$\betamin=\beta_0 < \ldots < \beta_t=\betamax$; \\
 (b)~$k_1 < k_2 < \dots < k_t$ with $k_1, \dots, k_t \in \mathcal H$;\\
(c) $w_i \in [0,1]$ for $i = 0, \dots, t$.

Note that $t \le n+1$.  We say that $\mathcal I$ is \emph{proper} if for all $i = 1, \dots, t$ it satisfies 
$$
\mu_{\beta_{i-1}}(k_i) \geq w_{i-1} \text{ and } \mu_{\beta_i}(k_i) \geq w_i.
$$

We define the inverse weight of $\mathcal I$, denoted ${\tt InvWt}(\mathcal I)$ by $$
{\tt InvWt}(\mathcal I) =\sum_{i=0}^t  \frac{1}{w_i}.
$$

It is quite involved to actually generate such a covering schedule, so we defer the technical details to Section~\ref{sec:buildschedule} where we show the following:
\begin{theorem}\label{th:schedule-existenceg}
Procedure ${\tt FindCoveringSchedule}(\gamma)$ produces a covering schedule $\mathcal I$, which is proper with probability at least $1 - \gamma$.

\noindent In general integer setting, it has cost  $O(n (\log^3 n +  \log n \log \tfrac{1}{\gamma} + \log q))$ and gives ${\tt InvWt}(\calI)\le O( n \log n )$.

\noindent  In the log-concave setting, it has cost $O(n(\log^2 n + \log \tfrac{1}{\gamma} + \log q))$ and gives ${\tt InvWt}(\calI)\le O( n  )$.
\end{theorem}

For the moment, let us suppose we are given such a covering schedule. There are then two main steps to solve $\Pcoef{\delta, \eps}$. First,  we use a telescoping-product calculation similar to the  Paired Product Estimator to estimate the values $Q(\beta_i)$ for $i = 0, \dots, t$. Next, we use these estimates $\hat Q(\beta_i)$ to interpolate the counts $c_i$. 

Via Theorem~\ref{th:main:one-restate2},  one may use the solution for $\Pcoef{\delta, \eps}$ with $\delta = 1/n$ to also solve  $\PratioAll{}$.

  We now describe these steps in detail. First, we develop the algorithm {\tt PratioCoveringSchedule} to estimate the values $Q(\beta_i)$ in the covering schedule.

\begin{algorithm}[H]
\DontPrintSemicolon
\If{$n \cdot {\tt InvWt(\mathcal I)} < q \log n$}{
\textbf{for $i = 1, \dots t$} form random vars $X_i \sim \text{Bernoulli}(\mu_{\beta_{i-1}}(k_i))$ and $Y_i \sim \text{Bernoulli}(\mu_{\beta_i}(k_i))$ \\
set $\hat X^{\tt prod} \leftarrow {\tt EstimateProducts}(X, {\tt InvWt(\mathcal I)} \cdot 4/\eps^2, \gamma/2)$ \\
set $\hat Y^{\tt prod} \leftarrow {\tt EstimateProducts}(Y, {\tt InvWt(\mathcal I)} \cdot 4/\eps^2, \gamma/2)$ \\
\textbf{for $i=0, \dots, t$ do} set $\hat Q(\beta_i) = \exp \bigl( \sum_{j=1}^i (\beta_j - \beta_{j-1}) k_j \bigr) \cdot \hat X^{\tt prod}_i / \hat Y^{\tt prod}_i$ \\
} \Else {
call $\mathcal D \leftarrow  {\tt PratioContinuous}(\eps, \gamma)$ \\
\textbf{for $i = 0, \dots, t$ do} set $\hat Q(\beta_i) = \hat Q(\beta_i \mid \mathcal D)$
}
\caption{Algorithm ${\tt PratioCoveringSchedule}(\mathcal I, \eps, \gamma)$ for a covering schedule $\mathcal I$. \label{alg:pratiodiscrete}}
\end{algorithm}

\begin{theorem}
The algorithm ${\tt PratioCoveringSchedule}(\mathcal I, \eps, \gamma)$ has cost $O\bigl( \frac{ \min \{ n W, q \log n, q^2 \} }{\eps^2}   \log \tfrac{1}{\gamma} \bigr)$ for $W = {\tt InvWt}(\mathcal I)$.  If $\mathcal I$ is proper, then with probability at least $1 - \gamma$ it satisfies $$
| \log \hat Q(\beta_i) - \log Q(\beta_i)| \leq \eps \qquad \qquad \text{for all $i = 0, \dots, t$.}
$$
 (When this holds, we say that the call to {\tt PratioCoveringSchedule} is \emph{good}).
\end{theorem}
\begin{proof}
The cost and correctness are clear where $n W \geq q \log n$. So suppose $n W < q \log n$.

Since $t \leq n+1$, Lines $3$ and $4$ have cost $O( \frac{n W}{\eps^2} \log \tfrac{1}{\gamma})$.
Assuming $\mathcal I$ is proper, we have  $\Vrel(X_i) = \frac{1}{\mu_{\beta_{i-1}}(k_i)} - 1 \leq \frac{1}{w_{i-1}}$ for all $i$, so $\sum_i \Vrel(X_i) \leq W$. Likewise $\sum_i \Vrel(Y_i) \leq W$. So with probability at least $1 - \gamma$, all estimates $\hat X^{\tt prod}_i, \hat Y^{\tt prod}_i$ are within $e^{\pm \eps/2}$ of $\prod_{j=1}^i \E[X_j], \prod_{j=1}^i \E[Y_j]$ respectively.    Observe that
$$
\frac{ \E[ \prod_{j=1}^{i} X_j ] }{ \E[ \prod_{j=1}^{i} Y_j ]}= \prod_{j=1}^{i-1} \frac{  \mu_{\beta_{j-1}}(k_j) }{ \mu_{\beta_{j}}(k_j) } = \prod_j e^{(\beta_{j-1}  - \beta_j) k_j } \frac{Z(\beta_j)}{Z(\beta_{j-1})}  = Q(\beta_i) \exp \Bigl( \sum_{j=1}^i (\beta_{j-1}  - \beta_j) k_j \Bigr)
$$
so in that case, the values $\hat Q(\beta_i)$ are also within $e^{\pm \eps}$ of $Q(\beta_i)$ as required.
\end{proof}

We next turn to using these estimates to finish solving the problem  $\Pcoef{\delta, \eps}$. We develop two separate algorithms: one for the general integer setting, and a second optimized for the log-concave setting. 

\subsection{Solving $\Pcoef{\delta, \eps}$ in the general integer setting}
\label{sec:pratio2:2}

\begin{algorithm}[H]
\DontPrintSemicolon
set $\mathcal I = ( \beta_0, w_0, k_1, \dots, k_t, \beta_t, w_t) \leftarrow {\tt FindCoveringSchedule}(\gamma/4)$  \\
set $(\hat Q(\beta_0), \dots, \hat Q(\beta_t)) \leftarrow {\tt PratioCoveringSchedule}(\mathcal I, \eps/9, \gamma/4)$  \\
{\bf for} $i = 0, \dots, t$ {\bf do} let $\hat\mu_{\beta_i}\leftarrow {\tt Sample}(\beta_i; \eps/9,w_i, \frac{\gamma}{10 (n+1)^2}   )$ \\
\For{$j = 0, \dots, n$} {
  set $\alpha \leftarrow  {\tt Balance}(j,  1/4, \frac{\gamma}{4 (n+1)^2})$ \\
  find index $i \in \{0, \dots, t-1 \}$ with $\alpha \in [\beta_i, \beta_{i+1}]$  \\  
  let $\hat \mu_{\alpha} \leftarrow {\tt Sample}(\alpha; \eps/9, \delta/4, \frac{\gamma}{4(n+1)^2})$  \\
\textbf{if $\hat \mu_{\alpha}( k_{i+1} ) \geq \delta/4$ then} set $\hat \pi(j) = \frac{ \hat \mu_{\beta_i}(k_{i+1}) }{\hat\mu_{\alpha}(k_{i+1})} e^{(\alpha - \beta_i) k_{i+1}} \hat Q(\beta_i) \cdot e^{-\alpha j} \hat \mu_{\alpha}(j)$ \\
\textbf{else if $j < k_{i+1}$ \ \  then} set $\hat \pi(j) = \hat Q(\beta_i) e^{-\beta_i j} \hat \mu_{\beta_i}(j)$ \\
\textbf{else if $j \geq k_{i+1}$ \ \ then} set $\hat \pi(j) = \hat Q(\beta_{i+1}) e^{-\beta_{i+1} j} \hat \mu_{\beta_{i+1}}(j)$ 
}
\caption{Solving problem $\Pcoef{\delta,\eps}$ for the general integer setting.  \label{alg:basedonn1}}
\end{algorithm}

\begin{theorem}\label{th:Problem2:general1}
In the integer setting, Algorithm~\ref{alg:basedonn1} solves $\Pcoef{\delta, \eps}$ with cost
$$
O \Bigl(  \frac{ (n/\delta) \log \tfrac{n}{\gamma} + n^2 \log n \log \tfrac{1}{\gamma} }{\eps^2} + n \log q  \Bigr)
$$
\end{theorem}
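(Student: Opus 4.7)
The plan is to verify both the sample complexity and correctness of Algorithm~\ref{alg:basedonn1} by conditioning on a collection of good events. The good events are: (i)~$\mathcal I$ is a proper covering schedule with ${\tt InvWeight}(\mathcal I)=W=O(n\log n)$ (Theorem~\ref{th:schedule-existence}); (ii)~every $\hat Q(\beta_i)$ produced at line 2 is a $(0.01\eps)$-estimate of $Q(\beta_i)$ (Theorem~\ref{pratio-discrete-thm}); (iii)~at line 3, each $\hat\mu_{\beta_i}$ well-estimates the cut-points $\sigma_i^-,\sigma_i^+$ and every $j\in\mathcal H$; (iv)~each binary search at line 5 is good (Theorem~\ref{th:subroutines2x}); and (v)~each $\hat\mu_\alpha$ at line 7 well-estimates both $k$ and $j$. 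With the failure probabilities built into each call and a union bound over at most $O(n^2)$ potential failures (the algorithm chose the sampling tolerances precisely with this in mind), the overall failure probability is at most $\gamma$.

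For the cost I collect contributions by line: line 1 contributes $O(n\log n(\log^2 n+\log\tfrac{1}{\gamma})+n\log q)$ by Theorem~\ref{th:schedule-existence} with $\rho=O(\log n)$; line 2 contributes $O(\tfrac{nW}{\eps^2}\log\tfrac{1}{\gamma})=O(\tfrac{n^2\log n}{\eps^2}\log\tfrac{1}{\gamma})$ by Theorem~\ref{pratio-discrete-thm}; line 3 uses $\sum_i O(\tfrac{\log(n/\gamma)}{w_i\eps^2})=O(\tfrac{W\log(n/\gamma)}{\eps^2})$ samples; and the $n$ invocations of lines 5 and 7 contribute $O(n\log\tfrac{nq}{\gamma})$ and $O(\tfrac{n\log(n/\gamma)}{\delta\eps^2})$ respectively. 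Summing these and absorbing subleading terms yields the stated bound.

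Correctness then reduces to verifying the three preconditions (A1)--(A3) of Lemma~\ref{general-pcoef-lemma} in each of the three branches that invokes {\tt EstimatePi}. In the branch $\hat\mu_\alpha(k)\geq\delta$, Proposition~\ref{binarysearch-delta-thm} applied to $\alpha\in\Lambda_{1/4}(\betamin,\betamax,j)$ gives $\mu_\alpha(j)\geq\Delta(j)/4$, which yields (A2) with $\nu=1/4$; (A1) follows because $Q(\alpha)/Q(\beta_i)=\tfrac{\mu_{\beta_i}(k)}{\mu_\alpha(k)}\,e^{(\alpha-\beta_i)k}$ and all constituent estimates are multiplicatively accurate (using $\hat\mu_\alpha(k)\geq\delta$ on one side and properness $\mu_{\beta_i}(k)\geq w_i$ on the other); (A3) is the well-estimation of $j$ at $\alpha$ from line 7. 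In the two ``fallback'' branches, the key inequality transfers mass from $\alpha$ to the nearest schedule endpoint via Eq.~\eqref{prop:basic}: for $j\geq k$ and $\alpha\leq\beta_{i+1}$, the identity gives
\[
\mu_{\beta_{i+1}}(j)\;\geq\;\frac{\mu_\alpha(j)\,\mu_{\beta_{i+1}}(k)}{\mu_\alpha(k)}\;\geq\;\frac{(\Delta(j)/4)\cdot w_{i+1}}{2\delta}\;=\;\frac{w_{i+1}}{8\delta}\,\Delta(j),
\]
using $\mu_\alpha(k)\leq 2\delta$ (from well-estimation of $k$ and the branch condition $\hat\mu_\alpha(k)<\delta$), properness $\mu_{\beta_{i+1}}(k)\geq w_{i+1}$, and $\mu_\alpha(j)\geq\Delta(j)/4$. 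This matches (A2) exactly with $\nu=w_{i+1}/(8\delta)$ as passed to {\tt EstimatePi}; (A3) follows from line 3's well-estimation of $j$ at $\beta_{i+1}$ together with a simple comparison of the $p_\circ=w_{i+1}$ sampled in line 3 against the $\nu\delta=w_{i+1}/8$ required by the lemma. The case $j<k$ is symmetric.

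The main obstacle is this last transfer. When the covering point $k$ has insufficient mass at $\alpha$ to estimate the ratio $Q(\alpha)/Q(\beta_i)$ reliably, the algorithm must pivot to the nearest schedule endpoint, and one must certify that $\mu_{\beta_{i\pm 1}}(j)$ remains comparable to $\Delta(j)$ up to a precisely calibrated factor. The scaling $\nu=w_{i\pm 1}/(8\delta)$ is the exact calibration permitted by Eq.~\eqref{prop:basic} applied to the pairs $(\alpha,\beta_{i\pm 1})$ and $(k,j)$; it ensures that the error bar $u(j)$ degrades smoothly as $\pi(j)\,(1+\delta/\Delta(j))$ in the regime $\Delta(j)<\delta$, which is exactly what the specification of $\Pcoef{\delta,\eps}$ demands.
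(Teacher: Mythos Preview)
Your proposal is correct and follows essentially the same approach as the paper: you sum the costs line by line (with the dominant contributions coming from lines 2 and 7), condition on the natural collection of good events, and verify (A1)--(A3) of Lemma~\ref{general-pcoef-lemma} in the two cases $\hat\mu_\alpha(k)\geq\delta$ and $\hat\mu_\alpha(k)<\delta$, using Proposition~\ref{binarysearch-delta-thm} for the former and the mass-transfer inequality from Eq.~\eqref{prop:basic} for the latter. Your attribution of (A3) in the fallback branch to the well-estimation of $j$ at line~3 (with $p_\circ=w_{i+1}$) is in fact the correct one---the paper's text says ``line~7'' there, which appears to be a typo.
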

\begin{proof}
 Lines  1 and 2 have cost $O( \frac{n^2 \log n \log(1/\gamma)}{\eps^2}+ n \log q)$.  The sampling for $\beta_i$ in Line 3 has cost $O( \frac{\log(n/\gamma)}{w_i \eps^2} )$; summing over $i$ gives $ O( \frac{{\tt InvWt}(\mathcal I)}{\eps^2} \log \tfrac{n}{\gamma}) \leq O( \frac{ n \log  n}{\eps^2}  \log \tfrac{n}{\gamma})$. Line 5 has cost $O(n \log \tfrac{n q}{\gamma})$. Line 7 has cost $O(\frac{n }{\delta \eps^2}  \log \tfrac{n}{\gamma} )$. This shows the bound on complexity. 
  
  For correctness,  assume $\mathcal I$ is proper and  the calls to \texttt{Balance} and {\tt PratioCoveringSchedule} are good.  Also, assume that Lines 3 and 7 well-estimate every value in $\mathcal H$. By specification of subroutines, these events all hold with probability at least $1 - \gamma$.   In this case, we claim that the preconditions of Lemma~\ref{general-pcoef-lemma} hold for all $j \in \mathcal H$. To show this, consider $j \in \mathcal H$, and let $\alpha, i$ denote the corresponding parameters at Lines 5 and 6 respectively, and define $k = k_{i+1}, \beta = \beta_i$ for brevity.  By Proposition~\ref{binarysearch-delta-thm}, we have $\mu_{\alpha}(j) \geq \Delta(j)/4$.  There are two cases:
\begin{itemize}
\item Suppose $\hat \mu_{\alpha}(j) \geq \delta/4$. For (A2), Line 7 well-estimates $j$, so $$
| \hat \mu_{\alpha}(j) - \mu_{\alpha}(j) | \leq \eps/9 \cdot( \mu_{\alpha}(j) + \delta / 4) \leq  \eps/9 \cdot \mu_{\alpha}(j) (1 + \delta / \Delta(j)).$$

   For (A1), observe that $Q(\alpha) = \frac{  \mu_{\beta}(k) }{\mu_{\alpha}(k)} e^{(\alpha - \beta) k}  Q(\beta)$. Line 3 well-estimates $j$ and $\mu_{\beta}(j) \geq w_i$, so $| \log \hat \mu_{\beta}(j) - \log \mu_{\beta}(j) | \leq  \eps/9$.  Line 7 well-estimates $k$ and $\hat \mu_{\alpha}(k) \geq \delta/4$, so $|\log \hat \mu_{\alpha}(k) - \log \mu_{\alpha}(k) | \leq \eps/9$.  Also, $| \log \hat Q(\beta) - \log Q(\beta)| \leq \eps/9$. Overall $| \log \hat Q(\alpha) - \log Q(\alpha) | \leq \eps/3$ where $\hat Q(\alpha) = \frac{ \hat \mu_{\beta}(k_{i+1}) }{\hat\mu_{\alpha}(k_{i+1})} e^{(\alpha - \beta) k_{i+1}} \hat Q(\beta)$.

\item Suppose $\hat \mu_{\alpha}(j) < \delta/4$. Let us assume $j < k$; the case with $j \geq k$ is completely analogous.  (A1) holds since $|\log \hat Q(\beta) - \log Q(\beta)| \leq \eps/9$.  Since Line 7 well-estimates $k$ and $\hat \mu_{\alpha}(k) < \delta/4$ we have $\mu_{\alpha}(k) < e^{\eps/9} \delta/4 \leq \delta / 2$.  We have already seen that $\mu_{\alpha}(j) \geq \Delta(j)/4$ and since $\mathcal I$ is proper, we have $\mu_{\beta}(k) \geq w_{i}$.   So from the Uncrossing Inequality we get $
\mu_{\beta}(j) \geq \frac{ \mu_{\beta}(k)  \mu_{\alpha}(j)  }{\mu_{\alpha}(k) } \geq  \frac{ w_{i} \Delta(j)/4 }{ \delta/2} = \frac{ w_{i} \Delta(j) }{ 2  \delta}.$ Now Line 7 well-estimates $j$  so  \[
|\hat \mu_{\beta}(j) - \mu_{\beta}(j)| \leq \eps/9 \cdot (\mu_{\beta}(j) + w_{i}) \leq  \eps/9 \cdot (\mu_{\beta}(j) + \frac{2 \delta \mu_{\beta}(j)}{\Delta(j)}) \leq 2 \eps/9 \cdot \mu_{\beta}(j) (1 + \frac{\delta}{\Delta(j)}).   \qedhere 
\]
\end{itemize}
\end{proof}

With some simplification of parameters, this gives the second part of Theorem~\ref{th:main:three1}; via Theorem~\ref{th:main:one-restate2}, it also gives the second part of Theorem~\ref{th:main:one}.

\subsection{Solving $\Pcoef{\delta, \eps}$ in the log-concave setting}

\label{sec:pratio2:3}

\begin{algorithm}[H]
\DontPrintSemicolon
set $\mathcal I = ( \beta_0, w_0, k_1, \dots, k_t, \beta_t, w_t) \leftarrow {\tt FindCoveringSchedule}(\gamma/3)$  \\
set $(\hat Q(\beta_0), \dots, \hat Q(\beta_t)) \leftarrow {\tt PratioCoveringSchedule}(\mathcal I, \eps/3, \gamma/3)$  \\

set $w_0 \leftarrow \min\{   w_0, \delta, 1/n \}$ and set $w_t \leftarrow  \min\{ w_t, \delta, 1/n \}$ \\
{\bf for} $i = 0, \dots, t$ {\bf do} let $\hat\mu_{\beta_i}\leftarrow {\tt Sample}(\beta_i; \eps/9,w_i,\frac{\gamma}{3 (n+1)^2} )$ \\

\For{$i=1, \dots t-1$} {
\textbf{for} $j = k_i+1, \dots, k_{i+1}$ \textbf{do} set $\hat \pi(j) = \hat Q(\beta_i) e^{-\beta_i j} \hat \mu_{\beta_i}(j)$ 
}
\textbf{for} $j = 0, \dots, k_1$   \textbf{do} \ \ \ \ \ \ \  \ \ \ \  \ \ \ \negthinspace set $\hat \pi(j) = \hat Q(\betamin) e^{-\betamin j} \hat \mu_{\betamin}(j)$ \\
 \textbf{for} $j = k_t+1, \dots, n$ \textbf{do} \ \ \ \ \ \ \  \ \ \negthinspace set $\hat \pi(j) = \hat Q(\betamax) e^{-\betamax j} \hat \mu_{\betamax}(j)$

\caption{\sloppy Solving $\Pcoef{\delta, \eps}$ in the log-concave  setting.  \label{alg:Problem2:concave}} 
\end{algorithm}

\begin{theorem}
\label{th:Problem2:concave1}
In the log-concave setting, Algorithm~\ref{alg:Problem2:concave} solves $\Pcoef{\delta, \eps}$ with cost
$$
O \Bigl( n \log^2 n+n \log q+  \frac{ \min\{n^2, q \log n\} \log \frac{1}{\gamma} + (n + 1/\delta) \log \frac{n}{\gamma} }{\eps^2} \Bigr)
$$
\end{theorem}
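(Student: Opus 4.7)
The plan is to bound the cost by decomposing Algorithm~\ref{alg:Problem2:concave} into three phases, and to establish correctness by reducing to Lemma~\ref{general-pcoef-lemma} applied once per schedule segment. Since line~3 only shrinks $\delta$, any output certifying $\Pcoef{\delta',\eps}$ with $\delta' \leq \delta$ also certifies $\Pcoef{\delta,\eps}$, so throughout the proof I may assume $\delta \leq \min\{1/n,\,1/{\tt InvWeight}(\mathcal I)\}$; the second inequality is crucial for the interior case of (A2) below.

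For the cost, line~1 contributes $O(n\log^2 n + n\log q + n\log \tfrac{1}{\gamma})$ by Theorem~\ref{th:schedule-existence} with $\rho = e$, and produces a covering schedule $\mathcal I$ with ${\tt InvWeight}(\mathcal I) \leq O(n)$. Line~2 contributes $O(\min\{n \cdot {\tt InvWeight}(\mathcal I),\, q\log n\}\log \tfrac{1}{\gamma}/\eps^2) = O(\min\{n^2, q\log n\}\log \tfrac{1}{\gamma}/\eps^2)$ by Proposition~\ref{propcoverschedule0}. The $i$-th iteration of line~6 has cost $O(\log \tfrac{n}{\gamma}/(\eps^2 w_i))$, and because line~4 only shrinks $w_0$ and $w_t$ to at most $\delta/2$ each, the summed cost is $O(({\tt InvWeight}(\mathcal I) + 1/\delta)\log \tfrac{n}{\gamma}/\eps^2) = O((n + 1/\delta)\log \tfrac{n}{\gamma}/\eps^2)$. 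Adding the three contributions gives the bound in the theorem.

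For correctness, I would union-bound over the events that (i)~$\mathcal I$ is proper, (ii)~every $\hat Q(\beta_i)$ from line~2 is a $0.1\eps$-estimate of $Q(\beta_i)$, and (iii)~every call to {\tt Sample} at line~6 well-estimates each $k \in \mathcal H$. The failure-probability budget in the algorithm makes these events jointly hold with probability at least $1-\gamma$. Assuming them, I would verify the three preconditions of Lemma~\ref{general-pcoef-lemma} for each $k \in (\sigma_i^-, \sigma_i^+] \cap \mathcal H$, with $\alpha = \beta_i$ and $\nu = 1/4$. Condition~(A1) is immediate from~(ii), while~(A3) is a short algebraic manipulation of the well-estimate bound using either $w_i \leq \delta/2$ (at the endpoints) or $\mu_{\beta_i}(k) \geq w_i$ (in the interior).

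The crux is~(A2), namely $\mu_{\beta_i}(k) \geq \tfrac{1}{4}\min\{\Delta(k), 2\delta\}$. For interior $1 \leq i \leq t-1$, log-concavity of $c_k$ gives $\mu_{\beta_i}(k) \geq \min\{\mu_{\beta_i}(\sigma_i^-), \mu_{\beta_i}(\sigma_i^+)\} \geq w_i$ by properness, and then $w_i \geq 1/{\tt InvWeight}(\mathcal I) \geq \delta$ by line~3, which suffices. The trickier endpoint segments $i \in \{0,t\}$ are the main obstacle, since log-concavity in $k$ alone does not pin $\mu_{\betamin}(k)$ below on the half-infinite segment $\sigma_0$. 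Here, the plan is to leverage the monotonicity identity~\eqref{prop:basic}: for $k \leq \sigma_0^+$ and $y \geq \sigma_0^+$, applying~\eqref{prop:basic} with $\alpha = \betamin$ yields $\mu_\beta(k)/\mu_{\betamin}(k) \leq \mu_\beta(y)/\mu_{\betamin}(y)$, and averaging $y$ against $\mu_{\betamin}$ on $[\sigma_0^+, n]$ gives $\Delta(k) \leq \mu_{\betamin}(k)/\mu_{\betamin}([\sigma_0^+, n])$. One then either shows $\mu_{\betamin}([\sigma_0^+, n])$ is bounded below by an absolute constant, using the extremality of $\sigma_0$ from Lemma~\ref{lemma:extremal} together with properness, or else exploits the fact that the reduction $w_0 \leq \delta/2$ was applied precisely to make the direct-sampling error bar absorb this case. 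The symmetric argument handles $i = t$, and this endpoint analysis is what ties correctness specifically to the log-concave regime.
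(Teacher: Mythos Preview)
Your cost analysis and your interior-segment correctness argument ($1 \le i \le t-1$) match the paper's and are correct. The gap is in the endpoint case for condition~(A2).

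Your appeal to Lemma~\ref{lemma:extremal} is not valid here: extremality is an internal invariant of {\tt FindInterval} used while building the pre-schedule; the output of {\tt FindCoveringSchedule} is only guaranteed to be \emph{proper}, and properness alone gives $\mu_{\betamin}(\sigma_0^+) \ge w_0^{\text{orig}}$, which can be as small as $\Theta(1/n)$ --- far from making $\mu_{\betamin}([\sigma_0^+,n])$ a constant. Your fallback, that the update $w_0 \leftarrow \delta/2$ ``absorbs this case,'' handles~(A3) but does nothing for~(A2): the extra sampling budget tightens the error bar on $\hat\mu_{\betamin}(k)$, it does not make $\mu_{\betamin}(k)$ itself large relative to $\Delta(k)$. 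Also, notice that your endpoint plan never actually uses log-concavity, despite your closing sentence.

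The paper's endpoint argument supplies the missing use of log-concavity. Writing $a = \mu_{\betamin}(k)$, log-concavity of $\mu_{\betamin}(\cdot)$ is used to bound the \emph{left} tail, $\mu_{\betamin}([0,k-1]) \le k a$. Your own Eq.~\eqref{prop:basic} step, summed over $[k,n]$ rather than $[\sigma_0^+,n]$, gives $\mu_{\betamin}([k,n]) \le a/\Delta(k)$. Adding these yields $1 \le k a + a/\Delta(k)$, i.e.\ $a \ge \Delta(k)/(1 + k\Delta(k))$. Now the line-3 assumption $\delta \le 1/n$ finishes: if $\Delta(k) \ge \delta$ then $a \ge \delta/(1+n\delta) \ge \delta/2$, and if $\Delta(k) < \delta \le 1/n$ then $a \ge \Delta(k)/2$; either way (A2) holds with $\nu = 1/4$, and the $w_0 \leftarrow \delta/2$ update then closes~(A3). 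So the ingredient you are missing is precisely a second invocation of log-concavity to control the left tail at the endpoints.
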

\begin{proof}
Note that ${\tt InvWt}(\mathcal I) \leq O(n)$. So Line 1 has cost $O(n(\log^2 n+\log q+ \log\frac{1}{\gamma}))$, and Line 2 has cost $O( \frac{ \min\{n^2, q \log n\} }{\eps^2} \log \frac{1}{\gamma}  )$. Because of the modification step at Line 3, the total cost of Line 4 is $O \bigl( ( \frac{1}{\delta} + n + \sum_{i=0}^t \frac{1}{w_i}) \eps^{-2} \log \tfrac{n}{\gamma} \bigr) = O( \frac{ (n + 1/\delta) }{\eps^2} \log \frac{n}{\gamma})$. These add to the stated complexity.

For correctness, suppose $\mathcal I$ is proper and the call to {\tt PratioCoveringSchedule} is good and every value in $\mathcal H$ is well-estimated by every iteration of Line 4. By specification of the subroutines, these events hold with probability at least $1 - \gamma$.  We claim that the preconditions of Lemma~\ref{general-pcoef-lemma} then hold for any $j \in \mathcal H$.  There are three cases to consider.

\begin{itemize}
\item Suppose $j \in (k_i, k_{i+1}]$ is estimated at Line 6. Let $a = \mu_{\beta_i}(j), \hat a = \hat \mu_{\beta_i}(j)$.      Condition (A1) holds since $|\log \hat Q(\beta_i)  - \log Q(\beta_i)| \leq \eps/3$.   Note that $a \ge \min\{\mu_{\beta_i}(k_i), \mu_{\beta_i}(k_{i+1}) \} \geq w_i$, where the first inequality follows from log-concavity of counts and the second inequality holds from properness of $\mathcal I$. Since Line 4 well-estimates $j$, this implies $|a - \hat a| \leq \eps/6 \cdot (a + w_i) \leq \eps/3 \cdot a$, giving (A2). 

\item Suppose $j \leq k_1$ is estimated at Line 7,  and let $a = \mu_{\betamin}(j), \hat a = \hat \mu_{\betamin}(j)$. Again, condition (A1) is immediate.  Since Line 4 well-estimates $j$, we have $| \hat a - a | \leq  \eps/6 \cdot ( a + w_0' )$ where $w_0' = \min\{w_0, \delta, 1/n \}$ is the modified value and $w_0$ is the original value from $\mathcal I$. To show (A2), it thus suffices to show that $a \geq w_0'$.

 If $j$ is on the ``decreasing'' slope of the log-concave distribution $\mu_{\betamin}$, then properness of $\mathcal I$ implies that $a \geq \mu_{\betamin}(k_1) \geq w_0$ which immediately gives (A2).  So suppose $j$ is on the increasing slope of $\mu_{\betamin}$, and hence $\mu_{\betamin}(\ell)\le \mu_{\betamin}(j)=a$ for all $ \ell \leq j$. So $\mu_{\betamin}[0, j-1]\leq j a$.   
 
 Let $\Delta(j) = \mu_{\alpha}(j)$ for $\alpha \in [\betamin, \betamax]$.  For $\ell \geq j$, the Uncrossing Inequality gives  $\mu_{\betamin}(\ell) \le \frac{\mu_{\alpha}(\ell)  \mu_{\betamin}(j)}{\mu_{\alpha}(j)} =  \frac{ \mu_{\alpha}(\ell) a }{\Delta(j)}$,  so $\mu_{\betamin}[j, n]\le \frac{a}{\Delta(j)} \cdot\mu_{\alpha}[j,n]\le \frac{a}{\Delta(j)}$.   So $
 1=\mu_{\betamin}[0,j-1]+\mu_{\betamin}[j,n] \leq j a + \frac{a}{\Delta(j)}$,
 implying $a \geq \frac{\Delta(j)}{1 + j \Delta(j)}$. 

So $a (1 + \delta/\Delta(j)) \geq  \frac{\delta + \Delta(j)}{ 1 + j \Delta(j)} \geq \min \{ \delta, 1/j \} \geq w_0'$. This establishes (A2). 
 
\item Suppose $j > k_{t+1}$ is estimated at Line 8. This is completely symmetric to the previous case. \qedhere
\end{itemize} 
\end{proof}

Again, with some simplification of parameters, this gives the third part of Theorem~\ref{th:main:three1}; via Theorem~\ref{th:main:one-restate2}, it also gives the third part of Theorem~\ref{th:main:one}.

\section{Constructing a covering schedule}\label{sec:alg:schedule}
\label{sec:buildschedule}
We will show the following more precise bound on the weight of the schedule. 
\begin{theorem}\label{th:schedule-existence}
Let $a > 4$ be an arbitrary constant and define the following parameter\footnote{As motivation for the definition of $\rho$, see Lemma~\ref{lemma:logconcave:harmonic}.}
$$
\rho \eqdef \begin{cases}
 1+\log (n+1) & \mbox{in the general integer setting} \\
 e & \mbox{in the log-concave  setting} 
\end{cases}
$$

 In either the general integer or log-concave setting, the procedure ${\tt FindCoveringSchedule}(\gamma)$ produces a covering schedule $\mathcal I$ with
 ${\tt InvWt}(\calI)\le a (n+1) \rho$, which is proper with probability at least $1 - \gamma$. It has cost $O(n\rho(\log^2 n+ \log\frac{1}{\gamma})+n\log q)$.
\end{theorem}

This will immediately imply Theorem~\ref{th:schedule-existenceg}. The construction has two parts. First, in Section~\ref{sec:construct-pre}, we build an object with relaxed constraints called a \emph{preschedule}. Then in Section~\ref{sec:convert}, we convert this into a schedule.

\subsection{Constructing a preschedule} 
\label{sec:construct-pre}
We introduce basic terminology and definitions.
 
 An $\mathcal H$-interval is a discrete set of points $\{ \sigma^-, \sigma^-+1, \dots, \sigma^+-1, \sigma^+ \}$, for integers $0 \leq \sigma^- \leq \sigma^+ \leq n$. We also write this more compactly as $\sigma = [\sigma^-, \sigma^+]$; note that the set $\sigma$ has cardinality  $|\sigma| =  \sigma^+ - \sigma^- + 1$.
  
A {\em segment} is a tuple $\theta = (\beta,\sigma)$ where $\beta\in[\betamin,\betamax]$, and $\sigma$ is an $\mathcal H$-interval.  We say $\theta$ is {\em $\phi$-proper} for parameter $\phi$ (or just proper if $\phi$ is understood) if it satisfies the following two properties:
\begin{itemize}
\item Either $\beta = \betamin$ or $\mu_{\beta}(\sigma^-) \geq \phi/|\sigma|$
\item Either $\beta = \betamax$ or $\mu_{\beta}(\sigma^+) \geq \phi/|\sigma|$
\end{itemize}

A \emph{preschedule} is a sequence of segments $\calJ=((\beta_0,\sigma_0),\ldots,(\beta_t,\sigma_t))$ with  the following properties:
\begin{itemize}
\item[{({\tt I0})}] $\sigma_{i+1}^- \leq \sigma_i^+$ for $i = 0, \dots, t-1$.
\item[{({\tt I1})}] $\betamin = \beta_0\le\ldots\le\beta_t = \betamax$. 
\item[{({\tt I2})}] $0 = \sigma^-_0\le \ldots \le\sigma^-_t\le n$ and $0\le\sigma^+_0\le \ldots \le\sigma^+_t = n$
\end{itemize}
We say that $\cal J$ is {\em $\phi$-proper} if all segments $\theta_i$ are $\phi$-proper. 

\medskip

Let us fix constants $\tau\in(0,\frac{1}{2})$, $\lambda > 1$, and define parameter $$
\phi= \frac{\tau}{\lambda^3 \rho}.$$
 Thus, $\phi=\Theta(\frac 1{\log n})$ in the general setting and $\phi=\Theta(1)$ in the log-concave setting.  The main idea of the algorithm is to maintain a sequence of proper segments satisfying properties ({\tt I1}) and ({\tt I2}),  and grow it until it satisfies ({\tt I0}). 
Details are provided below:

\begin{algorithm}[H]
\DontPrintSemicolon
call $\theta_{\min}\,\leftarrow {\tt FindSegment}(\betamin,\,[0,0],[0,n])$ and  $\theta_{\max}\leftarrow {\tt FindSegment}(\betamax,[0,n],[n,n])$ \\
initialize $\mathcal J$ to contain the two segments $\theta_{\min}, \theta_{\max}$ \\
\While{$\mathcal J$ does not satisfy {\tt(I0)}} {
pick arbitrary consecutive segments $\theta_{\lft} = (\beta_{\lft},\sigma_{\lft})$ and $\theta_{\rgt} = (\beta_{\rgt},\sigma_{\rgt})$ in $\mathcal J$ with $\sigma_{\lft}^+ < \sigma_{\rgt}^-$. \\
let $M = \big \lfloor \frac{\sigma_{\lft}^+ + \sigma_{\rgt}^-}{2} \big \rfloor + \frac{1}{2}$. \\
	call $\beta\leftarrow{\tt Balance}(M, \tau, \frac{1}{4 n})$ \\
	call
	$
		\theta \leftarrow\begin{cases}		
			{\tt FindSegment}(\beta,  \ \ \ \ \ \thinspace [\sigma^-_{\lft},M - \tfrac12], \thinspace [M + \tfrac12,\sigma^+_{\rgt}]) & \mbox{if } \beta_{\lft}<\beta<\beta_{\rgt} \\
			{\tt FindSegment}(\beta_{\lft}, \ [\sigma^-_{\lft}, \sigma^-_{\lft}], \ \ \thinspace  [M + \tfrac12,\sigma^+_{\rgt}] )& \mbox{if }\beta \leq \beta_{\lft} \\
			{\tt FindSegment}(\beta_{\rgt},[\sigma^-_{\lft},M - \tfrac12], [\sigma^+_{\rgt}, \sigma^+_{\rgt}] ) & \mbox{if }\beta \geq \beta_{\rgt} 
		\end{cases}
	$ \\
	insert $\theta$ into $\mathcal J$ between $\theta_{\lft}$ and $\theta_{\rgt}$
}
\textbf{return} $\mathcal J$

\BlankLine

\BlankLine
\nonl \ \ \  \textbf{Subroutine ${\tt FindSegment}(\beta,[h_{\lft}, a_{\lft}],[a_{\rgt}, h_{\rgt}])$: } \\
\BlankLine

\DontPrintSemicolon
let $\hat\mu_\beta\leftarrow{\tt Sample}(\beta; \tfrac{1}{2} \log \lambda, \frac{\phi}{h_{\rgt} - h_{\lft} + 1}, \frac{1}{4(n+2)^2})$. \\
\vspace{0.05in}
\textbf{for $i = h_{\lft}$ to $h_{\rgt}$} set $\Phi(i)=\begin{cases}
\hat \mu_\beta(i) & \mbox{if }i \notin \{h_{\lft},h_{\rgt}\} \\
\lambda \cdot \hat\mu_\beta(i) & \mbox{if }i\in \{h_{\lft},h_{\rgt}\} \\
\end{cases}
\vspace{0.05in}
$ \\
set $k^- = \negthickspace \displaystyle \argmax_{i \in [h_{\lft}, a_{\lft}]}  (a_{\lft} - i + 1)\Phi(i)$  and $k^+ = \negthickspace \displaystyle \argmax_{i \in [a_{\rgt}, h_{\rgt}]} (i -  a_{\rgt} + 1) \Phi(i)$ \\
\textbf{return} segment $\theta = (\beta, \sigma)$ where $\sigma = [k^-, k^+]$

\caption{Computing an initial preschedule. \label{alg:schedule}}
\end{algorithm}

At each iteration, we say that $\mathcal J$ has a \emph{gap} $(\sigma_i^+, \sigma_{i+1}^-)$ whenever $\sigma_i^+ < \sigma_{i+1}^-$ strictly. The interval $G =  (\sigma^+_{{\lft}} , \sigma^-_{{\rgt}} )$ is one of these gaps. From the new segment produced by {\tt FindSegment}, this gap $G$ gets replaced at the next iteration by two new gaps $G', G''$. Overall, the set of gaps over all the iterations forms a laminar family of intervals. Between gaps, the ``filled'' regions have strictly increasing values of $\beta$.  Thus at Line 4 we have $\beta_{\tt left} < \beta_{\tt right}$ strictly.

The loop in Lines 3 -- 8 is executed at most $n$ times, since each time it covers a new half-integer value $M$. So there are at most $n+2$ calls to ${\tt FindSegment}$ and at most $n$ calls to ${\tt Balance}$.

\begin{proposition} \label{prop:schedule:post-analysis}
Algorithm~\ref{alg:schedule} has cost $O(n \log q + n \rho \log^2 n)$.
\end{proposition}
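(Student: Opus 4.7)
The plan is to decompose the cost of Algorithm~\ref{alg:schedule} into three parts and bound each. Line 1 consists of two calls to \texttt{FindInterval} each with search span at most $n+1$, contributing $O(n\rho\log n)$ by Theorem~\ref{th:subroutines}. The loop in lines 3--8 executes at most $n$ times, since each iteration covers the chosen half-integer $\ell$ (which was previously uncovered). Each loop iteration issues a call to \texttt{BinarySearch} with accuracy $\gamma' = 1/(4n)$, costing $O(\log(nq))$ by Theorem~\ref{th:subroutines2x}, so the total \texttt{BinarySearch} cost is $O(n\log n + n\log q)$. Each iteration also makes one \texttt{FindInterval} call, with cost $O(\rho\log n \cdot (\sigma^+_{\tt right} - \sigma^-_{\tt left} + 1))$ by Theorem~\ref{th:subroutines}.

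The crux is therefore to show $\sum_{\text{iter}} (\sigma^+_{\tt right}(G) - \sigma^-_{\tt left}(G) + 1) = O(n\log n)$. I would view iterations as internal nodes of a binary tree $T$ of gap splits, with each processed gap $P$ yielding up to two child gaps $G_L(P), G_R(P)$. Since $\ell$ is chosen as the median of the gap's half-integers, each child has at most roughly half the half-integer count of its parent, so $T$ has depth $D = O(\log n)$. For each gap $G$, the search span decomposes as $\spn(\sigma_{\tt left}(G)) + |G| + \spn(\sigma_{\tt right}(G)) - 1$, where $|G|$ is the integer count of the gap.

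I would then prove two per-depth bounds that together give the result. First, at any depth $d$ the gaps are disjoint subintervals of $\calH$, so $\sum_{G \text{ at depth } d} |G| \le n+1$. Second, I claim the segments $\{\sigma_{\tt left}(G) : G \text{ at depth } d\}$ are pairwise \emph{disjoint} in $\mathbb{Z}$, giving $\sum_G \spn(\sigma_{\tt left}(G)) \le n+2$ per depth. The key observation here is that $\sigma_{\tt left}(G)$ is determined at $G$'s creation (when the parent of $G$ is processed) and remains the immediate left neighbor of $G$ until $G$ itself is processed, because no new segment can be inserted between $G$ and its left neighbor without first processing $G$. Then, for consecutive depth-$d$ gaps $G_i < G_{i+1}$ in position, the fact that gap $G_i$ (which contains no segments) lies between $\sigma_{\tt left}(G_i)$ and $\sigma_{\tt left}(G_{i+1})$ together with invariant (I2) forces $\sigma^+_{\tt left}(G_i) < \sigma^-_{\tt left}(G_{i+1})$, establishing disjointness. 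The analogous claim holds for $\sigma_{\tt right}$. Summing these per-depth $O(n)$ bounds over $D = O(\log n)$ depths yields $\sum_{\text{iter}} \text{search span} = O(n\log n)$, so the total \texttt{FindInterval} cost in the loop is $O(n\rho\log^2 n)$. Combining all contributions gives the claimed $O(n\log q + n\rho\log^2 n)$ bound. The main obstacle in this argument is the disjointness claim, which requires carefully justifying the persistence of $\sigma_{\tt left}(G), \sigma_{\tt right}(G)$ across iterations and exploiting the sorted-order invariant (I2) to promote the non-overlap forced by an intervening gap into a pointwise disjointness statement.
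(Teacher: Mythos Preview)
Your overall strategy is sound and reaches the right bound, but it is a genuinely different argument from the paper's, and your justification of the key disjointness step is incomplete.

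\textbf{Comparison with the paper.} The paper does not organize the iterations into a depth-$O(\log n)$ tree and then bound $\sum_{\text{depth }d}\spn(\sigma_{\tt left})$, $\sum_{\text{depth }d}\spn(\sigma_{\tt right})$ via disjointness. Instead it works per element: for each $k\in\calH$ it defines $I^-_k=\{i:k\in\calA_i,\ \ell_i<k\}$ (and symmetrically $I^+_k$), and shows that for $i<j$ in $I^-_k$ one has $\ell_j\in L_i$, hence $L_i\cap L_j\neq\varnothing$, hence $|L_j|\le\tfrac12|L_i|$ by the median split. This gives $|I^-_k|=O(\log n)$ directly, and $\sum_i|\calA_i|=\sum_k|I^-_k\cup I^+_k|=O(n\log n)$. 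This avoids any disjointness claim; the halving comes from a containment relation between gaps, not from a per-level packing argument.

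\textbf{The gap in your disjointness argument.} Your claim that $\{\sigma_{\tt left}(G):G\text{ at depth }d\}$ are pairwise disjoint is in fact true, but your justification does not establish it. You argue that ``gap $G_i$ (which contains no segments) lies between $\sigma_{\tt left}(G_i)$ and $\sigma_{\tt left}(G_{i+1})$,'' implicitly using that $\sigma_{\tt left}(G_{i+1})$ is present in $\mathcal J$ at a moment when $G_i$'s half-integers are still uncovered. But the algorithm processes gaps in \emph{arbitrary} order (line~4), so $G_i$ may be processed---and its region filled by descendants' segments---before $\sigma_{\tt left}(G_{i+1})$ is ever created. In that case your ``$G_i$ is a gap, so $\sigma_{\tt left}(G_{i+1})$ cannot cover it'' step does not apply. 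Invariant ({\tt I2}) alone cannot rescue this: it only gives monotonicity of the $\sigma^-$ and $\sigma^+$ sequences separately, not the cross-inequality $\sigma^+_{\tt left}(G_i)<\sigma^-_{\tt left}(G_{i+1})$.

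A correct proof of your disjointness claim goes through the LCA. Let $C$ be the least common ancestor of $G_i$ and $G_{i+1}$ in the split tree; then $G_i\subseteq G_L(C)$ and $G_{i+1}\subseteq G_R(C)$. One shows by a straightforward induction down the right subtree of $C$ that $\sigma^-_{\tt left}(G_{i+1})\ge\sigma^-_C$ (because each step replaces $\sigma_{\tt left}$ either by the parent's $\sigma_{\tt left}$ or by the parent's new segment, whose left endpoint is $\ge\sigma^-_{\tt left}(\text{parent})$). On the other side, $\sigma^+_{\tt left}(G_i)$ equals the left boundary of $G_i$, which lies in $G_L(C)=(\sigma^+_{\tt left}(C),\sigma^-_C)$ and hence is $<\sigma^-_C$. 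Combining gives $\sigma^+_{\tt left}(G_i)<\sigma^-_C\le\sigma^-_{\tt left}(G_{i+1})$. With this fix your per-depth packing argument goes through and yields the same $O(n\log n)$ bound; it is a legitimate alternative to the paper's per-element halving argument, though the latter is shorter.
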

\begin{proof}
 By Theorem~\ref{th:subroutines2x}, the ${\tt Balance}$ subroutines have cost 
$O(n \log (n q))$. The two calls to {\tt FindSegment} at Line 1 have cost $O( n \rho \log n)$. Let $\calJ_i$ be the sequence and $M_i,\sigma_{{\lft}, i}, \sigma_{{\rgt},i}$ be the variables at 
the $i^{\text{th}}$ iteration and let $A_i=[\sigma_{{\lft}, i}^-,  \sigma_{{\rgt},i}^+]$ and $G_i = (\sigma_{\lft,i}^+, \sigma_{\rgt,i}^-)$. Observe that the call to ${\tt FindSegment}$ at the $i^{\text{th}}$ iteration
has cost $O(|A_i| \rho \log n  )$. We now show that $\sum_{i} | A_i  | =O(n\log n)$, which will yield the claim about the complexity.

For each  $k\in\calH$ define $I^-_k=\{i \::\:k\in A_i\;\wedge\;M_i<k\}$,  and consider $i,j\in I^-_k$ with $i<j$.   We claim that intervals $G_i, G_j$ overlap. For, suppose $G_j$ comes strictly after $G_i$. Since $G_j$ is always selected from a gap at iteration $j$, it cannot intersect with $\sigma_{\rgt, i}$, and in particular $\sigma_{\lft, j}^+ \geq \sigma_{\rgt, i}^+$. In this case, by definition of $M_j$ and $I_k$, we would have $k > M_j  > \sigma_{\lft, j}^+$ and $k  \leq \sigma_{\rgt, i}^+$, a contradiction.  

On the other hand, suppose $G_j$ comes strictly before $G_i$. Again, since $G_j$ is selected from a gap, it comes strictly before $\sigma_{\lft, i}$ and so $\sigma_{\rgt, j}^- \leq \sigma_{\lft, i}^-$. By Property ({\tt I2}) this implies $\sigma_{\rgt, j}$ comes before $\sigma_{\lft, i}$ and so also $\sigma_{\rgt, j}^+ \leq \sigma_{\lft, i}^+$. Then by definition of $M_i$ and $I_k^-$ we would have $\sigma_{\lft, i}^+ < M_i < k$ and $k \leq \sigma_{\rgt, j}^+$, again a contradiction.

So $G_i$ and $G_j$ overlap. Since $M_i$ is chosen near the median of $G_i$, we must have $|G_j| \leq \tfrac{1}{2} |G_i|$.  As this holds for all pairs $i,j \in I^-_k$, we  conclude that $|I^-_k|\le 1 + \log_2 n$. Similarly,  $|I^+_k| \leq 1 + \log_2 n$
where $I^+_k=\{i\::\:k\in A_i\;\wedge\;M_i>k\}$. It remains to observe that $\sum_{i} |A_i| =\sum_{k\in\calH}|I^-_k\cup I^+_k|$.
\end{proof}

The final result in order to analyze Algorithm~\ref{alg:schedule}, which is technically involved, is the following:
\begin{lemma}
\label{prop:sch1}
With probability at least $1/2$, the final preschedule $\mathcal J$ is proper.
\end{lemma}

In order to show Lemma~\ref{prop:sch1}, we will suppose for the remainder of the section that all calls to {\tt Balance} are good, and that Line 10 in every call to {\tt FindSegment} well-estimates every value $j \in \mathcal H$. By specification of subroutines, these properties hold with property at least $1/2$. 
 
We will show by induction on iteration count that the following two additional invariants are preserved in this case:
 
 \begin{itemize}
\item[{({\tt I3})}] Each segment $\theta$ of $\mathcal J$ is $\phi$-proper.
\item[{({\tt I4})}] Each segment $\theta$ of $\mathcal J$ satisfies two ``extremality'' conditions:
\begin{subequations}
\begin{eqnarray}
\mu_\beta(k)&\le& \lambda \cdot \frac{ |\sigma|}{|\sigma|+(\sigma^--k)}\cdot\mu_\beta(\sigma^-) \qquad\quad \forall k\in\{0,\ldots,\sigma^--1\} \label{eq:extremal:a} \\
\mu_\beta(k)&\le& \lambda \cdot \frac{ |\sigma|}{|\sigma|+(k-\sigma^+)}\cdot\mu_\beta(\sigma^+) \qquad\quad \forall k\in\{\sigma^+\!+\!1,\ldots,n\} \label{eq:extremal:b} 
\end{eqnarray}
\end{subequations}
\end{itemize}

Note that {\tt FindSegment} gives a slight bias to the endpoints $h_{\lft}$ or $h_{\rgt}$, in the definition of $\Phi$ at Line 11. This is needed to preserve the slack factor $\lambda > 1$ in the definition of extremality~\eqref{eq:extremal:a},\eqref{eq:extremal:b}.  Without this bias, the factor would grow uncontrollably as the algorithm progresses. 

\bigskip

Now consider some call to {\tt FindSegment}, either at Line 1 or Line 7. By induction, properties ({\tt I3}) and ({\tt I4}) hold for all segments in $\mathcal J$ up to this point. We must show that the segment $\theta$ returned by {\tt FindSegment} also satisfies these properties. The cases when {\tt FindSegment} is called in Line 1, or in Line 7 when $\beta \in \{ \beta_{\lft}, \beta_{\rgt} \}$, are handled very differently from the main case, which is Line 7 with $\beta \in (\beta_{\lft}, \beta_{\rgt})$ strictly. In the former cases, there is no ``free choice'' for the left-margin $k^-$ or right-margin $k^+$ respectively; for instance, when $\beta = \beta_{\lft}$, then $h_{\lft} = a_{\lft} = \sigma_{\lft}^-$ and our only choice is to set $k^- = \sigma_{\lft}^-$. 

We say the call to {\tt FindSegment} at Line 1 with $\beta = \betamin$, or the call at Line 7 with $\beta = \beta_{\lft}$, is \emph{left-forced}; the call at Line 1 with $\beta = \betamax$, or at Line 7 with $\beta = \beta_{\rgt}$ is \emph{right-forced}. Otherwise it is \emph{left-free} and \emph{right-free} respectively.

\begin{lemma} \label{lemma:extremal}
 \ \\
\noindent (a) If the call is left-free, then  $   (a_{\lft} - i + 1) \mu_{\beta}(i)  \le \lambda (a_{\lft} - h_{\lft} + 1) \mu_{\beta}(h_{\lft})$ for all $i < h_{\lft}$.

\noindent (b) If the call is right-free, then $
 (i - a_{\rgt} + 1) \mu_{\beta}(i) \le  \lambda (h_{\rgt} - a_{\rgt} + 1) \mu_{\beta}(h_{\rgt})$ for all $ i > h_{\rgt}$.
\end{lemma}
\begin{proof}
We only show (a); the proof of (b) is analogous.  If {\tt FindSegment} is called at Line 1  then $h_{\lft} = 0$ and the claim is vacuous. So assume  {\tt FindSegment} is called at Line 7, and consider $i < h_{\lft}$. Since segment $\theta_{\lft} = (\beta_{\lft}, \sigma_{\lft})$ satisfies Eq.~(\ref{eq:extremal:a}) and $h_{\lft}=\sigma_{\lft}^-$,  we have
\begin{equation}
\label{gbbg1}
\mu_{\beta_{\lft}}(i)
\le \lambda \cdot \frac{ |\sigma_{\lft}|}{|\sigma_{\lft}|+(h_{\lft}-i)}\cdot\mu_{\beta_{\lft}}(h_{\lft}).
\end{equation}
Since $\beta > \beta_{\lft}$, the Uncrossing Inequality gives
$
\mu_{\beta_{\lft}}(i) \mu_{\beta}(h_{\lft}) \ge \mu_{\beta_{\lft}}(h_{\lft}) \mu_{\beta}(i)
$.
Combined with Eq.~\eqref{gbbg1}, this yields
\begin{equation*}
\mu_{\beta}(i)
\le \lambda \cdot \frac{ |\sigma_{\lft}|}{|\sigma_{\lft}|+(h_{\lft}-i)}\cdot\mu_{\beta}(h_{\lft}).
\end{equation*}

Finally, since the call is left-free at Line 7, we have $a_{\lft} \ge\sigma_{\lft}^+$, so $|\sigma_{\lft}| \le a_{\lft} + 1 -h_{\lft}$ and
\[
\frac{ |\sigma_{\lft}|}{|\sigma_{\lft}|+(h_{\lft}-i)}
\le
\frac{ a_{\lft} + 1 -h_{\lft}  }{ (a_{\lft} + 1 -h_{\lft})+(h_{\lft}-i)}
=
\frac{ a_{\lft} - h_{\lft}  + 1 }{ a_{\lft} -i + 1}.
\qedhere
\]
\end{proof}

\begin{lemma}\label{lemma:harmonic} 
 \ \\
 (a) If the call is left-free, there is $k \in [h_{\lft}, a_{\lft}]$ with $(a_{\lft} - k + 1)\cdot\mu_\beta(k)\ge \phi  \lambda^{2}$. \\
(b) If the call is right-free, there is $k \in [a_{\rgt}, h_{\rgt}]$ with $(k-a_{\rgt} + 1)\cdot\mu_\beta(k)\ge \phi \lambda^{2}$.
\end{lemma}
\begin{proof}
The two parts are completely analogous, so we only prove (a).  We first observe that
\begin{equation}
\label{mube}
\mu_{\beta} [0, a_{\lft}]  \geq \tau.
\end{equation}

This is trivial if {\tt FindSegment} is called at Line 1 with $a_{\lft} = n$ in which case $\mu_{\beta}[0, a_{\lft}]  = 1$. Otherwise, if {\tt FindSegment} is called at Line 7 and the call is left-free,  we have $\beta \in {\tt BalancingVals}(M, \tau)$ where $M = a_{\lft} + 1/2$.

Now  assume (a) is false i.e.\ $(a_{\lft} + 1-k)\cdot\mu_\beta(k)<\phi \lambda^{2} = \frac{\tau}{\lambda\rho}$ for all $k\in [h_{\lft}, a_{\lft}]$. We will use this to derive a contradiction to Eq.~(\ref{mube}). Under the assumption that (a) is false, we claim the following:
\begin{equation}\label{eq:IGAJSASKASHFKA}
(a_{\lft} + 1 - k) \mu_\beta(k) <  \tau / \rho \qquad\qquad \text{for all } k \leq a_{\lft}
\end{equation}
Indeed, we have already assumed the stronger inequality $(a_{\lft} + 1 - k) \mu_\beta(k)<\frac{\tau}{\lambda \rho}$ for $k\in [h_{\lft}, a_{\lft}]$.
In particular, we know 
\begin{equation}
\label{eq:hhgtl1}
(a_{\lft} + 1 - h_{\lft}) \mu_\beta(h_{\lft})< \frac{\tau}{\lambda \rho}
\end{equation}
For $k < h_{\lft}$, Lemma~\ref{lemma:extremal}(a) with $i = k$ gives $ (a_{\lft} + 1  - k)  \mu_{\beta}(k) \le \lambda (a_{\lft} + 1 - h_{\lft}) \mu_{\beta}(h_{\lft})$.  Combined with Eq.~(\ref{eq:hhgtl1}), this gives the bound of Eq.~(\ref{eq:IGAJSASKASHFKA}): $( a_{\lft} + 1 - k  ) \mu_{\beta}(k) <  \lambda\cdot \frac{ \tau }{\lambda\rho} = \tau/\rho$.

Now let $\ell = a_{\lft}+1$, and consider the sequence $$
b_i =\mu_{\beta}( a_{\lft} + 1 - i) \cdot \rho/\tau \qquad \qquad \text{for $i = 1, \dots, \ell$}
$$
By Eq.~(\ref{eq:IGAJSASKASHFKA}), we have $b_i \leq 1/i$ for all $i$.  We claim that $\sum_{i=1}^{\ell} b_i < \rho$. There are two cases.
\begin{itemize}
\item {\bf Log-concave setting with $\pmb{\rho=e}$}. Since the counts  are log-concave, so is the sequence $b_i$
(since $\mu_{\beta}(k)\propto c_k e^{\beta k}$). As we show in Lemma~\ref{lemma:logconcave:harmonic} in Appendix~\ref{sec:FindSegment}, this implies that $\sum_{i} b_i <  e$.
\item {\bf General setting with $\pmb{\rho=1+\log (n+1)}$}. We have $\sum_{i=1}^{\ell}b_i <  1+\log  \ell \leq 1 + \log(n+1)$ by the well-known inequality for the harmonic series.
\end{itemize}

Now observe that $\mu_{\beta}[0, a_{\lft}] = \sum_{i} \frac{\tau}{\rho} \cdot b_i < \tau$. This indeed contradicts Eq.~(\ref{mube}). 
\end{proof}

\begin{proposition}
\label{hb1prop}
 \ \\
(a) If the call is left-free, then $\mu_{\beta}(k^-)  \ge \hat \mu_{\beta}(k^-) / \sqrt{\lambda} \geq \frac{ \phi}{a_{\lft} -k^- + 1}$. \\
(b) If the call is right-free, then $\mu_{\beta}(k^+) \ge \hat \mu_{\beta}(k^+) / \sqrt{\lambda} \geq \frac{\phi}{k^+-a_{\rgt} + 1}$. 
\end{proposition}
\begin{proof}
We only prove (a); the case (b) is completely analogous.

Let $\bar p = \frac{\phi}{h_{\rgt} - h_{\lft} + 1}$.
By Lemma~\ref{lemma:harmonic}, there is $k \in [h_{\lft},a_{\lft}]$ with $\mu_\beta(k) \ge \frac{\phi \lambda^2}{a_{\lft} - k + 1} \ge \bar p$. Since Line 10 well-estimates $k$, we have $\hat \mu_{\beta}(k) \ge \mu_{\beta}(k)/\sqrt{\lambda} \ge \frac{\phi \lambda^{3/2}}{a_{\lft} + 1 - k}$. Since $k^-$ is chosen as the argmax, $(a_{\lft} + 1 - k^-)\Phi(k^-)  \ge (a_{\lft} + 1 - k)\Phi(k)  \ge \phi \lambda^{3/2}$. So $\hat \mu_{\beta}(k^-) \ge \Phi(k^-)/\lambda \ge \frac{\phi \sqrt{\lambda}}{a_{\lft} + 1 - k^-} \ge \bar p$. Since Line 10 well-estimates $k^-$, this implies $\mu_{\beta}(k^-) \ge \hat \mu_{\beta}(k^-) / \sqrt{\lambda}$. 
\end{proof}

\begin{proposition}
The segment $\theta = (\beta, \sigma)$ produced at the given iteration is $\phi$-proper.
\end{proposition}
\begin{proof}
We need to show that if $\beta > \betamin$ then $\mu_{\beta}(k^-) \geq  \frac{\phi}{|\sigma|}$ and likewise if $\beta < \betamax$ then $\mu_{\beta}(k^+) \geq \frac{\phi}{|\sigma|}$. We show the former; the latter is completely analogous.  

If the call is left-forced, and $\beta \neq \betamin$, then necessarily $\beta = \beta_{\lft}$ and $k^- = \sigma_{\lft}^-$. Since $\theta_{\lft}$ satisfies ({\tt I3}), we have $\mu_{\beta}(k^-) \geq \frac{\phi}{|\sigma_{\lft}|} \geq \frac{\phi}{|\sigma|}$.

If the call is left-free, then Proposition~\ref{hb1prop} gives $\mu_{\beta}(k^-) \geq  \frac{\phi}{a_{\lft}  - k^- + 1} \geq \frac{\phi}{|\sigma|}$.
\end{proof}

\begin{proposition}
The segment $\theta = (\beta, \sigma)$ produced at the given iteration satisfies property {\tt (I4)}.
\end{proposition}
\begin{proof}
We only verify that $\theta$ satisfies Eq.~(\ref{eq:extremal:a}); the case of Eq.~(\ref{eq:extremal:b}) is completely analogous.   

First, suppose the call is left-forced. If $h_{\lft} = 0$ there is nothing to show. If {\tt FindSegment} is called at Line 7 with $\beta = \beta_{\lft}$, then $k^- = \sigma_{\lft}^-, k^+ \geq a_{\rgt} \geq \sigma_{\lft}^+$; so $|\sigma| \geq |\sigma_{\lft}|$.  For $i < k^-$, we use Eq.~(\ref{eq:extremal:a}) for segment $\theta_{\lft}$ to calculate:
$$
\mu_{\beta}(i) \leq \lambda \cdot \frac{ |\sigma_{\lft}|}{|\sigma_{\lft}|+(\sigma_{\lft}^--i)}\cdot\mu_\beta(\sigma_{\lft}^-) \leq \lambda \cdot \frac{ |\sigma|}{|\sigma|+(\sigma^--i)}\cdot\mu_\beta(k^-)  
$$

Otherwise, suppose the call is left-free. Since $|\sigma| \geq a_{\lft} - k^- + 1$, it suffices to show that
\begin{equation}
\label{ttx1}
(a_{\lft} - i + 1) \mu_\beta(i) \le \lambda (a_{\lft} - k^- + 1) \mu_\beta(k^-) \qquad\quad \text{for } i < k^-
\end{equation}

If $k^- = h_{\lft}$, this is precisely Lemma~\ref{lemma:extremal}(a). So suppose that $k^- > h_{\lft}$ and so $\Phi(k^-) = \hat \mu_{\beta}(k^-)$.  Define $\kappa_j = 1$ for $j  \notin \{h_{\lft}, h_{\rgt} \}$ and $\kappa_j = \lambda$ for $j \in \{ h_{\lft}, h_{\rgt} \}$, i.e. $\Phi(j) = \kappa_j \cdot \hat \mu_{\beta}(j)$. Now let $i \in \{h_{\lft}, \dots, k^- \}$.  By definition of $k^-$, we have $(a_{\lft}  - i +1) \Phi(i) \le (a_{\lft}  - k^- +1) \Phi(k^-)$, i.e.
\begin{equation}
\label{po1}
\hat \mu_{\beta}(i) \le \frac{(a_{\lft}  - k^- +1) \Phi(k^-)}{\kappa_i (a_{\lft}  - i +1)} = \frac{(a_{\lft}  - k^- +1) \hat \mu_{\beta}(k^-)}{\kappa_i (a_{\lft}  - i +1)}
\end{equation}

By Proposition~\ref{hb1prop},  $(a_{\lft}  - k^- +1) \hat \mu_{\beta}(k^-) \ge \phi \sqrt{\lambda}$. In particular, the RHS of Eq.~(\ref{po1}) is at least $\bar p = \frac{\phi}{h_{\rgt} - h_{\lft} + 1}$. Since Line 1 well-estimates $i$ and since $\hat \mu_{\beta}(k^-) \leq \sqrt{\lambda} \mu_{\beta}(k^-)$,  this implies
\begin{equation}
\label{po2}
\mu_{\beta}(i) \le \frac{ \lambda (a_{\lft}  - k^- +1) \mu_{\beta}(k^-)}{\kappa_i  (a_{\lft}  - i +1)}.
\end{equation}

For $i \in\{ h_{\lft} + 1, \dots, k^-\}$, we have $\kappa_i = 1$, and so Eq. (\ref{po2}) is exactly Eq.~(\ref{ttx1}). For $i = h_{\lft}$, we have $\kappa_i = \lambda$ and so Eq.~(\ref{po2}) shows 
\begin{equation}
\label{ttx2}
( a_{\lft} - h_{\lft} +1) \mu_{\beta}(h_{\lft}) \le (a_{\lft}  -  k^- +1) \mu_{\beta}(k^-)
\end{equation}
which again establishes Eq. (\ref{ttx1}) (with additional slack).  Finally, when  $i < h_{\lft}$, we show Eq.~(\ref{ttx1}) by combining Lemma~\ref{lemma:extremal}(a) with Eq.~(\ref{ttx2}):
\[
( a_{\lft} - i + 1 ) \mu_{\beta}(i) \le \lambda (a_{\lft} - h_{\lft} + 1) \mu_{\beta} (h_{\lft})  \leq \lambda (a_{\lft}  -  k^- +1) \mu_{\beta}(k^-) \qedhere
\]
\end{proof}

Thus, the preschedule $\mathcal J$ at the end maintains property ({\tt I3}), and hence it is $\phi$-proper.  This concludes the proof of Lemma~\ref{prop:sch1}.

\subsection{Converting the preschedule into a covering schedule}
\label{sec:convert}
 There are two steps to convert the preschedule into a covering schedule. First, we throw away redundant intervals; second, we ``uncross'' the adjacent intervals. While we are doing this, we also check if the resulting schedule is proper;  if not, we will discard it and generate a new preschedule from scratch. 
 
\begin{algorithm}[H]
\DontPrintSemicolon
\While{$\mathcal J$ contains any adjacent segments $\theta_i = (\beta_i, \sigma_i), \theta_{i+1} = (\beta_{i+1}, \sigma_{i+1})$ with $\beta_i = \beta_{i+1}$} {
merge the segments, i.e. replace $\theta_i, \theta_{i+1}$ with a single segment $\theta' = (\beta_i, [\sigma_i^-, \sigma_{i+1}^+])$. 
}

\DontPrintSemicolon
\While{$\mathcal J$ contains any index $i \in \{1, \dots, t-1 \}$ with $\sigma_{i+1}^- \leq \sigma_{i-1}^+$}{
remove segment $\theta_i$ from $\mathcal J$
}

let $\mathcal J' = ((\beta_0, \sigma_0), \dots, (\beta_t, \sigma_t))$ be the preschedule after these modification steps.

{\bf for} $i = 0, \dots, t$ {\bf do}  let $\hat \mu_{\beta_i} \leftarrow {\tt Sample}(\beta_i; \tfrac{1}{2} \log \lambda,    \frac{w_i}{\lambda}, \frac{\gamma}{4 (t+1)} )$ where $w_i = \phi/|\sigma_i|$ \\
\For{$i = 1, \dots, t$} {
\If{ $\exists\; k\in\{\sigma_{i-1}^+,\sigma_{i}^-\}$ with \  
$\hat\mu_{\beta_{i-1}}(k)\ge w_{i-1} / \sqrt{\lambda}$
and
$\hat\mu_{\beta_{i}}(k)\ge w_i / \sqrt{\lambda}$
}
{
set $k_{i} = k$ for arbitrary such $k$
}
\textbf{else return} ERROR.
}
\Return{covering schedule $\mathcal I = ( \beta_0, w_0 / \lambda, k_1, \beta_1,  w_1 / \lambda, k_2, \dots, k_t, \beta_t, w_t / \lambda)$}
\caption{${\tt TryToUncross}(\cal J, \gamma)$ for preschedule $\mathcal J$.
\vspace{-0.15in} \label{alg:UncrossSchedule}}
\end{algorithm}

\begin{proposition}
\label{min-prop1}
At Line 5, the preschedule $\mathcal J'$ satisfies $\sum_{i=0}^t \frac{1}{w_i} \leq \frac{2(n+1)}{\phi}$.
Furthermore, if the original preschedule $\mathcal J$ is $\phi$-proper, then so is $\mathcal J'$.
\end{proposition}
\begin{proof}
When segments are merged at Line 2, the new segment has a larger span, so it preserves properness. The discarding step at Line 4 clearly does not make a schedule improper. 

We claim that for any $k \in \mathcal H$,  there are at most two segments $\theta_i = (\beta_i, \sigma_i) \in \mathcal J'$ with $k \in \sigma_i$. For, suppose $k \in \sigma_{i_1} \cap \sigma_{i_2} \cap \sigma_{i_3}$ with $i_1 < i_2 < i_3$. Then by ({\tt I2}), we have $\sigma_{i_2 + 1}^- \leq \sigma_{i_3}^- \leq k$ and $\sigma_{i_2 - 1}^+ \geq \sigma_{i_1}^+ \geq k$. So $\sigma_{i_2 +1}^- \leq \sigma_{i_2-1}^+$ and we could have discarded segment $i_2$ in the loop at Line 3.

Consequently, we calculate $\sum_{i=0}^t \frac{1}{w_i} = \frac{1}{\phi} \sum_{i=0}^t |\sigma_i| = \frac{1}{\phi} \sum_{k \in \mathcal H}  | \{i : k \in \sigma_i \}  | \leq \frac{2(n+1)}{\phi}.$
\end{proof}

\begin{theorem}\label{th:subroutines2}
The algorithm ${\tt TryToUncross}(\mathcal J, \gamma)$ satisfies the following properties: \\
(a) The output is either ERROR or a covering schedule $\mathcal I$ with ${\tt InvWt}(\mathcal I) \leq  \frac{2 \lambda (n+1)}{\phi}$. \\
(b) Irrespective of $\mathcal J$, it outputs an improper covering schedule with probability at most $\gamma$. \\
(c) If $\mathcal J$ is $\phi$-proper,  it outputs ERROR with probability at most $\gamma$.  \\
(d) The cost is $O( n \rho \log\frac n{\gamma})$.
\end{theorem}
\begin{proof}
From Proposition~\ref{min-prop1}, we have ${\tt InvWt}(\mathcal I) = \sum_{i=0}^t \frac{1}{w_i / \lambda} \leq \frac{2 \lambda (n+1)}{\phi}$.  Similarly, the algorithm cost from Line 6 is $\sum_{i=0}^t O( w_i \log \tfrac{t}{\gamma} ) \leq  O(  \frac{2(n+1)}{\phi} \cdot  \log \tfrac{n}{\gamma})  \leq O( n \rho \log \tfrac{n}{\gamma})$, thus showing part (d). 

To show $\mathcal I$ is a covering schedule, we need to show $k_1 < \dots < k_t$ and $\beta_0 = \betamin < \dots < \beta_t = \betamax$. The condition $\beta_0 = \betamin, \beta_t = \betamax$ follows from ({\tt I1}) while the bound $\beta_i < \beta_{i+1}$ follows from the loop at Line 1. To show $k_i < k_{i+1}$ for $i = 1, \dots, t-1$, observe that from ({\tt I0}) we have $\sigma_{i-1}^+ \geq \sigma_i^-$ and $\sigma_i^+ \geq \sigma_{i+1}^-$. Due to the loop at Line 3, we must have $k_{i+1} \geq \sigma_{i+1}^- > \sigma_{i-1}^+ \geq k_i$ (else segment $\theta_i$ would have been discarded).

With probability at least $1 - \gamma$, Line 6 well-estimates every value $\sigma_i^+, \sigma_i^-, \sigma_{i+1}^-, \sigma_{i+1}^+$. We claim that, in this case,  the algorithm outputs either a proper covering schedule or ERROR, and that the latter case only holds if $\mathcal J'$, and hence $\mathcal J$, is improper.

First, if the algorithm reaches Line 11, then  $\hat \mu_{\beta_i}(k_i) \geq w_i / \sqrt{\lambda}$ and $\hat \mu_{\beta_{i-1}}(k_i) \geq  w_{i-1} / \sqrt{\lambda}$ for each $i$. Since Line 6 well-estimates $k_i$, this implies $ \mu_{\beta_i}(k_i) \geq w_i / \lambda$ and $\mu_{\beta_{i-1}}(k_{i}) \geq w_{i-1} / \lambda$. So $\mathcal I$ is proper.

Next, suppose $\mathcal J'$ is $\phi$-proper but the algorithm outputs ERROR at an iteration $i$. By definition of $\phi$-properness, we have $\mu_{\beta_{i-1}}(\sigma_{i-1}^+) \geq w_{i-1}$ and $\mu_{\beta_i}(\sigma_i^-) \geq w_i$. Since Line 6 well-estimates $\sigma_{i-1}^+$ and $\sigma_i^+$, this implies $\hat \mu_{\beta_{i-1}}(\sigma_{i-1}^+) \geq w_{i-1} / \sqrt{\lambda}$ and $\hat \mu_{\beta_i}(\sigma_i^-) \geq w_i / \sqrt{\lambda}$. Neither value $k \in \{ \sigma_{i-1}^+, \sigma_i^- \}$ satisfied the check at Line 8, so $\hat \mu_{\beta_i}(\sigma_{i-1}^+) <  w_i / \sqrt{\lambda}$ and $\hat \mu_{\beta_{i-1}}(\sigma_i^-) < w_{i-1} / \sqrt{\lambda}$. In turn, since Line 6 well-estimates these values, we have $\mu_{\beta_i}(\sigma_{i-1}^+) < w_i$ and $\mu_{\beta_{i-1}}(\sigma_i^-) < w_{i-1}$.  But now $\mu_{\beta_i}(\sigma_{i-1}^+) \mu_{\beta_{i-1}}(\sigma_i^-) < w_{i-1} w_i \leq \mu_{\beta_i}(\sigma_i^-) \mu_{\beta_{i-1}}(\sigma_{i-1}^+)$. This contradicts the Uncrossing Inequality since $\sigma_i^- \leq \sigma_{i-1}^+$ by ({\tt I0}).
\end{proof}

We can finish by combining all the preschedule processing algorithms, as follows:

\begin{algorithm}[H]
\DontPrintSemicolon
\While{{\tt true}}
{
	call Algorithm~\ref{alg:schedule} with appropriate constants $ \lambda, \tau$ to compute preschedule $\calJ$ \\	
	call $\calI\leftarrow{\tt TryToUncross}(\calJ, \gamma/4)$ \\
	\textbf{if} $\mathcal I \neq $ ERROR \textbf{then return} $\mathcal I$  \\
}
\caption{Algorithm ${\tt FindCoveringSchedule}(\gamma)$ \label{alg:schedule:final}}
\end{algorithm}

By Lemma~\ref{prop:sch1} and Theorem~\ref{th:subroutines2},  each iteration of Algorithm~\ref{alg:schedule:final} terminates with probability at least $\tfrac{1}{2} (1-\gamma/4) \geq 3/8$, so there are $O(1)$ expected iterations. Each call to {\tt TryToUncross} has cost $O(n \rho \log \frac{n}{\gamma})$. By Proposition~\ref{prop:schedule:post-analysis}, each call to Algorithm~\ref{alg:schedule} has cost $O(n \log q + n \rho \log^2 n)$. 

By Theorem~\ref{th:subroutines2}(a), we have ${\tt InvWt} (\mathcal I) \leq 2 \rho  (n+1) \cdot \lambda^4 / \tau$. The term $\lambda^4 / \tau$ gets arbitrarily close to $2$ for constants $ \lambda, \tau$ sufficiently close to $1, \frac 12$ respectively. 

By Proposition~\ref{th:subroutines2}, each iteration of Algorithm~\ref{alg:schedule:final} returns a non-proper covering schedule with probability at most $\gamma/4$ (irrespective of the choice of $\mathcal J$). So the total probability of returning a non-proper covering schedule over all iterations is at most $\sum_{i=0}^{\infty} (3/8)^i \gamma/4 = 2 \gamma/5$.

This shows Theorem~\ref{th:schedule-existence}.

\section{Combinatorial applications}
\label{app-sec}
Consider a configuration with $c_i$ objects of weights $i = 0, \dots, n$,  where we can sample from a Gibbs distribution at certain values $\beta \in [\betamin, \betamax]$. To determine these counts, we will solve $\Pcoef{\delta, \eps}$ for $\delta \leq \Delta_{\min} \eqdef \min_{x \in \mathcal H} \Delta(x)$; the resulting estimated counts $\hat c_i \propto \hat \pi(i)$ will then be accurate up to scaling to within $e^{\pm O(\eps)}$ relative error. (To determine the scaling, we usually assume that at least one count is known outright, e.g. $c_0 = 1$.)

We will examine three problems from graph theory: connected subgraphs, matchings, and independent sets.  There is essentially only one piece of information we need about the underlying Gibbs distribution, namely, the value $\Delta_{\min}$; most other problem-specific details can be ignored. Furthermore, log-concave distributions have a simple ``automatic'' bound on $\Delta_{\min}$; combined with our algorithm for log-concave distributions, this gives a particularly clean algorithmic result. 

\begin{theorem}\label{th:main:two2}
Suppose the counts are non-zero and log-concave.  If $\betamin \leq \log \frac{c_0}{c_1}$ and $\betamax \geq \log\frac{c_{n-1}}{c_n}$, then $\Delta_{\min} \geq \frac{1}{n+1}$ and $\log Q(\betamax) \leq q = 3 n \Gamma$ for $\Gamma = \max\{ \betamax, \log \frac{c_1}{c_0}, 1 \}$. 

In particular, for $\delta = \frac{1}{n+1}$, we can solve $\Pcoef{\delta, \eps}$ with cost $O \bigl( \min \bigl \{ \frac{ n \Gamma \log n \log \frac{1}{\gamma}}{\eps^2}, \frac{ n^2 \log\frac{1}{\gamma}}{\eps^2} + n \log \Gamma \bigr \} \bigr)$.

\end{theorem}
\begin{proof}
Define $b_i = \log(c_{i-1}/c_i)$ for $i = 1, \dots, n$; the sequence $b_1, \dots, b_n$ is non-decreasing since $c_i$ is log-concave. We claim that for each $i, k \in \mathcal H$, there holds
\begin{equation}
\label{ddd1}
c_i e^{i b_i} \geq c_k e^{k b_i}
\end{equation} 

To show this for $k > i$, we note that $\frac{c_i e^{i b_i}}{c_k e^{k b_i}} = e^{(i-k) b_i} \prod_{j=i}^{k-1} \frac{c_j}{c_{j+1}} = e^{\sum_{j=i}^{k-1}  b_{j+1} - b_i} \geq 1$. A similar calculation holds for $k <i$. Since $\mu_{\beta}(k) \propto c_k e^{\beta k}$, Eq.~(\ref{ddd1}) implies 
$$
\mu_{b_i}(i) = \frac{\mu_{b_i}(i)}{\sum_{k=0}^n \mu_{b_i}(k)} \geq \frac{\mu_{b_i}(i)}{\sum_{k=0}^n \mu_{b_i}(i)} = \frac{1}{n+1}.
$$

Similarly, we have $\mu_{b_1}(0) \geq \frac{1}{n+1}$. So $b_i \in [b_1, b_n] \subseteq [\betamin, \betamax]$ and $\Delta_{\min} \geq \tfrac{1}{n+1}$ as claimed. We next turn to the bound on $Q(\betamax)$. We have the lower bound $Z(\beta_{\min}) \geq c_0$.  To upper-bound $Z(\betamax)$, observe from Eq.~(\ref{ddd1}) and from our hypothesis $\betamax \geq b_n$ that for every $k \leq n$, that
$$
\frac{c_n e^{n \betamax}}{c_k e^{k \betamax}} = \frac{c_n e^{n b_n}}{c_k e^{k b_n}} \cdot e^{(\betamax - b_n) (n - k)} \geq 1.
$$
So $Z(\beta_{\max}) = \sum_i c_i e^{i \beta_{\max}} \leq (n+1) c_n e^{n \beta_{\max}}$ and hence $Q(\betamax) \leq \frac{e^{n \betamax} (n+1) c_n}{c_0}$. By telescoping products, $\frac{c_{n}}{c_0} = \prod_{i=1}^{n} \frac{c_i}{c_{i-1}} \leq (\frac{c_1}{c_0})^n$, giving $Q(\betamax) \leq e^{n \betamax} \cdot (n+1)  \cdot (\frac{c_1}{c_0})^n \leq e^{n \Gamma} \cdot (n+1) \cdot e^{n \Gamma} \leq e^q$.
\end{proof}

\subsection{Counting connected subgraphs}
Let $G = (V,E)$ be a connected graph.  The problem of \emph{network reliability} is to sample a connected subgraph $G' = (V,E')$ with probability proportional to $\prod_{f \in E'} \frac{1 - p(f)}{p(f)}$, for any weighting function $p: E \rightarrow [0,1]$. This can be interpreted as each edge $f$ ``failing'' independently with probability $p(f)$, and conditioning on the resulting subgraph remaining connected. Equivalently, if we set $p(f) = \frac{e^{\beta}}{1 + e^{\beta}}$ for all edges $f$, it can be interpreted as a Gibbs distribution where $c_i$ is the number of connected subgraphs of $G$ with $|E| - i$ edges and with $n = |E| - |V| + 1$.

Note that $c_i$ is the number of independent sets in the co-graphic matroid. By the result of \cite{Adiprasito:18}, this sequence $c_i$ is log-concave.  Anari et al. \cite{anari2018log} provides an FPRAS for counting independent sets in arbitrary matroids, which would include connected subgraphs, but do not provide concrete complexity estimates for their algorithm.  There have also been a series of exact and approximate sampling algorithms for network reliability via Gibbs distributions \cite{GuoJerrum:ICALP18, GuoHe:18, chen}. The most recent result is due to \cite{chen}, which applies to general matroids satisfying certain computational properties. For graph reliability, it provides the following algorithmic result:
\begin{theorem}[\cite{chen}, Corollary 3]
\label{guohe:alg}
There is an algorithm to approximately sample from the Gibbs distribution at $\beta$, up to total variation distance $\rho$, with expected runtime $\tilde O( |E| ( e^{\beta} + 1) \log \tfrac{1}{\rho} )$.
\end{theorem}

Based on Theorem~\ref{guohe:alg}, we can immediately get our counting algorithm.

\begin{proof}[Proof of Theorem~\ref{th:count:subgraph}] Here $c_0 = 1$ and $c_1/c_0 \leq |E|$ and $c_{n-1}/c_n \leq |E|$.  So we can apply Theorem~\ref{th:main:two2},  setting $\betamax = \log |E|, \betamin = -\infty$ and $\Gamma = \log |E|$.  The algorithm uses $O ( \frac{ |E| \log^2 |E| }{ \eps^2 })$ samples in expectation to achieve failure probability $\gamma = O(1)$.  Accordingly, we need to run the approximate sampler of Theorem~\ref{guohe:alg} with $\rho = \poly(1/n, \eps)$. The total algorithm runtime is then $\tilde O( |E|^3 / \eps^2 )$.
\end{proof}

\subsection{Counting matchings in high-degree graphs}
Consider a graph $G = (V,E)$ with a perfect matching. For $i = 0, \dots, v = |V|/2$, let $c_i$ denote the number of $i$-edge matchings. As originally shown in \cite{hl72}, the sequence $c_i$ is log-concave.  In \cite{jerrum1989approximating, jerrum1996markov}, Jerrum \& Sinclair described an MCMC algorithm to approximately sample from the Gibbs distribution on matchings. We quote their most optimized result as follows:
\begin{theorem}[\cite{jerrum1996markov}]
\label{js:alg}
There is an algorithm to approximately sample from the Gibbs distribution at  $\beta$  with up to total variation distance $\rho$, with expected runtime $\tilde O( |E| |V|^2 (1 + e^{\beta}) \log \frac{1}{\rho} )$.
\end{theorem}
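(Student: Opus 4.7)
The plan is to invoke the classical Jerrum--Sinclair Markov chain on matchings and bound its mixing time by the canonical paths method. Setting $\lambda = e^{\beta}$, the target distribution assigns to each matching $M$ of $G$ mass proportional to $\lambda^{|M|}$, so it is exactly the Gibbs distribution with counts $c_i = M_i$ at inverse temperature $\beta$.

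First I would define the state space to consist of all matchings of $G$ (of every size), together with the following local moves at a state $M$: choose an edge $e = \{u,v\} \in E$ uniformly at random, and (i) if $e \in M$ propose $M' = M \setminus \{e\}$; (ii) if neither $u$ nor $v$ is matched in $M$ propose $M' = M \cup \{e\}$; (iii) if exactly one endpoint, say $u$, is matched, say to $w$, propose the ``rotation'' $M' = (M \setminus \{(u,w)\}) \cup \{e\}$; otherwise stay. Each proposed move is accepted with the Metropolis--Hastings probability $\min\{1, \lambda^{|M'|-|M|}\}$. A short check shows the chain is ergodic, reversible, and has stationary measure proportional to $\lambda^{|M|}$, which is the desired Gibbs distribution.

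The core of the argument is to bound the relaxation time of this chain. For every ordered pair $(I, F)$ of matchings I would define a canonical path from $I$ to $F$ by decomposing the symmetric difference $I \triangle F$ into alternating paths and cycles, ordering these components in a canonical way, and then ``unwinding'' each component using the moves (i)--(iii) above. The standard Jerrum--Sinclair encoding argument then bounds the congestion of any transition $e_t = (M \to M')$ by exhibiting, for each canonical path $\gamma_{IF}$ passing through $e_t$, an injective map $(I,F) \mapsto \eta_{IF}(e_t)$ into the state space whose image has controlled weight; the inverse of $\eta$ recovers $(I,F)$ from $M,M',e_t$ together with the ``bookkeeping'' matching. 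This yields congestion $O(|E||V|^2(1+\lambda))$ and hence, by Sinclair's comparison between canonical paths and the spectral gap, a relaxation time of the same order. Multiplying by $\log(\pi_{\min}^{-1} \delta^{-1})$ gives TV mixing time $\tilde O(|E||V|^2(1+e^\beta)\log(1/\delta))$, and each step of the chain can be implemented in $O(1)$ time after $O(|E|)$ preprocessing, so the expected runtime matches the claim.

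The main obstacle is the combinatorial congestion bound in the canonical paths step: the encoding must be simultaneously injective and weight-preserving, which forces a careful choice of the ordering of alternating components and of the direction in which each component is unwound. Writing this out rigorously (with the weight-compensation factor $\lambda^{|M|+|M'|-|I|-|F|}$ absorbed into the encoding state) is where all the $(1+\lambda)$ and $|V|^2$ factors materialize; once that injection is in place, the remaining reduction from congestion to mixing time is entirely standard.
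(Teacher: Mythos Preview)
The paper does not prove this statement at all: Theorem~\ref{js:alg} is stated as a black-box citation of \cite{jerrum1996markov} (and the surrounding book of Jerrum), with no accompanying argument. Your proposal correctly identifies and sketches the Jerrum--Sinclair canonical-paths/encoding argument that underlies the cited result, so in that sense you are reproducing the intended proof rather than offering an alternative.

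One small caution on the quantitative claim: you assert the congestion bound is $O(|E||V|^2(1+\lambda))$ and then multiply by a $\log(\pi_{\min}^{-1}\delta^{-1})$ factor. The standard Jerrum--Sinclair analysis actually gives a congestion of order $|E||V|(1+\lambda)$ (the path length is $O(|V|)$ and the encoding loses one further factor of at most $(1+\lambda)$ but not $|V|$), and the extra $|V|$ in the $\tilde O(|E||V|^2(1+e^\beta))$ runtime comes from the $\log(1/\pi_{\min})$ term, since $\pi_{\min}^{-1}$ can be as large as $(1+\lambda)^{|V|}$ times the number of matchings. So the final bound is right, but the bookkeeping of where each factor of $|V|$ enters is slightly off in your sketch; if you were to write this out in full you would want to place the second $|V|$ in the log-of-min-probability term rather than in the congestion itself.
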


\begin{proof}[Proof of Theorem~\ref{th:count:matchings}]
Here $c_0 = 1$ and $n = v$ and  $c_1/c_0 \leq |E|$, and by assumption  $c_{n-1}/c_n \leq f$. We apply Theorem~\ref{th:main:two2} setting $\betamin = -\log |E|, \betamax = \log f$, and $\Gamma \leq \max \{ \log |E|, \log f \}$. The algorithm uses $O( n \log(|E| f) \log^2 n  / \eps^2)$ samples in expectation for failure probability $\gamma = O(1)$. Accordingly, we need to take the approximate sampler of Theorem~\ref{js:alg} with $\rho = \poly(1/n, \eps,1/ f)$. The total algorithm runtime is then $\tilde O( \frac{|E| |V|^3 f}{ \eps^2} )$.  As shown in  \cite{jerrum1989approximating}, if $G$ has minimum degree at least $|V|/2$ then $c_{v} > 0$ and $c_{v-1}/c_{v} \leq O( |V|^2)$ and clearly $|E| \leq O(|V|^2)$.
\end{proof}

\subsection{Counting independent sets in bounded-degree graphs}
\label{subsec:countindep}
Let $G = (V,E)$ be a graph of of maximum degree $D$. A key problem in statistical physics is to efficiently sample the independent sets of $G$. We consider the Gibbs distribution on independent sets weighted by their size, i.e. each independent set $I \subseteq V$ is sampled with probability $e^{\beta |I|}$, and $c_k$ is the number of cardinality-$k$ independent sets. Note that this distribution is not necessarily log-concave.  

As shown in \cite{chen2021optimal}, there is a critical hardness threshold defined by $\beta_c = \log \bigl( \frac{(D-1)^{D-1}}{(D - 2)^D} \bigr)$: it is intractable to sample from the Gibbs distribution beyond $\beta > \beta_c$ and there is a polynomial-time sampler for the Gibbs distribution for $\beta < \beta_c$. We quote the following result:

\begin{theorem}[\cite{chen2021optimal}] 
\label{haal1} Let $D \geq 3$ and $\xi > 0$ be any fixed constants. There is an algorithm to approximately sample from the Gibbs distribution at  $\beta \in [-\infty,\beta_c - \xi]$, up to total variation distance $\rho$, with runtime $O( |V| \log |V| \log \tfrac{1}{\rho} )$.\footnote{Concretely, \cite{chen2021optimal} describes an MCMC known as \emph{Glauber dynamics} whose stationary distribution is the Gibbs distribution. They show this Markov chain has mixing time $O( |V| \log |V|)$. For a constant-degree graph, each step of the Glauber dynamics can be executed in $O(1)$ time.}
\end{theorem}
The related problem of estimating counts was considered in \cite{davies_et_al,jain2022approximate}. They identified a related threshold value $\alpha_c = \frac{e^{\beta_c}}{1 + (D+1) e^{\beta_c}}$ such that for $k > \alpha_c |V|$, it is intractable to estimate $c_k$ or sample approximately uniformly from size-$k$ independent sets, while on the other hand, for any constants $D \geq 3, \xi > 0$ and any $k < (\alpha_c - \xi) |V|$, there is a polynomial-time algorithm for both tasks. A key technical result shown in \cite{jain2022approximate} was that, in the latter regime, $\mu_{\beta}$ could be approximated by a normal distribution, i.e. it obeyed a type of Central Limit Theorem. 

For our purposes, using Theorem 3.1 of \cite{jain2022approximate}, we can get a cruder estimate:
\begin{lemma}[\cite{jain2022approximate}]
\label{haal}
Let $D \geq 3$ and $\xi > 0$ be any fixed constants. There is a constant $\xi' > 0$ such that, for any $k \leq (\alpha_c - \xi) |V|$, there is a value $\beta < \beta_c - \xi' $ with $\mu_{\beta}(k) \geq \Omega(1/\sqrt{|V|})$.
\end{lemma}

This bound is all we need for our counting algorithm, which we summarize as follows:
\begin{proof}[Proof of Theorem~\ref{th:count:ind}]
Here $c_0 = 1$ and $ n = |V|$. We set $\betamin = -\infty$ and $\betamax = \beta_c - \xi'$. So then $\frac{Z(\betamax)}{Z(\betamin)} \leq \frac{2^n e^{\betamax n}}{c_0}$;  in particular, since $\betamax = O(1)$ (for fixed $D$), we can take $q = \Theta(n)$.  It suffices to solve $\Pcoef{\delta,  \eps}$ for $\delta = \Omega(1/\sqrt{n})$.  We use the continuous-setting algorithm of Theorem~\ref{pcoefalg1thm}; for $\gamma = O(1)$, it has expected sample complexity $O \bigl( \frac{ q \log n + (\sqrt{q \log n} / \delta) \log \tfrac{q}{\delta}}{\eps^2} \bigr) = O\bigl( \frac{  n \log^{3/2} n }{\eps^2} \bigr)$. The sampler of Theorem~\ref{haal1} then should set $\rho = \poly(1/n,\eps)$, and the overall runtime is $O \bigl( \frac{  n^2 \log^{5/2} n  \log(n/\eps)}{\eps^2} \bigr) = \tilde O(n^2/\eps^2).$ \qedhere
\end{proof}

\section{Lower bounds on sample complexity}
\label{lb-sec}
In this section, we show lower bounds for the sample complexity of $\Pratio{\eps}$ and $\Pcoef{\delta, \eps}$ in the black-box model.  At this point, we note that the specification of problem $\Pcoef{\delta, \eps}$, while useful from an algorithmic point of view, has some  restrictive choices for its normalization and error terms. To get lower bounds which do not depend on these details, we consider a relaxed count-estimation problem called $\Ptcoef{\delta, \eps}$,  namely, to compute a vector $\hat c \in \mathbb R_{\geq 0}^{\mathcal F}$ satisfying the following property:
 $$
\Bigl | \log \frac{\hat c_x}{\hat c_y} - \log \frac{c_x}{c_y} \Bigr | \leq \eps  \qquad \qquad \text{ for all pairs $x,y$ with $\Delta (x), \Delta (y) \geq \delta$}
 $$

One can easily check that a solution $\hat \pi$ to $\Pcoef{\delta, \eps/5}$ is also a solution to $\Ptcoef{\delta, \eps}$. Thus, up to constant factors in parameters, the problem $\Ptcoef{}$ is easier to solve compared to $\Pcoef{}$.

Our strategy for the lower bounds is to construct a target instance $c = c^{(0)}$ surrounded by an envelope of alternate instances which can be distinguished by solving $\Pratio{\eps}$ or $\Ptcoef{\delta, \eps}$. Specifically, for vectors $\lambda^{(1)}, \dots, \lambda^{(d)} \in \mathbb R^{\mathcal F}$, we will define $2 d$ alternate instances $c^{(+1)}, c^{(-1)}, \dots, c^{(+d)}, c^{(-d)}$ by:
$$
c^{(+r)}_x = e^{\lambda^{(r)}_x} c_x, \qquad  c^{(-r)}_x = e^{-\lambda^{(r)}_x} c_x  \qquad \qquad \text{for $r = 1, \dots, d$ and $x \in \mathcal F$}
$$

Thus, these $2 d$ instances are arranged in $d$ pairs which ``straddle'' the base instance $c^{(0)}$.

For each instance $c^{(r)}$, we define $\mu_{\beta}^{(r)}$ to be the Gibbs distribution, and $Z^{(r)}$ to be its partition function, and $z^{(r)} = \log Z^{(r)}$, etc. When the superscript is omitted, we always refer to the base instance $c = c^{(0)}$, e.g. we write $z(\beta) = z^{(0)}(\beta), \Delta(x) = \max_{\beta \in [\betamin, \betamax]} \mu_{\beta}^{(0)}(x)$.

We also define parameters
$$
\kappa = \max_{\substack{x \in \mathcal F, r \in \{1, \dots, d \} }} | \lambda^{(r)}_x | , \qquad  \Psi =     \max_{\beta \in [\betamin, \betamax]}  \sum_{r=1}^d \log \frac{Z^{(+r)}(\beta) Z^{(-r)}(\beta) }{Z(\beta)^2}.
$$

In this setting,  we have a key lower bound on the sample complexity of any procedure to distinguish the distributions. 
\begin{lemma}[Indistinguishability lemma]
\label{gtt5}
Let $\A$ be an algorithm which generates queries $\beta_1, \dots, \beta_T \in [\betamin, \betamax]$ and receives values $x_1, \dots, x_T$, where each $x_i$ is drawn from $\mu_{\beta_i}$. At some point the procedure stops and  outputs YES or NO. The queries $\beta_i$ and the stopping time $T$ may be adaptive and may be randomized.

If $\A$ outputs YES on instance $c^{(0)}$ with probability at least $1 - \gamma$ and outputs NO on each instance $c^{(-d)}, \dots c^{(-1)}, c^{(1)}, \dots, c^{(d)}$ with probability at least $1 - \gamma$, then $\A$ has cost $\Omega( \frac{ d \log(1/\gamma)}{\Psi})$ when run on instance $c^{(0)}$.
\end{lemma}
\begin{proof}
Any finite run of $\mathfrak A$ can be described by a random vector $X = ((\beta_1,x_1),\ldots,(\beta_T,x_T),\sigma)$, where $\sigma\in\{\text{YES},\text{NO} \}$. The probability measures  $\P^{(r)}(\cdot)$ over such runs on $c^{(r)}$ can be decomposed as:
\begin{equation}
\d\P^{(r)}(X)= \psi^{(r)}(X) \: \d \mu^\A X
\label{eq:GALSKGHAKSFJASG}
\end{equation}
where the measure $\mu^\A$  depends only on the algorithm $\A$, and the function $\psi^{(r)}$ is defined via
$$
\psi^{(r)} ((\beta_1,x_1),\ldots,(\beta_T,x_T),\sigma)=\prod_{i=1}^T \mu_{\beta_i}^{(r)}(x_i)
$$

Define $R = \{-d, \dots, -1, +1, \dots, +d \}$ (the alternate problem instances). For any finite run $X = ((\beta_1,x_1),\ldots,(\beta_T,x_T),\sigma)$ we have
\begin{align}
\prod_{r \in R} \psi^{(r)}(X) &= \prod_{r \in R} \prod_{t=1}^T \mu^{(r)}_{\beta_t}(x_t) = \prod_{t=1}^T (\mu_{\beta_t}(x_t))^{2 d} \prod_{r \in R} \frac{c^{(r)}_{x_t} e^{\beta_t x_t} Z(\beta_t)}{c_{x_t} e^{\beta_t x_t} Z^{(r)}(\beta_t)} \notag \\
&= (\psi^{(0)}(X))^{2 d} \prod_{t=1}^T \prod_{r=1}^d \frac{ Z(\beta_t)^2 }{Z^{(+r)}(\beta_t) Z^{(-r)}(\beta_t)}  \geq  (\psi^{(0)}(X))^{2 d} e^{-T \Psi}
\label{gfss10}
\end{align}

Let $\mathcal X$ denote the set of runs of length at most $\tau = \frac{2 d \log (1/(4\gamma))}{\Psi}$. For any  run $X \in \mathcal X$, Eq.~(\ref{gfss10}) gives:
\begin{equation}
\label{gfss1}
\frac{1}{2d}\sum_{r \in R} \psi^{(r)}(X) 
\ge \Bigl (\prod_{r \in R} \psi^{(r)}(X) \Bigr)^{1/(2 d)}  \geq \psi^{(0)}(X) e^{-\tau \Psi/2 d}= 4 \gamma \psi^{(0)}(X)
\end{equation}
where the first bound comes from the inequality between arithmetic and geometric means, and the second bound comes from (\ref{gfss10}).

Partition $\mathcal X$ into sets of runs  $\mathcal X^{\text{Y}}, \mathcal X^{\text{N}}$  which output YES and NO respectively.  By hypothesis of the lemma, we have $\P^{(r)}(\mathcal X^\text{Y}) < \gamma$ for all $r \in R$. Using Eq.~(\ref{gfss1}), this implies
$$
\P^{(0)}(\mathcal X^{\text{Y}}) = \negthinspace \int_{\calX^{\text{Y}}} \negthinspace \psi^{(0)}(X) \: \d\mu^\A X \leq \frac{1}{8 d \gamma}\sum_{r \in R} \int_{\calX^{\text{Y}}} \negthinspace \psi^{(r)}(X) \: \d\mu^\A X = \frac{1}{8 d \gamma} \sum_{r \in R} \P^{(r)} (\mathcal X^{\text{Y}}) \leq \frac{1}{8 d \gamma} \sum_{r \in R} \gamma = \frac{1}{4}.
$$

By hypothesis, $\P^{(0)}(\mathcal X^{\text{N}}) < \gamma$. So  $\A$ runs for more than $\tau$ timesteps on $c^{(0)}$ with probability at least $1-\P^{(0)}(\mathcal X^{\text{Y}}) -\P^{(0)} (\mathcal X^{\text{N}}) \geq 1 - \gamma - \frac{1}{4}$; by our assumption that $\gamma \in (0,\tfrac{1}{2})$, this is at least $\tfrac{1}{4}$. The expected cost is thus at least $\frac 14\cdot \tau=\Omega( \frac{ d \log(1/\gamma)}{\Psi})$.
\end{proof}

\begin{corollary}
\label{gtt5h}
If $z(\betamin, \betamax) \leq q - 2 \kappa$, then  $z^{(r)} (\betamin, \betamax) \leq  q$ for all $r = \pm 1, \dots, \pm d$. Moreover, in this case:

\smallskip

\noindent(a) If $|z(\betamin, \betamax)-z^{(r)}(\betamin,\betamax)| > 2 \eps$ for all $r = \pm 1, \dots, \pm d$, then algorithm for $\Pratio{\eps}$ has cost $\Omega( \frac{ d \log(1/\gamma)}{\Psi})$ on $c^{(0)}$.

\noindent(b) If each $r \in \{1, \dots,d \}$ has values $x, y \in \mathcal F$ with $\Delta(x), \Delta (y) \geq \delta e^{2 \kappa}$ and $| \lambda^{(r)}_x - \lambda^{(r)}_y | > 2 \eps$,  then any algorithm for $\Ptcoef{\delta, \eps}$ has cost $\Omega( \frac{ d \log(1/\gamma)}{\Psi})$ on  $c^{(0)}$.   (We call $x,y$ the \emph{witnesses} for instances $c^{(\pm r)}$).
\end{corollary}
\begin{proof}
As $| \log c^{(r)}_x - \log c_x | \leq \kappa$ for all $x$, we have $\frac{Z^{(r)}(\beta)}{ Z(\beta)} \in [e^{-\kappa}, e^{\kappa}]$ and $\mu^{(r)}_{\beta}(x) = \frac{c_x^{(r)} e^{\beta x}}{Z^{(r)}(\beta)} \geq  \mu_{\beta}(x) e^{-2\kappa}$. So $z^{(r)}(\betamin, \betamax) \leq q$ for all $r = \pm 1, \dots, \pm d$.  In particular, the upper-bound value $q$ is valid for   all the instances $c^{(r)}$.

In case (a), an algorithm for $\Ptcoef{\delta,\eps}$ can  distinguish $c^{(0)}$ from $c^{(\pm 1)}, \dots, c^{(\pm d)}$: namely,  given a solution $\hat Q(\betamax)$ for $\Pratio{\eps}$, we output YES if and only if $|  \log \hat Q(\betamax) - \log Q^{(0)}(\betamax) | \leq \eps$.

In case (b), an algorithm for $\Pratio{\eps}$ can distinguish $c^{(0)}$ from $c^{(\pm 1)}, \dots, c^{(\pm d)}$:  namely, given a solution $\hat c$ to $\Ptcoef{\delta, \eps}$, we output YES if and only if  for all $r = 1, \dots, d$ the witnesses $x,y$ satisfy $| \log  \frac{\hat c_x}{\hat c_y}-  \log \frac{c_x}{c_y}| \leq \eps$. Note that $\Delta^{(\pm r)} (x), \Delta^{(\pm r)} (y) \geq \delta$.
\end{proof}

By applying Corollary~\ref{gtt5h} to carefully constructed  instances, we will show the following results which, in turn, immediately imply Theorems~\ref{main-lb-thmx} and~\ref{main-lb-thmx2}: 
\begin{theorem}
\label{main-lb-thm}
Let $n >  n_0, q > q_0, \eps < \eps_{0}, \delta < \delta_{0}$ for certain absolute constants $n_0, q_0, \eps_{0}, \delta_{0}$. There are problem instances $\mu$ which satisfy the given bounds $n$ and $q$  such that:

\noindent (a)  $ \Ptcoef{\delta, \eps}$  requires cost 
$\Omega(  \frac{ \min \{q + \sqrt{q}/\delta, n^2 + n/\delta \} }{\eps^2}  \log \frac{1}{\gamma})$, and $\mu$ is integer-valued.

\noindent (b)  $ \Ptcoef{\delta, \eps}$  requires cost $\Omega( \frac{ 1/\delta +  \min \{q, n^2 \} }{\eps^2} \log \frac{1}{\gamma})$, and $\mu$ is log-concave.

\noindent (c)  $\Pratio{\eps}$ requires cost  $\Omega( \frac{ \min \{q, n^2 \}}{\eps^2}  \log \frac{1}{\gamma})$, and $\mu$ is  log-concave.

\noindent (d)  $\Ptcoef{\delta, \eps}$  requires cost $\Omega( \frac{ (q + \sqrt{q}/\delta)}{\eps^2}  \log \frac{1}{\gamma})$.

\noindent (e) $\Pratio{\eps}$ requires cost  $\Omega( \frac{ q }{\eps^2}  \log \frac{1}{\gamma})$. 
\end{theorem}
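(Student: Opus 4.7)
My plan is to apply Corollary~\ref{gtt5h}(a): it suffices to construct an envelope $c^{(0)}, c^{(1)}, \ldots, c^{(d)}$ of count vectors sharing the interval $[\betamin, \betamax]$, where $c^{(0)}$ has log partition ratio equal to the target $q$, each alternative satisfies $|q^{(r)} - q^{(0)}| > 2\eps$, and the associated parameter $\Psi$ satisfies $d/\Psi = \Omega(q/\eps^2)$. A natural target is $d = \Theta(q)$ together with $\Psi = O(\eps^2)$, although other balances (such as $d = \Theta(q/\eps)$ with $\Psi = O(\eps)$) would work equally well; I will aim for the former.

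I would follow (and slightly simplify) the construction used in \cite{Kolmogorov:COLT18} for the analogous bound on $\Pratio{}$. Take $c^{(0)}$ supported on an arithmetic progression of $K = \Theta(q)$ locations $x_1 < x_2 < \cdots < x_K$ in $\mathcal F$, with weights chosen so that the function $z^{(0)}(\beta)$ grows roughly linearly over $[\betamin, \betamax]$ and so that, as $\beta$ increases, the Gibbs distribution $\mu^{(0)}_\beta$ shifts smoothly through the $x_j$. For each $r = 1, \ldots, d$, define $c^{(r)}$ by a multiplicative perturbation $c^{(r)}_x = (1 + \delta^{(r)}_x) c^{(0)}_x$, where $\delta^{(r)}_x = \pm \eta$ is non-zero only on a short window of $\mathcal F$ localized around $x_r$, with the sign chosen so that $q^{(r)} - q^{(0)} = \log \frac{1 + \bar\delta^{(r)}(\betamax)}{1 + \bar\delta^{(r)}(\betamin)} > 2\eps$, where $\bar\delta^{(r)}(\beta) := \mathbb E_{\mu^{(0)}_\beta}[\delta^{(r)}_X]$. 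The windows for different $r$ should be disjoint, so that at most one perturbation is ``active'' near any given mass point $x$.

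The main obstacle is the sharp bound $\Psi = O(\eps^2)$. A first-order expansion of $\log(\mu^{(0)}_\beta(x)/\mu^{(r)}_\beta(x))$ contributes a term of size $|\bar\delta^{(r)}(\beta) - \delta^{(r)}_x| \leq 2\eta$ plus a quadratic remainder; summing naively over $r$ gives only $\Psi = O(d\eta) = O(q \eta)$, which yields the weaker $\Omega(\log(1/\gamma)/\eps)$ bound. The improvement exploits two structural features of the construction: first, that at each $x \in \mathcal F$ only one index $r$ satisfies $\delta^{(r)}_x \neq 0$; and second, that the signs of the $\delta^{(r)}$ can be arranged (for instance via a probabilistic argument and subsequent derandomization) so that the first-order contributions $\sum_r \bar\delta^{(r)}(\beta)$ cancel pointwise in $\beta$, leaving only the quadratic residuals $\sum_r (\bar\delta^{(r)}(\beta))^2$, which sum to $O(\eta^2 \cdot \max_r \mu^{(0)}_\beta(\text{window}_r)) = O(\eps^2)$ by the disjointness of the windows. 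Combined with $d = \Theta(q)$ this yields $d/\Psi = \Omega(q/\eps^2)$, and Corollary~\ref{gtt5h}(a) then delivers the claimed $\Omega(q \log(1/\gamma)/\eps^2)$ lower bound on the cost of $\Pratio{}$.
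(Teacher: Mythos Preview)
Your approach has a genuine gap at the separation step. Corollary~\ref{gtt5h}(a) requires $|q^{(r)} - q^{(0)}| > 2\eps$ for \emph{every} $r$, but a perturbation localized to window $w_r$ yields
\[
q^{(r)} - q^{(0)} \;=\; \log\frac{1+\bar\delta^{(r)}(\betamax)}{1+\bar\delta^{(r)}(\betamin)},
\qquad |\bar\delta^{(r)}(\beta)| \le \eta\,\mu^{(0)}_\beta(w_r),
\]
so the separation for index $r$ is $O\bigl(\eta\cdot(\mu^{(0)}_{\betamin}(w_r)+\mu^{(0)}_{\betamax}(w_r))\bigr)$. Since $\sum_r \mu^{(0)}_{\betamin}(w_r)=\sum_r \mu^{(0)}_{\betamax}(w_r)=1$, at most $O(\eta/\eps)$ of your windows can achieve separation exceeding $2\eps$; with $\eta=O(1)$ this caps $d$ at $O(1/\eps)$, nowhere near $\Theta(q)$. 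The partition ratio depends only on the endpoints $\betamin,\betamax$, and the two endpoint measures cannot spread $\Omega(1)$ mass over $\Theta(q)$ disjoint windows. Your sign-cancellation device addresses $\Psi$ but does nothing for this obstruction. (A secondary issue: the term $-\log(1+\delta^{(r_x)}_x)$ inside $\log U_\beta(x)$ is first-order in $\eta$ and is \emph{not} cancelled by sign choices across different $r$; one would need paired $\pm\eta$ perturbations on the \emph{same} window, as in Section~\ref{sec93}.)

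The paper takes the opposite balance: $d=2$ with a \emph{global} perturbation $c^\pm_k = c_k e^{\pm k\nu}$. Because $c^+_k c^-_k = c_k^2$, the $x$-dependent factors in $U_\beta(x)$ cancel exactly and $\log U_\beta(x) = z(\beta-\nu)-2z(\beta)+z(\beta+\nu)\le \kappa\nu^2$ where $\kappa=\sup z''$ (Lemma~\ref{lb-lemma}). The real-rooted base polynomial of Section~\ref{sec92} gives $\kappa=O(1)$ and $|q^\pm - q|\approx \lambda\nu$ with $\lambda=\Theta(m)$; taking $\nu=3\eps/\lambda$ and $m=\lfloor\sqrt{q}\rfloor$ yields $\Psi=O(\eps^2/q)$, hence $d/\Psi=\Omega(q/\eps^2)$. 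Part~(e) follows by rescaling that integer instance into $\{0\}\cup[1,2]$ (Section~\ref{sec94}).
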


\subsection{Bounds in the log-concave setting}
\label{sec92}
To begin,  observe that the lower bound of $\Omega(\frac{1}{\delta \eps^2} \log \tfrac{1}{\gamma})$ for $\Ptcoef{\delta,\eps}$ is rather trivial, and does not really depend on properties of Gibbs distributions. We summarize it in the following result:
\begin{proposition} 
\label{pc1}
For $\delta < 1/10$ and $\eps < 1/20$, there is a log-concave instance with arbitrary $n \geq 1$ and $\betamin = \betamax = 0$, for which solving $\Ptcoef{\delta, \eps}$ requires cost $\Omega( \frac{ \log(1/\gamma)} {\delta \eps^2})$.
\end{proposition}
\begin{proof}
Take $d = 1$ with $c_0 = 2 \delta, c_1 = 1, c_2 = \dots = c_n = 0$, and $\lambda^{(1)}_0 = 3 \eps, \lambda^{(1)}_1 = \dots = \lambda^{(1)}_n = 0$.  We can apply Corollary~\ref{gtt5h}(b) with witnesses $x = 0, y = 1$.  It is straightforward to calculate $\Delta(0), \Delta(1) \geq 5/3 \cdot \delta \geq \delta e^{2 \kappa}$ and $\Psi =  \log \bigl( \frac{(2 \delta e^{-3 \eps} + 1)(2 \delta e^{3 \eps} + 1)}{ (2 \delta + 1)^2} \bigr) \leq O( \delta \eps^2 )$.
\end{proof}

So it suffices to consider $\delta = \Omega(1)$, and we turn to the bounds in terms of $n$ and $q$. For this construction, take $d = 1, \betamin = 0, \betamax = n$,  set $c_0, \dots, c_n$  to be the coefficients of the polynomial  $g(x) = \prod_{k=0}^{n-1} (e^{k}+ x)$, and set $\lambda^{(1)}_k = 10 k \eps / n$ for $k = 0, \dots, n$.

Since $g(x)$ is a real-rooted polynomial, and counts $c^{(\pm 1)}$ are derived by an exponential shift, the counts $c^{(0)}, c^{(+1)}, c^{(-1)}$ are log-concave~\cite{Braenden:2015}.  Due to the polynomial representation of the counts as $\prod_{k=0}^{n-1} (e^{k}+ x) = c_0 + c_1 x + \dots + c_{n} x^{n}$, we have an elegant formula for the partition ratio function: 
\begin{align*}
Z(\beta) &= \sum_x c_x e^{\beta x} = g(e^{\beta x}) = \prod_{k=0}^{n-1} (e^{k}+ e^{\beta}) \\
Z^{(\pm 1)} (\beta) &=\sum_x c_x e^{\pm \nu} e^{\beta x} = Z(\beta \pm \nu) \qquad \text{ for $\nu = 10 \eps/n$}. 
\end{align*}

For intuition, consider independent random variables $X_0, \dots, X_{n-1}$, wherein each $X_i$ is Bernoulli-$p_i$ for $p_i = \frac{e^{\beta}}{e^i + e^{\beta}}$. Then $\mu_{\beta}$ is the probability distribution on the sum $X = X_0 + \dots + X_{n-1}$, and distributions $\mu^{(\pm 1)}$ are derived by slightly shifting the probabilities $p_i$.

\begin{observation}
\label{zxeqn}
For any values $\beta, x$ we have $|z(\beta+x) - z(\beta) - z'(\beta) x| \leq 2 x^2$.
\end{observation}
\begin{proof}
We bound the second derivative of function $z$ as 
$$
z''(\beta) = \sum_{k=0}^{n-1} \frac{e^k e^{\beta}}{(e^k + e^{\beta})^2} \leq \sum_{k \leq \beta} \frac{e^k e^\beta}{e^{2 \beta}} +\sum_{k \geq \beta} \frac{e^k e^{\beta}}{e^{2 k}}  \leq  \frac{e}{e-1} + \frac{e}{e-1} \leq 4.
$$
 The result then follows from Taylor's theorem.
\end{proof}

\begin{lemma}
\label{lb-lemma}
For $n \geq 7$ and $\eps < 1/100$, the following bounds hold: \\
\noindent (a) $z(\betamin, \betamax) \leq n^2$. \\
(b) $|z^{(r)}(\betamin,\betamax) - z(\betamin, \betamax) | > 2 \eps$ for $r = \pm 1$. \\
(c) $\Psi \leq O( \eps^2 / n^2 ).$  \\
(d) $\mu_{\betamin} (0 )  \geq 1/5$ and $\mu_{\betamax}(n) \geq 1/5$.
\end{lemma}
\begin{proof}
For (a), we calculate: 
$$
z (0,n) = \sum_{k=0}^{n-1} \log( e^k + e^{n}) - \sum_{k=0}^{n-1} \log(  e^k + 1) = \sum_{k=0}^{n-1} \log \bigl( \frac{ e^k + e^{n}}{e^k + 1} \bigr) \leq \sum_{k=0}^{n-1} \log \bigl( \frac{e^{n}}{1} \bigr) = n^2.
$$

For (b), we can use Observation~\ref{zxeqn} and the formula $z^{(\pm 1)}(\beta) = z(\beta \pm \nu)$ to get:
\begin{align*}
&|z^{(\pm 1)} (\betamin, \betamax) - z (\betamin, \betamax)| = \bigl| \bigl( z(n \pm \nu) - z(n) \bigr) - \bigl( z(\pm \nu) - z(0) \bigr) \bigr| \\
& \qquad \qquad  \geq (z'(n) \nu - 2 \nu^2) - (z'(0) \nu + 2 \nu^2)  = (z'(n) - z'(0)) \nu - 4 \nu^2
\end{align*}

We have $z'(n)= \sum_{k=0}^{n-1} \frac{e^{n}}{e^k + e^{n}} \geq \sum_{k=0}^{n-1} \frac{e^n}{e^n + e^n} = n/2$ and $z'(0) = \sum_{k=0}^{n-1} \frac{1}{1 + e^k} \leq \sum_{k=0}^{\infty} e^{-k} = \frac{e}{e-1}$. So $|z^{(+\pm)} (\betamin, \betamax) - z (\betamin, \betamax)| \geq (n/2 - \frac{e}{e-1}) \nu - 4 \nu^2$, which is larger than $2 \eps$ for $n \geq 7, \eps < 1/100$. 
 
For (c), we have $\log \frac{Z^{(+1)}(\beta) Z^{(-1)}(\beta)}{Z(\beta)^2} = (z(\beta + \nu) - z(\beta)) + (z(\beta - \nu) - z(\beta))$. By Observation~\ref{zxeqn}, this is at most $(z'(\beta) \nu + 2 \nu^2) + (-z'(\beta) \nu + 2 \nu^2) = 4 \nu^2 = O( \eps^2 / n^2)$.

For (d), we calculate: 
\begin{align*}
\mu_{\betamin}(0) &= \frac{c_0}{Z(0)} =  \prod_{k=0}^{n-1} \frac{e^k}{e^k + 1}  \geq \prod_{k=0}^{n-1} \exp( -e^{-k}) \geq \exp( -\sum_{k=0}^{\infty} e^{-k} ) = e^{-e/(e-1)} \geq 1/5. \\
\mu_{\betamax}(n)  &= \frac{c_{n}}{Z(n)} = \prod_{k=0}^{n-1} \frac{ e^{n} }{  e^k + e^{n}} \geq \prod_{k=0}^{n-1} \exp( -e^{-(n - k)}) \geq e^{-e/(e-1)} \geq 1/5. \qedhere
\end{align*}
\end{proof}

\begin{proposition}
For $n > n_0, q > q_0, \eps < 1/100$,  where $n_0, q_0$ are absolute constants, the instance can be constructed so that both $\Pratio{\eps}$ or $\Ptcoef{1/10, \eps}$ require cost $\Omega( \frac{  \min \{q, n^2 \} \log \frac{1}{\gamma}}{\eps^2})$.
\end{proposition}
\begin{proof}
Here  $\kappa =n \nu = 10 \eps$. By reducing $n$ as needed, and taking $n_0, q_0$ sufficiently large,  we may suppose that $100 < n < \sqrt{q}/10$. In this case, we use Corollary~\ref{gtt5h} for both problems;  with $z(\betamin, \betamax) + 2 \kappa \leq n^2 + 1/5 \leq q$ for $q_0$ sufficiently large.

For $\Pratio{\eps}$ we have $|z(\betamin,\betamax) - z^{(\pm 1)}(\betamin,\betamax)| > 2 \eps$ as required. For  $\Ptcoef{1/10, \eps}$ take witnesses $x = 0, y = n$;  we have $\Delta(x), \Delta(y) \geq 1/5 \geq 1/10 \cdot e^{2 \kappa}$ and $| \lambda^{(1)}_x - \lambda^{(1)}_y| = n \nu = 10 \eps$ as required. So both problems require cost $\Omega( \frac{ \log(1/\gamma) }{\Psi} ) = \Omega( \frac{ \log(1/\gamma) n^2}{\eps^2})$.
\end{proof}

Combined with Proposition~\ref{pc1}, this shows Theorem~\ref{main-lb-thm}(b,c).
\subsection{Bounds for $\Ptcoef{}$  in the general integer setting}
\label{sec93}
Let $m = \lfloor n/2 \rfloor$. In this construction, we take $d = m$ and $\betamin = 0, \betamax = m \log 2$. We define $c$ by:
\begin{align*}
c_{2 i} &= 2^{-i^2} && \text{ for $i = 0, \dots, m$} \\
c_{2 i - 1} &= 2^{i -  i^2} \cdot 8 \delta && \text{ for $i = 1, \dots, m$} 
\end{align*}
 For each $i = 1, \dots, m$,  we form $\lambda^{(i)}$ as
 $$
 \lambda^{(i)}_{2 i - 1} = 3 \eps, \qquad \qquad \lambda^{(i)}_x = 0  \text{ for all other values $x \neq 2 i - 1$}
 $$

Intuitively, for each $i= 1, \dots, m$, there is a small count $c_{2 i - 1}$ hidden between two larger counts $c_{2i - 2}, c_{2 i}$. 
If we sample from $\mu_{\beta}$ for $\beta \gg i \log 2$, then $\mu_{\beta}(2i)$ will drown out $\mu_{\beta}(2i-1)$, and likewise if $\beta \ll i \log 2$ then $\mu_{\beta}(2i-2)$ will drown out $\mu_{\beta}(2i-1)$. So the only way to precisely determine each $c_{2 i - 1}$ is to take roughly $\frac{1}{\delta \eps^2}$ samples of $\mu_{\beta}$ with $\beta \approx i \log 2$. Overall, solving $\Pcoef{\delta, \eps}$ requires roughly $\Omega(\frac{m}{\delta \eps^2})$ samples in total. 

\begin{observation}
\label{zbeqn5}
Suppose $\delta < 1/100$. Then for any $k \in \mathbb Z$, we have $Z( k \log 2) \leq 2^{k^2 + 2}$.
\end{observation}
\begin{proof}
Here we calculate
$$
Z (k \log 2) = \sum_{i=0}^{m} 2^{-i^2} e^{2 i (k \log 2)}  + 8 \delta \sum_{i=1}^{m} 2^{i - i^2} e^{(2 i-1) (k \log 2)} \leq \sum_{i=-\infty}^{\infty} 2^{-i^2 + 2 i k}  + 8 \delta \sum_{i=-\infty}^{\infty} 2^{i - i^2 + (2 i-1) k}.
$$

Terms in the first sum decay at rate at least $1/2$ away from the peak value $i = k$, while the terms in the second sum decay at rate at least $1/4$ from their peak values at $i = k, k+1$. So $Z ( \beta ) \leq ( 3 \cdot 2^{k^2}  + 8 \delta \cdot \frac{8}{3} 2^{k^2})$, which is less than $2^{k^2  + 2}$ for $\delta < 1/100$.
\end{proof}

\begin{proposition}
\label{vv1}
Suppose $n \geq 7, \eps < 1/100, \delta < 1/100$ for large enough constant $n_0$.  Then we have: \\
\noindent (a) $\Psi \leq O( \delta \eps^2 )$ \\
\noindent (b) $z (\betamin,\betamax) \leq n^2/4$ \\
\noindent (c) $\Delta(0) \geq 1/4$ \\
\noindent (d) $\Delta (2k - 1) \geq 2 \delta$ for all $k = 1, \dots, m$.
\end{proposition}
\begin{proof}
For any $\beta \geq 0$ and $i = 1, \dots, m$, we have:
\begin{align*}
&\frac{Z^{(+i)}(\beta) Z^{(-i)}(\beta) }{Z(\beta)^2} =  \bigl( 1 + \frac{ (e^{3 \eps} - 1) 2^{i - i^2} \cdot 8 \delta  e^{(2 i - 1) \beta} }{Z(\beta)} \bigr) \bigl( 1 + \frac{ (e^{-3 \eps} - 1) 2^{i - i^2} \cdot 8 \delta  e^{(2 i - 1) \beta} }{Z(\beta)} \bigr) \\
&\qquad \leq \exp \bigl(  \frac{ (e^{3 \eps} - 1) 2^{i - i^2} \cdot 8 \delta  e^{(2 i - 1) \beta} }{Z(\beta)} \bigr)  \cdot \exp \bigl( \frac{ (e^{-3 \eps} - 1) 2^{i - i^2} \cdot 8 \delta  e^{(2 i - 1) \beta} }{Z(\beta)} \bigr) \qquad \text{(since $1+x \leq e^x$)} \\
&\qquad = \exp \Bigl ( \frac{ 8 \delta (e^{3 \eps} + e^{-3 \eps} - 2)  ( 2^{i-i^2} e^{(2 i - 1) \beta})}{ Z(\beta)} \Bigr)
\end{align*}
Since $\eps < 1/100$ we have $e^{3 \eps} + e^{-3 \eps} - 2 \leq 10 \eps^2$. Also, $Z(\beta) \geq 2^{-i^2} e^{2 i \beta} + 2^{-(i-1)^2} e^{2(i-1) \beta}$ from counts $c_{ 2i}$ and $c_{ 2 i - 2}$. So, we can upper-bound the term $\sum_{r=1}^d \log \frac{Z^{(+r)}(\beta) Z^{(-r)}(\beta) }{Z(\beta)^2}$ in the definition of $\Psi$ by:
\begin{align*}
 80 \delta \eps^2  \sum_{i=1}^{m} \frac{ 2^{i-i^2} e^{(2 i - 1) \beta}}{2^{-i^2} e^{ 2i \beta} + 2^{-(i-1)^2} e^{2(i-1) \beta}}=80 \delta \eps^2 \sum_{i=0}^{m-1} \frac{ 2^{i} e^{-\beta}}{1 + (2^{i} e^{-\beta})^2/2}\leq O(\delta \eps^2).
\end{align*}

\smallskip

For (b), Observation~\ref{zbeqn5} gives $Z( \betamax) = Z(m \log 2) \leq 2^{m^2 + 4}$, and trivially we have $Z(\betamin) \geq c_0 = 1$. So $z(\betamin, \betamax) \leq (m^2 + 4) \log 2$, which is smaller than $n^2/4$ since $m \leq n/2$ and $n \geq 7$.

For (c), Observation~\ref{zbeqn5} gives $Z(0) \leq 4$. So $\Delta(0) \geq \mu_{0}(0) \geq 1/4$.

For  (d), we take $\beta = k \log 2$; Observation~\ref{zbeqn5} gives $\mu_{\beta}(2k - 1) = \frac{c_{2k-1} e^{(2 k - 1) \beta}}{Z(\beta)}  \geq \frac{ 8 \delta \cdot 2^{k^2}}{2^{k^2  + 2}}  = 2  \delta.$
\end{proof}

\begin{proposition}
For $n > n_0, q > q_0, \delta < 1/100, \eps < 1/100$,  where $n_0, q_0$ are absolute constants, the instance can be constructed so that $\Ptcoef{\delta, \eps}$ requires cost $\Omega( \frac{\min\{ \sqrt{q}, n \} \log (1/\gamma) }{\delta \eps^2})$.
\end{proposition}
\begin{proof}
Here $\kappa = 3 \eps$.  By reducing $n$ as needed, and taking $n_0, q_0$ sufficiently large, we may suppose that $n < \sqrt{q}/10$. In this case, we apply Corollary~\ref{gtt5h}(b), where $z(\betamin, \betamax) + 2 \kappa \leq n^2/4 + 6 \eps \leq q$ for $q_0$ sufficiently large.  For each instance $c^{(\pm i)}$, we use witnesses $x = 0, y = 2 i - 1$, where  $\Delta(x), \Delta(y) \geq 2 \delta \geq \delta e^{2 \kappa}$ for $\delta < 1/100$, and $| \lambda_x^{(i)} - \lambda_y^{(i)} | = 3 \eps$. So $\Ptcoef{\delta, \eps}$ has cost $\Omega ( \frac{ d \log (1/\gamma)}{\Psi} )\ = \Omega \bigl( \frac{n \log(1/\gamma)}{\delta \eps^2} \bigr)$.
\end{proof}

Combined with Theorem~\ref{main-lb-thm}(b) already shown, this shows Theorem~\ref{main-lb-thm}(a).

\subsection{Bounds for the continuous setting}
\label{sec94}
In the constructions of Sections~\ref{sec92} and \ref{sec93}, the complexity bounds have a term $\min\{q, n^2 \}$. We can modify these constructions to yield continuous-setting instances where the lower bound depends only on $q$ irrespective of $n$. Namely, given a family $c^{(i)}:  i =0, \pm 1, \dots, \pm d$ of integer-setting distributions on $[\betamin, \betamax]$, we construct a new family of continous-setting instances $\tilde c$ via
$$
\tilde c^{(i)}_x = c^{(i)}_{n(x-1)} \qquad \qquad \text{ for each $i, x$.}
$$
\begin{theorem}
\label{lblemma44}
Define $\tilde \beta_{\min} = \beta_{\min} n, \tilde \beta_{\max} = \betamax n$. The family $\tilde c^{(i)}: i = 0, \pm 1, \dots, \pm d$ of continuous-setting distributions on $[\tilde \beta_{\min}, \tilde \beta_{\max}]$ has the following properties: \\
\noindent (a) Each distribution $\tilde c^{(i)}$ is supported on $x \in [1,2]$,  i.e. it has $\tilde n = 2$. \\
\noindent (b) Any algorithm to solve $\Pratio{\eps}$ or $\Ptcoef{\delta, \eps}$ for the instances $\tilde c$ yields a corresponding algorithm to solve the same problem for $c$, with the same complexity. \\
(c) The instances $\tilde c$ have $\tilde z^{(i)}(\tilde \beta_{\min}, \tilde \beta_{\max}) \leq z^{(i)}(\betamin, \betamax) + n(\betamax - \betamin)$ for each $i$.
\end{theorem}
\begin{proof}
Note that $\tilde \mu_{\beta n}^{(i)}(1 + x/n) = \mu_{\beta}^{(i)}(x)$ for all $\beta$, and $\tilde Z^{(i)}(\beta)  = \sum_x c^{(i)}_{n(x-1)} e^{\beta x} = \sum_y c_y^{(i)} e^{\beta (1 + y/n)} = e^{\beta} Z^{(i)} (\beta/n).$ So $\tilde Q^{(i)} (\tilde \beta_{\max}) = e^{n(\betamax - \betamin)} Q(\betamax)$. We can simulate oracle calls to $\tilde c$ at input $\tilde \beta$ with corresponding oracle calls to $c$ at $\beta = \tilde \beta / n$. Thus, solving $\Pratio{\eps}$ or $\Ptcoef{\delta, \eps}$ on  $\tilde c$ is equivalent to solving them on the original instances $c$. 
\end{proof}

To get the continuous-setting lower bound $\Omega(\frac{q} {\eps^2} \log \tfrac{1}{\gamma})$ for $\Pratio{\eps}$ and $\Ptcoef{\delta, \eps}$, for sufficiently large $q$, we construct the problem instances $c$ from Section~\ref{sec92} for value $n' = \lfloor \sqrt{q}/2 \rfloor, q' =  q/2 $. By Lemma~\ref{lb-lemma} and Theorem~\ref{lblemma44}, we have $\tilde z^{(i)}(\tilde \beta_{\min}, \tilde \beta_{\max}) \leq  n' (\betamax - \betamin) + z^{(i)} (\betamin, \betamax) \leq q$. 

Similarly, for the continuous-setting lower bound $\Omega(\frac{\sqrt{q}/\delta }{ \eps^2} \log \tfrac{1}{\gamma})$ for $\Ptcoef{\delta, \eps}$,  for sufficiently large $q$, we construct the problem instances $c$ of Section~\ref{sec93} for value $n' = \lfloor \sqrt{q}/10 \rfloor, q'= q/2$. For $m' = \lfloor n'/2 \rfloor$, we have $n (\betamax - \betamin) + z^{(i)}(\betamin, \betamax) \leq n' (m' \log 2 - 0) + q/2 \leq q$.

This concludes the proof of Theorems~\ref{main-lb-thm}(d,e).

%%%%%%%%%%%%%%%%%%%%%%%%%%%%%%%%%%%%%%%%%%%%%%%%%%%%%%%%%%%%%%%%%%%%%%%%%%%%%%%%%%%%%%%%%%%%%%%%%%%%%%%%%%%%%%%%%%%%%%%%%%%%%%%%%%%%%
%%%%%%%%%%%%%%%%%%%%%%%%%%%%%%%%%%%%%%%%%%%%%%%%%%%%%%%%%%%%%%%%%%%%%%%%%%%%%%%%%%%%%%%%%%%%%%%%%%%%%%%%%%%%%%%%%%%%%%%%%%%%%%%%%%%%%
%%%%%%%%%%%%%%%%%%%%%%%%%%%%%%%%%%%%%%%%%%%%%%%%%%%%%%%%%%%%%%%%%%%%%%%%%%%%%%%%%%%%%%%%%%%%%%%%%%%%%%%%%%%%%%%%%%%%%%%%%%%%%%%%%%%%%
%%%%%%%%%%%%%%%%%%%%%%%%%%%%%%%%%%%%%%%%%%%%%%%%%%%%%%%%%%%%%%%%%%%%%%%%%%%%%%%%%%%%%%%%%%%%%%%%%%%%%%%%%%%%%%%%%%%%%%%%%%%%%%%%%%%%%
%%%%%%%%%%%%%%%%%%%%%%%%%%%%%%%%%%%%%%%%%%%%%%%%%%%%%%%%%%%%%%%%%%%%%%%%%%%%%%%%%%%%%%%%%%%%%%%%%%%%%%%%%%%%%%%%%%%%%%%%%%%%%%%%%%%%%
%%%%%%%%%%%%%%%%%%%%%%%%%%%%%%%%%%%%%%%%%%%%%%%%%%%%%%%%%%%%%%%%%%%%%%%%%%%%%%%%%%%%%%%%%%%%%%%%%%%%%%%%%%%%%%%%%%%%%%%%%%%%%%%%%%%%%
%%%%%%%%%%%%%%%%%%%%%%%%%%%%%%%%%%%%%%%%%%%%%%%%%%%%%%%%%%%%%%%%%%%%%%%%%%%%%%%%%%%%%%%%%%%%%%%%%%%%%%%%%%%%%%%%%%%%%%%%%%%%%%%%%%%%%
%%%%%%%%%%%%%%%%%%%%%%%%%%%%%%%%%%%%%%%%%%%%%%%%%%%%%%%%%%%%%%%%%%%%%%%%%%%%%%%%%%%%%%%%%%%%%%%%%%%%%%%%%%%%%%%%%%%%%%%%%%%%%%%%%%%%%

\appendix

\section{Proofs for error bounds of $\Pcoef{\delta, \eps}$}
\label{app:approx-mualpha}

\noindent \textbf{Proof of Theorem~\ref{approx-mualpha}.}
Note that $\mu_{\alpha}(x) = \pi(x) e^{\alpha x} / Q(\alpha)$. If $\pi(x) = 0$, then the guarantee of $\Pcoef{\delta,\eps}$ directly gives $\hat \mu_{\alpha}(x) = 0 = \mu_{\alpha}(x)$. Otherwise, we can use the guarantees of $\Pcoef{\delta,\eps}$ and $\PratioAll{\eps}$ to show the upper bound on $\hat \mu_{\alpha}(x)$:
\begin{align*}
\hat \mu_{\alpha}(x) &\leq  \frac{ e^{\eps} \pi(x) (1 + \eps (1 + \delta/\Delta(x))) e^{\alpha x}}{Q(\alpha)} = e^{\eps} \mu_{\alpha}(x) (1 + \eps (1 + \delta/\Delta(x))) \\
&\leq e^{\eps} \mu_{\alpha}(x) (1 + \eps (1 + \delta/\mu_{\alpha}(x)))  \qquad \qquad  \qquad \qquad \qquad (\text{$\mu_{\alpha}(x) \leq \Delta(x)$ by definition}) \\
&= e^{\eps} ( 1 + \eps) \mu_{\alpha}(x) + e^{\eps} \delta \leq \mu_{\alpha}(x) + 3 \eps (\mu_{\alpha}(x) + \delta)  \qquad \   (\text{straightforward calculus for $\eps < 1/2$})
\end{align*}

The lower bound is completely analogous.

\bigskip

\noindent \textbf{Proof of Theorem~\ref{th:main:one-restate2}.} Let us show $\hat Q(\alpha \mid \mathcal D) \leq e^{\eps} Q(\alpha)$ for all $\alpha$; the lower bound $\hat Q(\alpha \mid \mathcal D) \geq  e^{-\eps} Q(\alpha)$ is completely analogous. For each value $i$ with $c_i \neq 0$, the guarantee of problem $\Pcoef{1/n, \eps/3}$ gives $\hat \pi(i) \leq \pi(i) + \eps/3 \cdot \pi(i) \bigl( 1 + \frac{1}{n \Delta(i)} \bigr)$. Since also $\hat \pi(i) = \pi(i)$ when $c_i = 0$, we can calculate
  \begin{equation}
  \label{qeqn555}
    \hat Q(\alpha \mid \mathcal D) \leq \sum_i \pi(i) (1 + \eps/3) e^{\alpha i}  + \eps/3 \sum_{i: c_i \neq 0} \frac{\pi(i) e^{\alpha  i}}{n \Delta(i)}
   \end{equation}

Note that $Q(\alpha) = \sum_i \pi(i) e^{\alpha i}$.  Thus, the first term in the RHS of (\ref{qeqn555}) is $(1 + \eps/3) Q(\alpha)$. We also observe that $\Delta(i) \geq \mu_{\alpha}(i) = \frac {\pi(i) e^{\alpha i}}{Q(\alpha)}$, so $\sum_{i: c_i \neq 0} \frac{\pi(i) e^{\alpha  i}}{n \Delta(i)} \leq     \sum_{i: c_i \neq 0} \frac{Q(\alpha)}{n} \leq Q(\alpha)$.  Overall we have $\hat Q(\alpha \mid \mathcal D) \leq (1 + \eps/3) Q(\alpha) + \eps/3 \cdot Q(\alpha) \leq e^{\eps} Q(\alpha)$ as desired. 
  
\section{Correctness with approximate oracles}\label{sec:approx}

Here, we define the total variation distance $||\cdot||_{TV}$ by $||\tilde\mu_\beta-\mu_\beta||_{TV}=\tfrac{1}{2} \sum_{x}|\tilde\mu_\beta(x)-\mu_\beta(x)|$.

\begin{theorem}\label{th:approx}
Suppose that sampling procedure $\A$ has cost $T$ and that  $||\tilde\mu_\beta-\mu_\beta||_{TV}\le \rho$ for a known parameter $\rho$. Let $\tilde\A$ be the algorithm obtained from $\A$ as follows: (i) replace calls $x \sim\mu_\beta$
with calls $x\sim\tilde\mu_\beta$; (ii) terminate the algorithm after $1/\rho$ steps and return arbitrary answer.

  Then $\tilde\A$ has cost $O(T)$, and the total variation distance of the outputs  of $\mathfrak A$ and $\mathfrak A'$ is at most $2 \rho T$. 
\end{theorem}
\begin{proof}
The distributions $\mu_\beta$ and $\tilde\mu_\beta$ can be maximally coupled so that 
samples $x \sim \mu_\beta$ and $x\sim\tilde\mu_\beta$ are identical with probability at least $1-\rho$.
Assume the $k^{\text{th}}$ call to $\mu_\beta$ in $\A$ is coupled with the $k^{\text{th}}$ call to $\tilde\mu_{\tilde\beta}$ in $\tilde{\A}$
when both calls are defined and $\beta=\tilde\beta$. % (for $k\le\tfrac 1\delta$).
We say the $k^{\text{th}}$ call is good if either (i) both calls
are defined and the produced samples are identical, or (ii)  $\A$ has terminated earlier.
Note that $\P[\mbox{$k^{\text{th}}$ call is good}\:|\:\mbox{all previous calls were good}]\ge 1-\rho$, since the conditioning 
event implies $\beta=\tilde\beta$ (assuming the calls are defined).

Let $A$ and $\tilde A$ be the number of calls to the sampling oracle by algorithms $\A$ and $\tilde \A$, respectively. By assumption, $\E[A] = T$. We say that the execution is {\em good} if three events hold:
\begin{enumerate}
\item[$\mathcal E_1$:] All calls are good. The union bound gives 
$\Pr{\mathcal E_1 \mid A=k}\ge 1-\rho k$, and therefore
$$
\Pr{\mathcal E_1}= \sum_{k=0}^{\infty}\Pr{A=k}\cdot \Pr{\mathcal E_1 \mid A=k} \geq \sum_{k=0}^{\infty}\Pr{A=k}\cdot \left(1-\rho  k\right) = 1- \rho \E[A] \geq 1 - \rho T
$$
\item[$\mathcal E_2$:] The number of oracle calls by $\A$ is at most $1/\rho$. By Markov's inequality, this has probability at least $1 - \frac{\E[A]}{1/\rho} \geq 1- \rho T$.
\end{enumerate}

If these two events occur, then $\tilde\A$ returns the same output as $\A$; by the union bound, this has probability at least $1-2 \rho T$. Thus, $\tilde \A$ and $\A$ have total variation distance at most $2 \rho T$. 

To bound $\E[\tilde A]$, observe that $\tilde A \leq 1/\rho$, while if $\mathcal E_1$ occurs we have $A = \tilde A$. So, taking expectations, we have $\mathbb E[ \tilde A ] \leq \mathbb E[ A ] + \frac{1 - \Pr{\mathcal E_1}}{\rho} \leq T + \frac{\rho T}{\rho} = 2 T$.
\end{proof}

\section{Proofs for statistical sampling}
\label{ppe-app}
%%%%%%%%%%%%%%%%%%%%%%%%%%%%%%%%%%%%%%%%%%%%%%%%%%%%%%%%%%%%%%%%%%%%%%%%%%%%%%%%%%%%%%%%%%%%%%%%%%%%%%%%%%%%%

\noindent \textbf{Proof of Lemma~\ref{binom:succ:lem}.}
When $p \geq \bar p$ we have $F(N p, N \eps(p + \bar p)) \leq F( N p, N \eps p) \leq 2e^{-N p \eps^2/3}$; for $N$ larger than the stated bound and $p \geq \bar p$ this is at most $\gamma/2$. When $p < \bar p$, we have $F (N p, N \eps (p + \bar p)) \leq F(N p, N \eps \bar p)$; by monotonicity this is at most $F(N \bar p, N \eps \bar p )\leq \gamma/2$.

For the second bound, we use standard estimates $F_{+}(N p, N x) \leq \exp( \frac{- N x^2}{2(p+x)} )$ with $x = p (e^{\eps} - 1)$ and $F_{-}(Np, N x)  \leq \exp(\frac{-N x^2}{2 p})$ with $x = p (1 - e^{-\eps})$. These show that, for $N$ larger than the stated bound and $p \geq e^{-\eps} \bar p$, the upper and lower deviations have probability at most $\gamma/4$. When $p \leq e^{-\eps} \bar p$, we use the monotonicity property $F_{+}(N p, N ( \bar p - p) ) \leq F_{+}(N e^{\eps} \bar p, N ( \bar p - e^{-\eps} \bar p))$. 

\bigskip

\noindent \textbf{Proof of Lemma~\ref{general-pcoef-lemma}.} For brevity, let $a = \mu_{\alpha}(x), \hat a = \hat \mu_{\alpha}(x), \eta = Q(\alpha) e^{- \alpha x}, \hat \eta = \hat Q(\alpha) e^{ - \alpha x}$ and $y = \delta/\Delta(x)$. If $a = 0$, then $\hat a = 0$ since it is an empirical estimate and hence $\hat \pi(x) = 0$ as desired. So suppose $a > 0$.  For the upper bound, we calculate 
  \begin{align*}
 \frac{ \hat \pi(x) - \pi(x) }{\pi(x)}  &= \frac{ \hat \eta  \hat a }{ \eta  a} - 1 \leq \frac{ \eta e^{\eps/3} \cdot (a + \eps a/3 \cdot  (1 + y))}{\eta a}  - 1  & \text{(by (A1) and (A2))} \\    
 &= \bigl( e^{ \eps/3} \cdot (1 + \eps/3)  - 1 \bigr) + \bigl( 1/3 \cdot e^{\eps/3} \bigr) \eps y \\
 &\leq \eps +  \eps y  = \eps (1 + \delta/\Delta(x)) & \text{(straightforward calculus)}
 \end{align*}
 
   The complementary inequality is completely analogous.
  
  \bigskip

\noindent \textbf{Proof of Theorem~\ref{ppe-est-thm}.} As a starting point, consider the following procedure: for each $i = 1, \dots, N$  draw $r   =  \lceil 100 \tau \rceil$ copies of  $X_i$ and compute the sample average $\overline {X_{i}}$; then, for any pair $i' \leq i$, set $Y_{i',i} = \prod_{\ell=i'}^i \overline {X_{\ell}}$.  We will show that, with  high probability, all values $Y_{1,i}$ are close $\mathbb E[ Y_{1,i} ]$.

We can first observe that, for any values $i, i'$, we have $\Pr{ Y_{i',i} / \E[Y_{i', i}] \in [e^{-\eps/2},e^{\eps/2}]} \geq 0.93$. For, let $Y = Y_{i',i}$ for brevity.  Since multiplying the variables $X_j$ by constants does not affect the claim, we may assume that $\E[X_j] = 1$. Each variable $\overline{X_{j}}$ is the mean of $r$ independent copies of $X_j$, so $\E[ \overline{X_{j}} ] = \E[X_j] = 1$ and $\text{Variance}[ \overline{X_{j}} ] = \text{Variance}[ X_j ]/r = \Vrel(X_j)/r$. So $\E[Y] = 1$, and we have:
$$
\E[Y^2]  =  \prod_{j = i'}^i (1 + \frac{\text{Variance}[X_{j}]}{r} )  \leq e^{\sum_{j=i'}^{i} \Vrel(X_{j})/ r} \leq e^{ (\tau \eps^2) / r}  \leq e^{\eps^2/100}
$$
Now by Chebyshev's inequality, we have $\Pr{ Y >  \notin [e^{-\eps/2}, e^{\eps/2} ]} \leq \Pr{|Y-1|> (1 - e^{-\eps/2})}\le \frac{\text{Variance}[Y]}{ (1 - e^{-\eps/2})^2} \leq \frac{ e^{\eps^{-2}/100} - 1}{  (1 - e^{-\eps/2})^2 }$. Straightforward calculations show that this is at most $0.07$ for $\eps \in (0,1)$.

Next, we claim that, with probability at least $0.92$ there holds $Y_{1,i} / \E[Y_{1,i}] \in [e^{- \eps},e^{ \eps}]$ for all $i$. For, let $\mathcal E$ be the bad event that $Y_{1,i}  \notin [e^{-\eps}, e^{\eps}]$ for some index $i$. Suppose we reveal random variables $\overline X_1, \overline X_2, \ldots$ in sequence; if $\mathcal E$ occurs, let $i$ be the first index with $Y_{1,i}  \notin [e^{-\eps}, e^{\eps}]$. For concreteness, suppose $Y_{1,i} > e^{\eps}$; the case $Y_{1,i} < e^{-\eps}$ is completely analogous. Random variables $\overline X_{i+1}, \dots, \overline X_N$ have not been revealed at this stage; thus,  $Y_{i+1, N}$ still has its original, unconditioned, probability distribution. As shown above, we have $Y_{i+1, N} \geq e^{-\eps/2}$ with probability at least $0.93$, in which case $Y_{1,i} Y_{i+1, N} = Y_{1,N} > e^{\eps} \cdot e^{-\eps/2} = e^{\eps/2}$. So $\Pr{ Y_{1,N} \notin [e^{-\eps/2}, e^{\eps/2}] \mid \mathcal E } \geq 0.93$.   On the other hand,  we have shown that $\Pr{ Y_{1,N} \notin [e^{-\eps/2}, e^{\eps/2}]} \leq 0.07$. These two bounds imply  $\Pr{ \mathcal E} \leq 0.07/0.93 \leq 0.08$.

To obtain the algorithm {\tt EstimateProducts}, we can run the above procedure for $k = O( \log \tfrac{1}{\gamma} )$ independent trials and return $$
\hat X^{\tt prod}_i = \text{median}( Y_{1,i}^{(1)}, \dots, Y_{1,i}^{(k)} )   \qquad \text{for $i = 1, \dots, N$} 
$$ where $Y_{1,i}^{(j)}$ is the value of statistic $Y_{1,i}$ in the $j^{\text{th}}$ trial.  With probability at least $1 - \gamma$ the bad-event $\mathcal E^{(j)}$ holds in fewer than $k/2$ trials $j$. In this case $\hat X^{\tt prod}$ satisfies the required condition for all $i$.

\section{The ${\tt Balance}$ subroutines}\label{sec:Balance}

 The starting point for the algorithm is a sampling procedure of Karp \& Kleinberg \cite{kk} for noisy binary search, which we summarize as follows:\footnote{The original algorithm in \cite{kk} only gives success probability $3/4$; it is well-known that it can be amplified to $1 -\gamma$ by checking the returned solution and restarting as needed.}
\begin{theorem}[\cite{kk}]
\label{kk-thm}
Suppose we can sample from Bernoulli random variables $X_0, \dots, X_N$, wherein each $X_i$ has some unknown mean $x_i$, and $0 \leq x_0 \leq x_1 \leq x_2 \leq \dots \leq x_N \leq 1$. Let us also write $x_{-1} = 0, x_{N+1} = 1$.  For any $\alpha, \nu, \gamma \in (0,1)$, there is a procedure which uses $O( \frac{\log( N/\gamma)}{\nu^2} )$ samples from the variables $X_i$ in expectation. With probability at least $1-\gamma$, it returns an index $V \in \{-1, 0, 1, \dots, N  \}$ such that $x_V - \nu \leq \alpha \leq x_{V+1} + \nu$.
\end{theorem}

Consider the non-decreasing function $p(\beta)=\mu_\beta [\chi, n] = \frac{ \sum_{x \geq \chi} c_x e^{\beta x}}{Z(\beta)}$. Our basic plan is to quantize the search space $[\betamin, \betamax]$ into  discrete points $u_0, \dots, u_N$, for some value $N = \poly(n,q)$, and then apply Theorem~\ref{kk-thm} for probabilities $x_i = p(u_i)$. Note that we can simulate a Bernoulli random variable of rate $x_i$ by drawing $y \sim \mu_{u_i}$ and checking if $y \geq \chi$. The full details are as follows:

~\vspace{-5pt} \\ \noindent \begin{minipage}{\textwidth}

{
\vspace{1pt}
\noindent \hspace{8.5pt}{\bf Procedure ${\tt Balance}(\chi, \tau, \gamma)$.} 
\vspace{-17pt}

\setlength{\interspacetitleruled}{7pt}
\begin{algorithm}[H]
\DontPrintSemicolon
\textbf{if $\chi \leq 0$ then return} $\beta = \betamin$  \\
 Set $\betamin' = \max\{ \betamin, \betamax - (q+1) \}$  and $N = \Bigl \lceil \frac{  n (q+1) }{2 \log \bigl( \frac{4 - 4 \tau}{3 - 2 \tau} \bigr)}  \Bigr \rceil$ \\
\textbf{for $j = 0, \dots, N$ set $u_j = (1 - j/N) \betamin' + (j/N) \betamax$} \\
Apply Theorem~\ref{kk-thm} for values $x_j = p(u_j)$ with $\alpha = \tfrac{1}{2}, \nu = (\tfrac{1}{2} - \tau)/2$; let $V$ be the return value. \\
\textbf{if $-1 <  V < N$ then return} $\beta = \frac{u_V + u_{V+1}}{2}$  \\
\textbf{else if $V = -1$ then return} $\beta = \betamin'$ \\
\textbf{else if $V = N$ \ then return} $\beta = \betamax$
\end{algorithm}
}
\end{minipage}

This algorithm has complexity $O( \log \frac{N}{\gamma} ) = O( \log \frac{nq}{\gamma} )$ for fixed $\tau$. It is clearly correct if $\chi \leq 0$ since then $\mu_{\betamin}[\chi, n] = 1$. For $\chi > 0$, we claim that if    $V$ satisfies  $x_V - \nu \leq \alpha \leq x_{V+1} + \nu$, which occurs with probability at least $1-\gamma$, then $\beta \in {\tt BalancingVals}(\chi, \tau)$.  There are a number of cases.
\begin{itemize}
\item Suppose that $-1 < V < N$. It suffices to show that $\tau \leq p( \beta) \leq 1 - \tau$.  We will show only the inequality $p(\beta) \leq 1 - \tau$; the complementary inequality is completely analogous.

The condition $x_V - \nu \leq \alpha$ implies that $p(u_V) \leq 1/2 + \nu$.  We have  $\beta - u_V = \frac{ \betamax - \betamin'  }{2 N}  \leq \frac{ \log \bigl( \frac{4 - 4 \tau}{3 - 2 \tau} \bigr) }{n}$, and so we get
$$
p(\beta) =  p(u_V) \cdot \frac{Z(u_V)}{Z(\beta)} \cdot \frac{\sum_{x \in [\chi, n]} c_x e^{\beta x}}{ \sum_{x \in [\chi, n]} c_x e^{u_V x} } \leq p(u_V) \cdot 1 \cdot e^{(\beta - u_V) n}  \leq (1/2 + \nu) \cdot \frac{4 - 4 \tau}{3  - 2 \tau} = 1-  \tau.
$$

\item Suppose that $V = -1$. Here, the condition $x_{V+1} + \nu \geq \alpha$ implies that $p(u_0) \geq 1/2 - \nu$. So for the lower bound on $p(\beta)$, we have $p(\beta) = p(u_0) \geq 1/2 - \nu \geq \tau$. 

We have either $\beta = \betamin$ or $\beta = \betamax - (q+1)$.  In the former case, we do not need to show the upper bound on $p(\beta)$.  In the latter case,  we have
$$
p(\beta) = p(\betamax) \cdot \frac{Z(\betamax)}{Z(\beta)} \cdot \frac{ \sum_{x \in [\chi, n]} c_x e^{\beta x}}{ \sum_{x \in [\chi, n]} c_x e^{\betamax x}}  \leq 1 \cdot \frac{ Z(\betamax)}{Z(\betamin)} \cdot e^{-(\betamax - \beta)}
$$
where the last inequality holds since $c_x = 0$ for $x \in (0,1)$ and $\chi > 0$.  Since $\beta = \betamax - (q+1)$, we conclude that $p(\beta) \leq \frac{ Z(\betamax)}{Z(\betamin)} \cdot e^{-(q+1)} \leq e^{-1} \leq 1/2$.

\item Suppose that $V = N$. Since $\beta = \betamax$, we only need to show the upper bound on $p(\beta)$. Here, the condition $x_V - \nu \leq \alpha$ implies $p(\beta) = p(u_V) \leq 1/2 + \nu \leq 1 - \tau$.
\end{itemize}

This concludes the proof of Theorem~\ref{th:subroutines2x}. We next describe the accelerated algorithm given access to a data structure for $\PratioAll{\eps}$ where $\eps = \frac{1/2 - \tau}{30}$. Here,  the search space is reduced to just $N = O(q)$.

~\vspace{-5pt} \\ \noindent \begin{minipage}{\textwidth}
{
\vspace{1pt}
\noindent \hspace{8.5pt}{\bf Procedure ${\tt BalancePreprocessed}(\chi, \tau, \mathcal D,\gamma)$ } 
\vspace{-17pt}

\setlength{\interspacetitleruled}{7pt}
\begin{algorithm}[H]
\DontPrintSemicolon
\textbf{if $\log \hat Q(\betamax \mid \mathcal D) > q + \eps$ then terminate} with ERROR. \\
set parameters $N = \Bigl  \lfloor \frac{\log \hat Q(\betamax \mid \mathcal D)}{10 \eps} \Bigr \rfloor$, and set $u_0 = \betamin, u_N = \betamax$ \\
\textbf{for $j = 1, \dots, N-1$} let $u_j \in [\betamin, \betamax]$ be an arbitrary value with $|\log \hat Q(u_j \mid \mathcal D) - 10 \eps j| \leq \eps$ \ \ \ \ \ \ \ \ \ \ \ \ \ \  \ \ \ \ \ \ \ \ \ \ \ \ \ \ \ \ \ \ \  \ \ \ \ \ \ \  \ \ \ \ \  \ (if no such value exists then \textbf{terminate} with ERROR).  \\
Apply Theorem~\ref{kk-thm} for values $x_j = p(u_j)$ with $\alpha = \tfrac{1}{2}, \nu = (\tfrac{1}{2} - \tau)/2$; let $V$ be the return value. \\
\textbf{if $-1 <  V < N$ then return} $\beta = u_{V+1}$  \\
\textbf{else if $V = -1$ then return} $\beta = \betamin$ \\
\textbf{else if $V = N$ \ then return} $\beta = \betamax$
\end{algorithm}
}
\end{minipage}

Due to the check at Line 1, the algorithm has complexity $O( \log \frac{q}{\gamma} )$ as desired. For correctness, we suppose that $\mathcal D$ is $\eps$-close. So the check at Line 1 does not terminate with error. Also, for each $j = 1, \dots, N - 1$ the value $u$ with $Q(u) = e^{10 \eps j}$ would work at Line 3; hence this check does not fail either. Furthermore, for every $j = 0, \dots, N - 1$ we have $| \log Q(u_j) - 10 \eps j | \leq 2 \eps$ and similarly $\log Q(u_N) - \log Q(u_{N-1})  \in [8 \eps, 22 \eps]$. So, for all $j = 0, \dots, N$, we have
\begin{equation}
\label{zuveqn}
6 \eps \leq z(u_j, u_{j+1})  \leq 22 \eps
\end{equation}
So $0 \leq x_0 \leq \dots \leq x_N \leq 1$ as required. To finish the proof, we claim that if  $V$ satisfies  $x_V - \nu \leq \alpha \leq x_{V+1} + \nu$, which occurs with probability at least $1-\gamma$, then $\beta \in {\tt BalancingVals}(\chi, \tau)$. 

\begin{itemize}
\item Suppose that $-1 < V < N$. The inequality $x_{V+1} + \nu \geq \alpha$ implies that $p(\beta) = p(u_{V+1}) \geq \alpha - \nu \geq \tau$. The inequality $x_V - \nu \leq \alpha$ implies that $p(u_V) \leq \alpha + \nu$. From Eq.~(\ref{zuveqn}), we can calculate
$$
1-p(\beta) = (1-p(u_V)) \cdot \frac{Z(u_V)}{Z(u_{V+1})} \cdot \frac{ \sum_{x < \chi} c_x e^{u_{V+1} x}}{\sum_{x < \chi} c_x e^{u_{V} x}} \geq (1 - \alpha - \nu) \cdot e^{-22 \eps} \cdot 1
$$
and straightforward calculations show that this is at least $\tau$.

\item Suppose that $V = -1$. The condition $x_{V+1} + \nu \geq \alpha$ implies that $p(\beta) = p(u_0) \geq 1/2 - \nu \geq \tau$. Since $\beta = \betamin$, we do not need to show the upper bound on $p(\beta)$.  

\item Suppose that $V = N$. The condition $x_V - \nu \leq \alpha$ implies that $p(\beta) = p(u_N) \leq 1/2 + \nu \leq 1 - \tau$. Since $\beta = \betamax$, we do not need to show the lower bound on $p(\beta)$.
\end{itemize}

This concludes the proof of Theorem~\ref{th:subroutines2x}.
\section{Some useful inequalities}
\label{sec:FindSegment}
\begin{lemma}\label{lemma:incrseq}
Let $b_1,\ldots,b_{m+1}$ be a strictly increasing positive sequence. Then we have $\sum_{i=1}^m \frac{b_i}{b_{i+1} - b_i} \geq \frac{m^2}{\log (b_{m+1}/b_1)} - m/2$.
\end{lemma}
\begin{proof}
Define $v = \log(b_{m+1}/b_1)$.  For $i = 1, \dots, m$ let $a_i = \log(b_{i+1}/b_i)$. Then $\sum_{i=1}^m  \frac{b_i}{b_{i+1} - b_i} =\sum_{i=1}^m  \frac{1}{e^{a_i} - 1}$, and by telescoping sums we have $\sum_i a_i = \log(b_2/b_1) + \log(b_3/b_2) + \dots + \log(b_{m+1}/b_m) = v$.

Jensen's inequality applied to the concave-up function $x \mapsto \frac{1}{e^x - 1}$ gives: $$
\sum_{i=1}^m \frac{1}{e^{a_i} - 1} \geq m \cdot \frac{1}{e^{\tfrac{1}{m} \sum_{i=1}^m a_i} - 1} = \frac{m}{e^{v/m} - 1}.
$$

By straightforward calculus, it can be verified that $\frac{1}{e^y - 1} \geq \frac{1}{y} - \frac{1}{2}$ for all $y > 0$.  Applied to $y = v/m$, we conclude that $\sum_{i=1}^m \frac{b_i}{b_{i+1} - b_i} \geq m \bigl( m/v - 1/2 \bigr)  = m^2/v - m/2$ as claimed.
\end{proof}

\begin{lemma}\label{lemma:logconcave:harmonic}
Let $b_1,\ldots,b_m$ be a log-concave sequence with $b_k\le \frac 1k$ for each $k$.
Then $b_1+\ldots+b_m <  e$.
\end{lemma}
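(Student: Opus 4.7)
The plan is to dominate $(a_k)$ by a two-sided geometric envelope centered at a carefully chosen index, and then sum the resulting geometric series.

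First I would pick an index $k^* \in \{1, \ldots, m\}$ that maximizes $k a_k$, and set $M = k^* a_{k^*}$, so that $M \le 1$ by hypothesis. The maximality of $k a_k$ at $k^*$ gives the local ratio bounds $a_{k^*+1}/a_{k^*} \le k^*/(k^*+1)$ and (when $k^* \ge 2$) $a_{k^*-1}/a_{k^*} \le k^*/(k^*-1)$. Since log-concavity is equivalent to the ratios $a_{k+1}/a_k$ being non-increasing in $k$, these local bounds propagate outward to give
\[
a_k \le a_{k^*}\bigl(\tfrac{k^*}{k^*+1}\bigr)^{k-k^*} \text{ for } k \ge k^*, \qquad a_k \le a_{k^*}\bigl(\tfrac{k^*}{k^*-1}\bigr)^{k^*-k} \text{ for } k \le k^*.
\]

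Summing these tails (the right-hand side as an infinite geometric series summing to $M$, and the left-hand side as a finite geometric series of length $k^* - 1$), and writing $n = k^* - 1$, yields the clean bound
\[
\sum_{k=1}^{m} a_k \le M\bigl[(1+1/n)^n + 1/(n+1)\bigr] \qquad (k^* \ge 2),
\]
while the degenerate case $k^* = 1$ gives $\sum_{k=1}^{m} a_k \le 2M \le 2 < e$ directly. Since $M \le 1$, the lemma reduces to the purely analytic inequality $(1+1/n)^n + 1/(n+1) < e$ for every integer $n \ge 1$.

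The main obstacle is this final inequality, which is delicate since $(1+1/n)^n \nearrow e$ at rate $\Theta(1/n)$, comparable to the correction term $1/(n+1)$. To prove it, I would expand $(1+1/n)^n = \sum_{k=0}^n c_k(n)/k!$ with $c_k(n) = \prod_{j=0}^{k-1}(1 - j/n)$, giving $e - (1+1/n)^n = \sum_{k \ge 2} g_k(n)$ with each $g_k(n) \ge 0$. Retaining only the $k=2$ and $k=3$ contributions---which evaluate to $1/(2n)$ and $1/(2n) - 1/(3n^2)$ respectively for any $n \ge 1$ (regardless of whether $k=3$ lies in the main sum or in the tail $\sum_{k > n} 1/k!$)---already yields $e - (1+1/n)^n \ge 1/n - 1/(3n^2)$. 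The desired inequality then reduces to $1/n - 1/(n+1) = 1/(n(n+1)) \ge 1/(3n^2)$, i.e., $2n \ge 1$, which is immediate; the strict inequality comes from the strictly positive $k \ge 4$ contributions.
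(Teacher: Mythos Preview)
Your proof is correct and follows essentially the same strategy as the paper: choose $k^*$ maximizing $k a_k$, dominate $(a_k)$ by a two-sided geometric envelope using log-concavity, and reduce to the analytic inequality $(1+1/n)^n + 1/(n+1) < e$ (which is exactly the paper's expression $(1-1/k)^{1-k} + 1/k$ with $n = k-1$).

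The only real difference is how the final inequality is established. The paper argues via calculus: setting $g(x) = (1-x)^{1-1/x} + x$, it notes $\lim_{x\to 0^+} g(x) = e$, $\lim_{x\to 0^+} g'(x) = 1 - e/2 < 0$, and $g''(x) < 0$ on $(0,\tfrac12)$, whence $g(1/k) < e$. Your route---comparing the binomial expansion of $(1+1/n)^n$ termwise with $e = \sum_k 1/k!$ and keeping only the $k=2,3$ deficits---is more elementary and makes the margin $e - (1+1/n)^n \ge 1/n - 1/(3n^2) > 1/(n+1)$ completely explicit, avoiding any appeal to second derivatives. (Incidentally, the strict inequality already follows from $2n>1$; the $k\ge 4$ terms are not needed for strictness.) Your tracking of $M = k^* a_{k^*} \le 1$ is a cosmetic refinement the paper drops by taking $y_{k^*} = 1/k^*$ directly.
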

\begin{proof}
Let $k \in \{1, \dots, m \}$ be chosen to maximize the value $k b_k$ (breaking ties arbitrarily). If $b_k = 0$, then due to maximality of $k$ we have $b_1 = \dots = b_m = 0$ and the result holds. Suppose that $1 < k < m$; the cases with $k = 1$ and $k = m$ are very similar. Define the sequence $y_1, \dots, y_m$ by:
$$
y_i=\begin{cases}
\frac{1}{k} (\frac{k-1}{k})^{i-k} & \mbox{if }i< k \\
\frac{1}{k} (\frac{k}{k+1})^{i-k} & \mbox{if }i\ge k
\end{cases}
$$
By maximality of $k$, we have $\frac{y_{k-1}}{y_k}=\frac {k}{k-1} \ge \frac{b_{k-1}}{b_k}$
and $\frac{y_{k+1}}{y_k}=\frac k{k+1}\ge \frac{b_{k+1}}{b_k}$ and $y_k = \frac{1}{k} \geq b_k$. Since $\log y_i$ is linear for $i \leq k$ and for $i \geq k$, log-concavity of sequence $b$ gives $b_i\le y_i$ for $i = 1, \dots, m$. So
\begin{align*}
\sum_{i=1}^m b_i
&\leq  \sum_{i=1}^\infty y_i = 
 \sum_{i=1}^{k-1} \frac{1}{k} \Bigl( \frac{k-1}{k} \Bigr)^{i-k} + \sum_{i=k}^\infty \frac{1}{k} \Bigl( \frac{k}{k+1} \Bigr)^{i-k}  =  \Bigl( 1 - \frac{1}{k} \Bigr)^{1 - k}+ \frac{1}{k}
\end{align*}

Now consider the function $g(x) = (1 - x)^{1 - 1/x} + x$. We have shown $\sum_i b_i \leq g(1/k)$. To finish the proof, it suffices to show $g(x) < e$ for all $x \in (0,\tfrac12)$. This follows from some standard analysis showing that $\lim_{x \rightarrow 0} g(x) = e$ and $\lim_{x \rightarrow 0} g'(x) = 1 - e/2 < 0$ and $g''(x) < 0$ in the interval $(0, \tfrac{1}{2})$. 
\end{proof}

\section*{Acknowledgments}
We thank Heng Guo for helpful explanations of algorithms for sampling connected
subgraphs and matchings, Maksym Serbyn for bringing to our attention the Wang-Landau algorithm and its use in physics.
The author Vladimir Kolmogorov was supported by the European Research Council under the European Unions Seventh Framework Programme (FP7/2007-2013)/ERC grant agreement no 616160.

\small
\bibliographystyle{plain}
\bibliography{sampling}

\end{document}